\newtheorem{theorem}{Theorem}
\newtheorem{lemma}{Lemma}
\newtheorem{proposition}{Proposition}
\theoremstyle{definition}
\newtheorem{definition}{Definition}
\newtheorem{remark}{Remark}
\newtheorem*{question}{Central Question}
\newtheorem{example}{Example}[]
\newtheorem{corollary}{Corollary}
\def\firstcircle{(150:1.75cm) circle (2.5cm)}
\def\secondcircle{(30:1.75cm) circle (2.5cm)}
\def\thirdcircle{(270:1.75cm) circle (2.5cm)}
\def\leftcircle{(180:1.5cm) circle (2.5cm)}
\def\rightcircle{(0:1.5cm) circle (2.5cm)}
\newcommand{\poly}{\mathrm{poly}}
\newcommand{\prob}{\mathrm{Prob}}
\newcommand{\hdist}{\mathrm{HammingDist}}
\newcommand{\eqp}{=^{+}}
\newcommand{\lep}{\mathbin{\le^{+}}}
\newcommand{\gep}{\mathbin{\ge^{+}}}
\begin{document}

\title{Communication Complexity of the Secret Key Agreement  in Algorithmic Information Theory}

\author{Emirhan G\"urp\i nar and Andrei Romashchenko}

\maketitle

\begin{abstract}
It is known that the mutual information, in the sense of Kolmogorov complexity, of any pair of strings $x$ and $y$ is equal to the length of the longest shared secret key that two parties can establish via a probabilistic protocol with interaction on a public channel, assuming that the parties hold as their inputs $x$ and  $y$ respectively. We determine the worst-case communication complexity of this problem  for the setting where the parties can use private sources of random bits.

We show that for some $x$, $y$ the communication complexity of the secret key agreement does not decrease even if the parties have to agree on a secret key the size of which is much smaller than the mutual information between $x$ and $y$. On the other hand, we discuss a natural class of $x$, $y$ such that the communication complexity of the protocol declines gradually with the size of the derived secret key. 

The proof of the main result uses the spectral properties of appropriate graphs and the expander mixing lemma, as well as information-theoretic techniques,
including constraint information inequalities and Muchnik's conditional descriptions.

\bigskip

A preliminary version of this paper was published in the proceedings of MFCS~2020. In the present version we give full 
proofs of all theorems and get rid of the assumption that the number of random bits used in the communication protocols 
is polynomial.
\end{abstract}

\maketitle

\section{Introduction}
			
In this paper we study communication protocols that help two remote parties (Alice and Bob) to establish a common secret key, while the communication is done via a public channel. In most practical situations, this task is achieved with the Diffie--Hellman scheme, \cite{diffie-hellman}, or other computationally secure protocols based on the assumptions that the eavesdropper has only limited computational resources and that some specific problem (e.g., the computing of the discrete logarithm) is computationally hard. We, in contrast, address the information-theoretic version of this problem and assume no computational restriction on the power of the eavesdropper. In this setting, the problem of the common secret key agreement can be resolved only if Alice and Bob possess since the very beginning some correlated data. In this setting, the challenge is to extract from the correlated data available to Alice and Bob their mutual information and materialize it as a common secret key. 
Besides the obvious theoretical interest, this setting is relevant for applications connected with quantum cryptography (see, e.g., \cite{devetak2005distillation,horodecki2009quantum}) or biometrics (see the survey \cite{ignatenko2012biometric}).

The problem of the secret key agreement was extensively studied in the classical information theory in the formalism of Shannon's entropy (this research direction dates back to the seminal papers \cite{Alswede-Csiszar-1993,Maurer-1993}). In our paper, we use the less common framework of algorithmic information theory based on Kolmogorov complexity. This approach seems to be an adequate language to discuss cryptographic security of an \emph{individual} key (while Shannon's approach helps to analyze properties of a probability distribution as a whole), see \cite{antunes2007cryptographic}.
Besides, Kolmogorov complexity provides a suitable framework for the \emph{one shot paradigm}, when we cannot assume that the initial data were obtained from an ergodic source of information or, moreover, there is possibly no clearly defined probability distribution on the data sources.

Thus, in this paper we deal with \emph{Kolmogorov complexity} and \emph{mutual information}, which are the central notions of algorithmic information theory. 
Kolmogorov complexity $C(x)$ of a string $x$ is the length of the shortest program that prints $x$. Similarly, Kolmogorov complexity $C(x|y)$ of a string $x$ given $y$ is the length of the shortest program that prints $x$ when $y$ is given as the input. 
Let us consider two strings $x$ and $y$. The mutual information $I(x:y)$ can be defined by the formula 
$I(x:y)=C(x)+C(y)-C(x,y)$. 
Intuitively, this quantity is the information shared by $x$ and $y$. The mutual information between $x$ and $y$ is equal (up to logarithmic additive terms) to the difference between the absolute and the conditional Kolmogorov complexity:
\[
I(x:y)\approx C(x)-C(x|y)\approx C(y)-C(y|x) 
\]
(see below the discussion of the Kolmogorov--Levin theorem). The ``physical'' meaning of the mutual information is elusive.
In general, the mutual information does not correspond to any common substring of $x$ and $y$, and we cannot materialize it as one object of complexity $I(x:y)$ that can be easily extracted from $x$ and separately from $y$. However, this quantity has a sort of \emph{operational interpretation}. The mutual information between $x$ and $y$ is essentially equal to the size of a longest shared secret key that two parties, one having $x$ and the other one having $y$, and both parties also possessing the complexity profile of the two strings can establish via a probabilistic protocol:	
\begin{theorem}[sketchy version; see \cite{RZ1} for a more precise statement]\label{th:rz}
(a) There is a secret key agreement protocol that, for every $n$-bit strings $x$ and $y$, allows Alice and Bob to compute with high probability a shared secret key $z$ of length equal to the mutual information of $x$ and $y$ (up to an $O (\log n)$ additive term).

(b) No protocol can produce a longer shared secret key (up to an $O (\log n)$ additive term).
\end{theorem}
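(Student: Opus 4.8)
The plan is to prove the two halves separately: part~(a) by exhibiting a concrete randomized protocol, and part~(b) by an information-theoretic impossibility argument that adapts the classical Ahlswede--Csisz\'ar / Maurer bound on the secret-key rate to the language of Kolmogorov complexity.

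\emph{Part (a).} Assume Alice and Bob both know, from the shared complexity profile, the quantities $m := C(x\mid y)$ and $k := I(x:y)$ up to $O(\log n)$. Alice draws from her private randomness two hash functions from a universal family, $h_1$ mapping $n$-bit strings to $m + O(\log n)$ bits and $h_2$ mapping $n$-bit strings to $k - O(\log n)$ bits, and sends $(h_1, h_1(x), h_2)$ to Bob. For the \emph{reconciliation} step, a counting argument in the spirit of Muchnik's conditional-description theorem shows that with high probability $x$ is the only string $x'$ with $C(x'\mid y)\le m+O(\log n)$ and $h_1(x')=h_1(x)$, so Bob, enumerating candidates from $y$ and the profile, recovers $x$; both parties then output $z := h_2(x)$. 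Two facts remain. First, since $h_1$ is independent of $x$ we have $C(x\mid h_1)=C(x)-O(\log n)$ with high probability, and revealing the $m+O(\log n)$ bits $h_1(x)$ can drop this by at most $m+O(\log n)$; using $C(x)=C(x\mid y)+I(x:y)+O(\log n)$ (a consequence of the Kolmogorov--Levin theorem), we get $C(x\mid t)\ge k-O(\log n)$ for the transcript $t=(h_1,h_1(x),h_2)$. Second, treating $x$ as a source of ``Kolmogorov min-entropy'' at least $k-O(\log n)$ given $t$, a leftover-hash / Kolmogorov-extractor argument shows that for a random $h_2$ the value $z=h_2(x)$ satisfies, with high probability, $C(z\mid t)\ge |z|-O(\log n)$; hence $z$ is a genuine shared secret key of length $k-O(\log n)=I(x:y)-O(\log n)$.

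\emph{Part (b).} Fix any protocol with transcript $t$ and private random strings $r_A,r_B$, and suppose that with high probability over $(r_A,r_B)$ the parties output the same $\ell$-bit string $z$ with $C(z\mid t)\ge \ell-O(\log n)$. The argument has three steps. (i) \emph{From operational guarantees to complexity inequalities}: correctness makes $z$ computable from Alice's view $(x,r_A,t)$ and from Bob's view $(y,r_B,t)$, and since the random strings are, with overwhelming probability, incompressible and independent of $x,y$ and of each other, one shows that the key cannot encode the private randomness, so up to $O(\log n)$ terms $z$ behaves as if it were a function of $(x,t)$ and of $(y,t)$ alone. (ii) \emph{The key is bounded by the conditional mutual information}: combining this with a Markov-type relation (given Bob's data and the transcript, $z$ is essentially determined and carries no extra information about $x$) yields $\ell \le C(z\mid t)+O(\log n)\le I(x:y\mid t)+O(\log n)$, where $I(x:y\mid t):=C(x\mid t)+C(y\mid t)-C(x,y\mid t)$. (iii) \emph{Interaction cannot create correlation}: a round-by-round induction gives $I(x:y\mid t)\le I(x:y)+O(\log n)$, the crucial point being that each message depends on only one party's input together with that party's private randomness and the public history, so conditioning on it does not make $x$ and $y$ more correlated (conditioning on a joint function of $x$ and $y$ could, but no single message is such a function). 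Chaining (i)--(iii) gives $\ell\le I(x:y)+O(\log n)$.

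\emph{Main obstacle.} For part~(a), the delicate point is the extraction step: one must show that a random universal hash is a good ``Kolmogorov extractor'', i.e.\ that $h_2(x)$ is incompressible given the \emph{entire} transcript rather than just given $h_2$, and arrange that a single random choice of $(h_1,h_2)$ makes reconciliation succeed \emph{and} yields an incompressible key, with exceptional probability small enough for a union bound. For part~(b), the obstacle is that the protocol is arbitrary --- in particular the private random strings may be arbitrarily long --- yet they sit inside every conditional-complexity term; the argument that the key cannot encode them (step~(i)) and the inductive proof that public interaction does not increase mutual information (step~(iii)) must be executed so that the accumulated slack stays within $O(\log n)$. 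I expect step~(iii), the Kolmogorov-complexity formalization of ``interaction over a public channel does not increase $I(x:y)$'', together with the randomness-elimination in step~(i), to be the parts demanding the most care.
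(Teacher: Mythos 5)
First, note that the paper does not prove Theorem~\ref{th:rz} at all: it is stated as a ``sketchy version'' of a result imported from \cite{RZ1}, and only the ingredients needed later (Lemma~\ref{l:triple-info}, Proposition~\ref{p:muchnik}, Proposition~\ref{p:newman}) are restated or proved here. Your architecture --- information reconciliation by hashing down to $C(x|y)$ plus privacy amplification for part~(a), and a ``$C(z|t)$ is bounded by $I(x:y|t)$, which is bounded by $I(x:y)$'' argument for part~(b) --- is indeed the architecture of the proof in \cite{RZ1}, so the plan is sound at the level of structure.

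There are, however, genuine gaps at exactly the points you flag as obstacles, and flagging them is not the same as closing them. (1)~Your step~(b)(iii), the round-by-round induction showing $I(x:y|t) \le I(x:y) + O(\log n)$, does not work as described: each application of the Kolmogorov--Levin theorem costs $\Theta(\log n)$, so after $r$ rounds the accumulated slack is $\Theta(r\log n)$, which for protocols with $\omega(1)$ rounds exceeds the $O(\log n)$ budget. This is precisely the failure mode the paper warns about (see the remark following Lemma~\ref{l:simplifying-t}); the correct statement is Lemma~\ref{l:triple-info}, $I(x:y:t)\gep 0$, whose proof in \cite{RZ1} treats the transcript in one piece via the rectangle property of deterministic protocols (the set of input pairs producing a given transcript is a combinatorial rectangle) rather than message by message. (2)~You allow the private random strings to be arbitrarily long, but every logarithmic error term in your complexity manipulations is really $O(\log(n+|r_A|+|r_B|))$, so with long randomness the conclusion degrades below ``up to $O(\log n)$.'' The mechanism that repairs this is a Newman-type reduction of the randomness to $O(n)$ or $\poly(n)$ bits for protocols with private coins --- Proposition~\ref{p:newman} of this paper --- and your proposal contains no substitute for it. (3)~In part~(a), the claim that a random universal hash $h_2$ yields $C(h_2(x)\mid t)\ge |z|-O(\log n)$ is the actual technical content of the positive direction: high conditional complexity of $x$ given $t$ does not by itself make a hash of $x$ incompressible given $t$ (which contains $h_1$, $h_1(x)$ and $h_2$ itself), and one needs the Kolmogorov-extractor machinery of \cite{RZ1}, or a careful counting argument over hash choices combined with a union bound against all short programs, to certify secrecy for the particular hash that was transmitted. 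As written, the proposal asserts this step rather than proving it.
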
	
In this paper we study the communication complexity of the protocols that appear in this theorem.

The statement of Theorem~\ref{th:rz} in the form given above is vague and sketchy. Before we proceed with our results, we must clarify the setting of this theorem: we should explain the rules of the game between Alice, Bob, and the eavesdropper, and specify the notion of secrecy of the key in this context.

\smallskip
\noindent
\textbf{Clarification 1: secrecy.}
In this theorem we say that the obtained key $z$ is ``secret'' in the sense that it looks random. Technically, it must be (almost) incompressible, even from the point of view of the eavesdropper who does not know the inputs $x$ and $y$ but intercepts the communication between Alice and Bob. More formally, if $t$ denotes the transcript of the communication, we require that $C(z|t)\ge|z|-O(1)$. We will need to make this requirement even slightly stronger, see below.

If an $n$-bit  string  $z$ is incompressible in terms of Kolmogorov complexity,  i.e., if 
\[
C(z\mid \text{all publicly available information})\ge n - O(1),
\] 
this $z$ seems to be a suitable secret key for many standard cryptographic applications. Indeed, if any deterministic  algorithm $\mathcal A$ produces a list of strings, then a string $z$ whose complexity  is close to $n$ can appear in this list only at a position $\ge 2^{n-O(1)}$  
($z$ can be found given $\mathcal A$ and the index of $z$ in the list;  such a description should not  be shorter than $n - O(1)$, so the binary expansion of the index must be long enough).
 This means that any attempt to brute force this secret key will take $\Omega(2^n)$ steps before we arrive to the value $z$. Also,  it can be shown that for any computable distribution $\mathcal P$ 
\[
\prob[\text{a randomly sampled value is equal to } z] \le 2^{-n + O(\log n)}
\]
(see the relation between Kolmogorov complexity and the a priori probability, e.g., in \cite[Section~4.5]{shen-vereshchagin}),
which means that any attacker trying to  randomly guess the value of the secret key would need on average $\ge 2^{n-O(\log n)}$ false attempts before  $z$  is revealed.  
In other words, the standard properties of Kolmogorov complexity guarantee that any naive attempt to crack such a key (deterministically or with randomization) would take exponential time, so this
$z$ could be, for example,  a good code for a  safe with combination lock. 
A more systematic discussion of the foundations of cryptography in terms of Kolmogorov complexity can be found in  \cite{antunes2007cryptographic}. 

Incompressible strings are abundant. For every string $t$ and for every $n$, for a fraction $\ge 1-1/2^c$ of strings $z\in \{0,1\}^n$ we have
$
C(z|t) \ge n - c.
$
Thus, we can produce an incompressible $z$ by tossing an unbiased coin. However, Kolmogorov complexity is incomputable, and one cannot provide a certificate proving that some specific $z$ has large Kolmogorov complexity.

\smallskip
\noindent
\textbf{Clarification 2: randomized protocols.}
In our communication model we assume that Alice and Bob may use additional randomness. Each of them can toss a fair coin and produce a sequence of random bits with a uniform distribution. The private random bits produced by Alice and Bob are accessible only to Alice and Bob respectively. (Of course, Alice and Bob are allowed to  send the produced random bits to each other, but then this information becomes visible to the eavesdropper.) 

We assume that for each pair of inputs the protocol may \emph{fail} with a small probability (the probability is taken over the choice of random bits of Alice and Bob). The failure means that the parties do not produce any output, or Alice and Bob 
produce mismatched keys, or the produced key $z$ does not satisfy the secrecy condition.

In an alternative setting, Alice and Bob use a common \emph{public} source of randomness (also accessible to the eavesdropper). 
Theorem~\ref{th:rz} from \cite{RZ1} is valid for both settings --- for private or public sources of randomness,  though in the model with a public source of random bits the construction of a protocol  is 
much simpler.

\smallskip
\noindent
\textbf{Clarification 3: minor auxiliary inputs.} We assume also that besides the main inputs $x$ and $y$ Alice and Bob both are given the \emph{complexity profile} of the input, i.e., the values $C(x)$, $C(y)$, and $I(x:y)$. Such a concession is unavoidable for the positive part of the theorem. Indeed, Kolmogorov complexity and mutual information are non-computable; so there is no computable protocol that finds a $z$ of size $I(x:y)$ unless the value of the mutual information is given to Alice and Bob as a promise. 
This supplementary information is rather small, it can be represented by only $O(\log n)$ bits. The theorem remains valid if we assume that this auxiliary data is known to the eavesdropper. So, formally speaking, the protocol should find a key $z$ such that 
\[
C(z\mid \text{communication transcript of the protocol and complexity profile of $(x,y)$})\ge|z|-O(1).
\]

There exists no algorithm computing Kolmogorov complexity of an arbitrary string, the mutual information $I(x:y)$ cannot be algorithmically computed neither even given an access to both strings $x$ and $y$. However, the requirement  to provide  the complexity profile of  $(x,y)$ is not prohibitively unrealistic. Indeed,
if a string $x$ is taken uniformly at random from an explicitly given set $S_X$, we can claim that  $C(x)$ is close to $\log |S_X|$ with an overwhelming probability. Similarly, if a pair $(x,y)$ is taken uniformly from an explicitly given set of pairs $S_{XY}$, and if 
we know the combinatorial parameters of this set (the cardinality of the entire set  $S_{XY}$, of its projections on each of the coordinates, the sizes of the sections of this set along both coordinates) we can estimate the typical value of $I(x:y)$. Thus, we can predict an expected complexity profile for a ``typical'' pair $(x,y)\in S_{XY}$.

Even if  $(x,y)$ is produced by a more complicated random process, in many cases the expected  complexity profile of this pair  can be computed given the probabilistic characteristics of the distribution from which it is sampled 
(below we discuss in more detail the connection of Shannon's characteristics of a distribution and Kolmogorov complexity of a typical sampled outcome).

The communication protocol in Theorem~\ref{th:rz}(a)  still applies when Alice and Bob are provided with not the exact complexity profile of the input but only with
 its approximation. The more precise approximation we provide, the larger  secret key  is produced in the protocol, see  \cite[Remark~5]{RZ1}.
 For example, if Alice and Bob are given that $C(x) \ge k$ and $C(x|y)<\ell$, they can produce a secret key $z$ of size $k - \ell - O(\log (|x|+|y|))$. So,  if $k$ and $\ell$ are $\delta$-approximations
 for the real values of  $C(x)$ and $C(x|y)$, then the protocol results in a secret key of size $I(x:y) - O(\delta) -  O(\log (|x|+|y|))$.
  
Thus, if we  know the physical nature of the process producing the initial data for Alice and Bob, although we cannot predict exactly the complexity profile of the input, we possibly can estimate approximately the expected parameters of this profile. Such an approximation is enough to apply the communication protocol from  \cite{RZ1}.

\smallskip

Now we can formulate the main question studied in this paper:
\begin{question}
What is the optimal communication complexity of the communication problem from Theorem~\ref{th:rz}?
That is, how many bits should Alice and Bob  send to each other to  agree on a common secret key? 
\end{question}
 A protocol proposed in \cite{RZ1} allows  to compute  for \emph{all} pairs of inputs a shared secret key of length equal to the mutual information between $x$ and $y$ with communication complexity 
\begin{equation}\label{eq:comm-compl}
\min \{ C(x|y), C(y|x)\} + O(\log n).
\end{equation}
This protocol is randomized  (it can be adapted to the model with private or public sources of random bits).
Alice and Bob may need to send to each other different number of bits for different pairs of input (even with the same mutual information). It was proven in \cite{RZ1} that in the worst case (i.e., for \emph{some} pairs of inputs $(x,y)$) the communication complexity \eqref{eq:comm-compl} is optimal for communication protocols using only  \emph{public randomness}. The natural question whether this bound remains optimal for protocols with \emph{private} sources of random bits remained open (see \emph{Open Question~1} in \cite{RZ1}). The main result of this paper is the positive answer to this question. More specifically, we provide explicit examples of pairs $(x,y)$ such that 
\begin{equation}\label{eq:complexity-profile}
\left\{
\begin{array}{rcl}
I(x:y) &= & 0.5n + O(\log n)\\
C(x|y) &= & 0.5n + O(\log n)\\
C(y|x) &= & 0.5n + O(\log n)\\
\end{array}
\right.
\end{equation}
and in every communication protocol satisfying Theorem~\ref{th:rz} (with private random bits) Alice and Bob must exchange approximately $0.5n$ bits of information. Moreover, the same communication complexity is required even if Alice and Bob want to agree on a secret key that is much smaller then the optimal value $I(x:y)$. Our bound applies to all communication protocols that establish a secret key of size, say, $\omega(\log n)$.
 
\begin{theorem}\label{th:main}
Let $\pi$ be a communication protocol such that given inputs  $x$ and $y$ satisfying~\eqref{eq:complexity-profile}  Alice and Bob  use  private random bits and compute with probability $>1/2$ a shared secret key $z$ of length $ \delta(n) = \omega(\log n)$.
Then for every $n$ there exists a pair of $n$-bit strings $(x,y)$ satisfying~\eqref{eq:complexity-profile} such that following this communication protocol with inputs $x$ and $y$,  Alice and Bob send to each other  messages with a total length of at least $0.5n - O(\log n)$ bits. In other words, the worst-case communication complexity of the protocol is at least $0.5n - O(\log n)$.
\end{theorem}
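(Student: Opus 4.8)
\emph{Strategy and construction.} The plan is to prove the bound for a concrete family of inputs coming from a finite plane, with the expander mixing lemma as the engine. Fix the field $\mathbb{F}_q$ with $q\approx 2^{n/2}$ and work in the plane over it, identifying lines and points with $n$-bit strings in a canonical way. Let $x$ be a random line and $y$ a random point on $x$. Since a line is named by two field elements and a point on it by one more, an incompressibility/counting argument gives $C(x)=C(y)=n+O(\log n)$ and $C(x,y)=1.5n+O(\log n)$, hence the profile~\eqref{eq:complexity-profile}. (This construction is forced upon us: for pairs whose mutual information is materializable, e.g.\ $x=(u,w)$, $y=(u,v)$ with $u,v,w$ independent, Alice and Bob agree on $u$ for free, so any communication lower bound must exploit that here the mutual information of $x$ and $y$ is \emph{not} materializable.) The structural fact I will use is that the point--line incidence graph is a bipartite spectral expander on $\Theta(q^2)$ vertices per side, $(q{+}1)$-regular, with all singular values other than $q{+}1$ equal to $\sqrt q$; so the expander mixing lemma estimates, for all sets $P$ of points and $L$ of lines, the number of incidences between them with error $\le\sqrt{q\,|P|\,|L|}$.

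\emph{Removing private randomness.} The first task is to make the privacy of the coins irrelevant. Fix a transcript $t$: for inputs consistent with $t$ the set of coin pairs producing $t$ is a rectangle in $(r_A,r_B)$-space, and if Alice's and Bob's outputs coincide with probability $>1/2$ then, outside an event of small probability, the output is the same string $z=z(x,t)=z(y,t)$ --- a deterministic function of one party's input together with the transcript. (As the protocol may read unboundedly many random bits, one first truncates each run to the bits actually consulted, keeping everything computable; this is the point where the present version strengthens the conference paper.) Averaging over the coins then leaves a deterministic protocol that, on a constant fraction of the $\Theta(2^{1.5n})$ good pairs, produces a common $z$ of length $\delta(n)$ with $C(z\mid t,\text{profile})\ge\delta(n)-O(1)$ and transcript length at most the worst-case communication $L$.

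\emph{The spectral heart.} A deterministic protocol with transcripts of length $\le L$ partitions the good pairs among $\le 2^L$ rectangles $S_t\times T_t$, intersected with the incidence relation. Since $z=z(x,t)=z(y,t)$, the key is constant on each connected component of the incidence graph restricted to $S_t\times T_t$. Take the rectangle carrying the most good pairs; by the expander mixing lemma, ``many incidences inside $S_t\times T_t$'' forces $|S_t|,|T_t|$ to be large and the restricted graph to be pseudorandom, hence to have one giant component. A good pair inside that component would have its key determined by $t$ alone, contradicting $C(z\mid t,\text{profile})\ge\delta(n)-O(1)=\omega(1)$; so every good pair of the rectangle sits in a ``small'' component. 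The aim is then to argue that being in a small component either again pins the key down from $t$, or --- using that the pair is genuinely incompressible --- makes the transcript determine one of $x,y$ from the other; since $C(x\mid y)=C(y\mid x)=0.5n$, the latter forces $L\ge 0.5n-O(\log n)$. Calibrating the constants, the $\sqrt q=2^{n/4}$ spectral gap against the balanced profile~\eqref{eq:complexity-profile}, is what yields the threshold $0.5n$ rather than a weaker value, uniformly over all $\delta(n)=\omega(\log n)$.

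\emph{The main obstacle.} The hard part is exactly this quantitative bookkeeping. Low-complexity (``bad'') incidences can clump into a large component of $S_t\times T_t$, so one cannot argue by edge counts alone: the expander mixing lemma has to be combined with the incompressibility of the good pairs to force the dichotomy above and to push the crossover all the way up to $0.5n$. One must also check that this works for \emph{every} key length $\delta(n)=\omega(\log n)$, not just $\delta(n)\approx I(x:y)$ --- i.e.\ that a transcript shorter than $0.5n-O(\log n)$ cannot even localise a sublinear fragment of the mutual information --- and that the reduction from private to effectively deterministic protocols survives an unbounded number of random bits.
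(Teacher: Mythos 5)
Your choice of input pair (incident point--line pairs over $\mathbb{F}_q$ with $q\approx 2^{n/2}$) and of the expander mixing lemma as the combinatorial engine matches the paper, but the argument you build on top of them is different from the paper's and, as you yourself flag, is not closed. The paper does not pass to a deterministic protocol at all: it keeps the coins as part of the inputs, setting $x'=\langle x,r_A\rangle$ and $y'=\langle y,r_B\rangle$, views $(x',y')$ as an edge of the tensor product $G^{Pl}_n\otimes K_{M,M}$ (after first shrinking the number of coins to $O(n)$ by a Newman-type sampling argument, Proposition~\ref{p:newman}), and then runs a purely information-theoretic argument: the transcript $t$ is replaced by a compressed fingerprint $t'=\langle t_x,t_y\rangle$ built from Muchnik's theorem on conditional descriptions together with $I(x':y':t)\gep 0$ (Lemma~\ref{l:triple-info}), so that $C(t')\lep C(t)$ and $I(x':y'\,|\,t',z)\eqp 0.5n-\delta(n)$; on the other hand the spectral Lemma~\ref{l:graph-spectral-gap}, applied with condition $w=\langle t',z\rangle$, forces $I(x':y'\,|\,w)\gep 0.5n$ whenever $C(x'|w)+C(y'|w)>1.5n+2m$, and comparing the two bounds yields $C(t)\gep 0.5n-2\delta(n)$.

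The genuine gap in your version is exactly the step you label ``the main obstacle.'' The key is constant only along edges of the graph of \emph{good} pairs consistent with $t$, not along all incidences inside the rectangle $S_t\times T_t$; so the mixing lemma's guarantee that the full restricted incidence graph has a giant component does not pin the key down, and your proposed dichotomy (``small component $\Rightarrow$ either the key is determined by $t$, or $t$ determines $x$ from $y$'') is asserted, not proved. Without it there is no route from ``at most $2^L$ rectangles'' to ``$L\ge 0.5n-O(\log n)$,'' and it is not clear the dichotomy even holds at the stated generality (uniformly over all $\delta(n)=\omega(\log n)$). Your derandomization step is also shakier than you suggest: averaging or fixing the coins interacts badly with the secrecy condition $C(z\mid t)\ge|z|-O(1)$, since particular coin values may themselves carry complexity that leaks into $z$ or $t$, and the rectangle property for private-coin protocols makes $z$ a function of $(x,r_A,t)$ rather than of $(x,t)$; this is precisely why the paper absorbs the coins into the inputs and works in the product graph instead of reducing to a deterministic protocol.
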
	
\begin{remark}
We assume that the computational protocol $\pi$ used by Alice and Bob is uniformly computable, i.e., the parties send messages and compute the final result by following rules that can be computed algorithmically given the length of the inputs. We require that the protocol can be applied to all pairs of inputs and all possible values of random bits. An outcome of the protocol can be meaningless if we apply it to pairs $(x,y)$ with too small mutual information or with an unlucky choice of random bits, but the protocol must always converge and produce some outcome. We may assume that the protocol is public (known to the eavesdropper). The constants hidden in the $O(\cdot)$ notation may depend on the protocol, as well as on the choice of the optimal description method in the definition of Kolmogorov complexity.

\label{uniform-protocols}
 An alternative approach might be as follows. We might assume that the protocol $\pi$ is not uniformly computable (but for each $n$ its description is available to Alice, Bob, and to the eavesdropper). Then substantially the same result can be proven  for Kolmogorov complexity relativized conditional on $\pi$. That is, we should define Kolmogorov complexity and mutual information in terms of programs that can access $\pi$ as an oracle, and the inputs $x$ and $y$ should satisfy a version of~\eqref{eq:complexity-profile} with the relativized Kolmogorov complexity.
 Our main result can be proven for this setting (literally the same argument applies). However, to simplify the notation, we focus on the setting with only computable communication protocols (the size of which does not depend on $n$).
 \end{remark}

Theorem~\ref{th:main} can be viewed as a special case of the general question of ``extractability'' of the mutual information studied in \cite{cmrsv}. We prove this theorem for two specific examples of pairs $(x,y)$. In the first example $x$ and $y$ are a \emph{line} and a \emph{point} incident with each other in a discrete affine plane. In the second example $x$ and $y$ are  points of the discrete plane with a fixed quasi-Euclidean distance between them. The proof of the main result consists in a combination of spectral and information-theoretic techniques. In fact, our argument applies to all pairs with similar spectral properties. Our main technical tools are the Expander Mixing Lemma (see Lemma~\ref{mixing-lemma}) and the lemma on non-negativity of the triple mutual information (see Lemma~\ref{l:triple-info}). We also use Muchnik's theorem on conditional descriptions with multiple conditions (see Proposition~\ref{p:muchnik}).

Let us mention one more technical detail specific for algorithmic information theory.
Many information-theoretical tools, including the Kolmogorov--Levin theorem, are inherently inaccurate: they involve an unavoidable logarithmic error term. This causes a difficulty if a communication protocol involves too many private random bits (then even a logarithmic term including the number of random bits can become overwhelming). To get around this obstacle, we must reduce the number of private random bits used in the protocol. Fortunately, using an argument similar to the classical Newman theorem, we can reduce the number of random bits involved in the protocol to a polynomial (or even a linear) function of the length of inputs (see Proposition~\ref{p:newman}).

The communication protocol proposed in \cite{RZ1} and Theorem~\ref{th:main}  imply together that we have the following threshold phenomenon. When the inputs given to Alice and Bob are  a line and a point (incident with each other in a discrete affine plane), then  the parties can agree on a secret key of size $I(x:y)$ with a communication complexity slightly \emph{above} $\min \{ C(x|y), C(y|x)\}$. But when a communication complexity is slightly \emph{below} this threshold, the optimal size of the secret key sinks immediately to $O(\log n)$.

We also show that the ``threshold phenomenon''  mentioned above is not universally true. There exist pairs $(x,y)$ with the same values of Kolmogorov complexities and the same mutual information as in the example above, but with a sharply different trade-off between the size of a secret key and the communication complexity needed to establish this key. In fact, for some $(x, y)$ the size of the optimal secret key decreases gradually with the communication complexity of the protocol. More specifically, we show that for some $x$ and $y$, Alice and Bob can agree on a secret key of any size $k$ (which can be chosen arbitrarily between $0$ and $I(x : y)$) via a protocol with a communication complexity of $\Theta(k)$. This is the case for $x$ and $y$ with an appropriate Hamming distance between them.

\smallskip

\emph{Secret key agreement in classical information theory.} The problem of \emph{secret key agreement} was initially proposed in the framework of classical information theory by Ahlswede and Csisz\'ar, \cite{Alswede-Csiszar-1993} and Maurer, \cite{Maurer-1993}. In these original papers the problem was studied for the case in which the input data is a pair of random variables $(X,Y)$ obtained by $n$ independent draws from a joint distribution (Alice can access $X$ and Bob can access $Y$). In this setting, the mutual information between $X$ and $Y$ and the secrecy of the key are measured in terms of Shannon entropy.
Ahlswede and Csisz\'ar in \cite{Alswede-Csiszar-1993} and Maurer in \cite{Maurer-1993} proved that the longest shared secret key that Alice and Bob can establish via a communication protocol  is equal to Shannon's mutual information between $X$ and $Y$.
This problem was extensively studied by many subsequent works in various restricted settings, see the survey \cite{sudan-survey}.
The  optimal communication complexity of this problem for the general setting remains unknown, though substantial progress has been made (see, e.g., \cite{Tyagi,Liu-Cuff-Verdu}).

There is a deep connection between the frameworks of classical information theory (based on Shannon entropy) and algorithmic information theory (based on Kolmogorov complexity),
see the detailed discussions in  \cite{li-vitanyi,shen-vereshchagin} and in  \cite{grunwald2004shannon}. It is know, for example, that for every computable distribution of probabilities 
on $\{0,1\}^*$ the average value of Kolmogorov complexity of a sampled string is close to Shannon entropy of the distribution, see \cite[Theorem~2.10]{grunwald2004shannon}.
In the simplest case, if $(X_i, Y_i)$ for $i=1,\ldots,n$ is a sequence of 
i.i.d. copies of  random pair  $(X,Y)$ distributed in some finite range, then with a high probability the pair of sampled values 
\[
\begin{array}{rcl}
x &\leftarrow& (X_1 \ldots  X_n),\\
y &\leftarrow&(Y_1 \ldots Y_n)
\end{array}
\]
has the profile of Kolmogorov complexities approximately proportional to the corresponding values of Shannon's entropy of $(X,Y)$,
\[
C(x) =  H(X) \cdot n + o(n), \ C(y) = H(Y) \cdot n + o(n), \ C(x,y) = H(X,Y)\cdot n + o(n),
\]
where $H(\cdot)$ denotes Shannon's entropy of a random variable. Using this observation one can transform a secret key agreement protocol in the sense of Theorem~\ref{th:rz}(a) in a secret key agreement 
for Shannon's settings (e.g., in the sense of \cite[Model S in Definition~2.1]{Alswede-Csiszar-1993} or  \cite[Definition~1]{Maurer-1993}).
We refer the reader to \cite{RZ1} for a more detailed discussion of parallels  between Shannon's and Kolmogorov's version of the problem of secret key agreement.
We believe that the correspondence between Shannon entropy and Kolmogorov complexity might help to use Theorem~\ref{th:rz} and Theorem~\ref{th:main}  
as a tool in the classical  information theory,  including (and especially) one-shot settings.

\smallskip

\emph{Differences between Shannon's and Kolmogorov's framework.}
Let us mention two important  distinctions between Shannon's and Kolmogorov's framework.
The first one  regards ergodicity of the input data. Most  results on secret key agreement in Shannon's framework are proven with the assumption that the input data  are obtained from a sequence of independent identically distributed random variables (or at least enjoy some  properties of ergodicity and stationarity). In the setting of Kolmogorov complexity we usually deal with inputs obtained in ``one shot''  without any assumption of ergodicity of the sources (see, in particular, Example~1 and Example~2 below).
Another distinction regards the definition of correctness of the protocol.
The usual paradigm in classical information theory is to require that the communication protocol works properly for \emph{most} randomly chosen inputs. In  Theorem~\ref{th:rz}  the protocol works properly with high probability for \emph{each} valid pair of input data (this approach  is more typical for the theory of communication complexity).

Though homologous statements in Shannon's and Kolmogorov's frameworks may look very similar, the proofs in these frameworks are usually pretty different.
Despite a close connection between Shannon entropy and Kolmogorov complexity,  many techniques from the classical information theory (designed for the framework of Shannon entropy) cannot be translated directly in the language of Kolmogorov complexity. 
Even manipulations with the chain rule for the mutual information and other standard information inequalities become more difficult in the framework of Kolmogorov complexity. The difficulty is that Kolmogorov's version of the chain rule is valid not absolutely but only up to an additive logarithmic term (see below in Preliminaries the comments on the Kolmogorov--Levin theorem). Such a minor error term can be harmless when it appears alone, but it causes difficulties when we need to sum up many copies of similar equalities or inequalities. (In particular, we deal with this difficulty in Section~4.2.)

\smallskip

\emph{Computational complexity of the protocol and potential practical realizations.}
 In this paragraph we discuss the relation of our settings to secret key agreement protocols that are, or eventually can be, deployed in practice. 
  As we argue above, the setting including non-computable pre-conditions (the complexity profile of the input data or its approximation given us as a promise) is not completely unrealistic. 
Indeed, in practice we may have partial knowledge of the distribution on the input data sets,
so that we might be able to estimate the expected value of Kolmogorov complexity profile for the inputs of the protocol 
(approximations of the numbers $C(x)$, $C(y)$, and $I(x:y)$). 

 The universal communication protocol constructed in \cite{RZ1} 
 is theoretically computable (once again, assuming that the necessary auxiliary data are given as a part of the input) 
 though it is not computationally efficient. 
This situation is frequently encountered in information theory.
For instance, the secret key agreement protocols constructed by Ahlswede-Csiszar  \cite{Alswede-Csiszar-1993} 
 and Maurer  \cite{Maurer-1993}   for Shannon's setting are also computationally non-efficient.
These results reveal the theoretically reachable limits; knowing such a limit, we can try to approach it with more efficient protocols (valid, perhaps,  in more restrictive settings).

We also observe that 
the connection between  Shannon entropy  and Kolmogorov complexity (see above) allows to 
transform a secret key agreement protocol in the setting of Theorem~\ref{th:rz}(a) in a secret key agreement 
for Shannon's settings (in the sense of  Model~S in \cite[Definition~2.1]{Alswede-Csiszar-1993}  or  \cite[Definition~1]{Maurer-1993}).
Thus, communication protocols valid for Kolmogorov's setting can be used  to provide a protocol with similar parameters  for the more classical Shannon's setting.

\smallskip
\emph{Plugging a secret key in a secure encryption scheme.}
It is typical for modern theoretical cryptography that fundamental concepts and protocols (cryptographic primitives) can be studied separately from each other.
Thus, in the literature on key agreement protocols,  and, more generally, on \emph{common randomness generation}, 
authors tend to focus on the problem of \emph{producing} the shared randomness and do not discuss potential \emph{usage} of this randomness.
In this paper we follow this tradition and ignore various applications of the secret keys resulted from such protocols. 
However, we keep in mind that in a typical application we need to achieve the secrecy not for a randomly sampled key \emph{per se} but for a message encrypted with such a key.
Of course, the conventional encryption schemes guarantee  that there is no leakage of information about the transmitted message assuming that  
the key used for encryption is kept in secret. 
This idea is pretty standard in the classical (probabilistic) settings. 
A similar framework of information-theoretic cryptography based on the notion of Kolmogorov complexity was proposed in \cite{antunes2007cryptographic}.  
In Section~\ref{s:antunes-theory} we briefly recall the basic definitions of security used in this approach.

\smallskip

\emph{Overview of the proof of the main result.}
Let us sketch the proof of Theorem~\ref{th:main}, the main result of this paper.
 For every large enough $n$ we provide  a hard pair of inputs $(x,y)$ for which the communication complexity must be large.
To this end we construct explicitly a bipartite graph  $G_n$ with $N= 2^n$ vertices in each part and with degree of each vertex $D = 2^{n/2}$. 
Most edges $(x, y)$ in such a graph  satisfy the property that  $C(x)\eqp C(y)\eqp n$ 
and $I(x : y) \eqp n/2$.   Theorem~\ref{th:rz} claims  that for these $x$ and $y$ Alice and Bob can produce a secret key of length  approximately  $I(x:y)$ by exchanging approximately $C(x|y)$ bits. 
We prove  that any protocol will have to exchange for these pairs of inputs $(x,y$) approximately the same number of bits, even to agree on  a much shorter secret key.

We fix an arbitrary communication protocol $\pi$, take a typical edge $(x,y)$  in $G_n$ and proceed as follows. 
Let $z=z(x,y)$ denote the secret key produced by $\pi$ for inputs $(x,y)$ and $t=t(x,y)$ denote the transcript (sequence of messages sent by Alice and Bob following the protocol).
Our aim is to show  that $t$ consists of at least $n/2 - O(\log n)$ bits.

On the one hand, we show that the mutual information  between $x$ and $y$ can only decrease if we add $t$ as a condition, i.e., 
$
I(x:y|t) \lep I(x:y),
$
and it must decrease further by at least $|z|$ bits if we add to the condition $z$. Thus, the gap between $I(x:y|t,z)$ and $I(x:y)$ must be at least $|z|$.  In this part of the  argument
we have to overcome the technical difficulty mentioned  above: we cannot use freely the chain rule if the number of rounds in the protocol is unbounded
(to handle this difficulty, we use the lemma from \cite{RZ1} saying that  the \emph{external information complexity} cannot be less than  the \emph{internal information complexity}, and then 
apply Muchnik's theorem on conditional descriptions to reduce a generic communication protocol to the case where the external and  the internal information complexity are equal
to each other).

On the other hand, we use the expander mixing lemma to show that the gap between $I(x:y|w)$ and $I(x:y)$ is negligibly small for every $w$ with a small enough  Kolmogorov complexity (if $C(w)$ is not greater than $0.5n$).  This leads to a contradiction  
if complexity of  $w = \langle t,z\rangle$ is smaller than $0.5n$. By reducing the key size $|z|$ to $\Theta(\log n)$ (which only makes the protocol weaker) we conclude that the contradiction can be avoided
only if $C(t)$ alone is greater than $0.5n - O(\log n)$. This means that the transcript must contain at least $0.5n - O(\log n)$ bits, which concludes the proof.
This part of the argument works if the graph $G_n$ has a large spectral gap  (a large difference between the first and the second eigenvalue). This condition is necessary to apply the expander mixing lemma.

In this sketch, we did not  take into account the random bits produced by Alice and Bob.
Adding private random bits to the overall picture complicates the technical details of the argument, but does not lead to any conceptual difficulties.

\smallskip

\emph{The rest of the paper is organized as follows.}
In Preliminaries (Section~2) we sketch the basic definitions and notations for Kolmogorov complexity and communication complexity. 
In Section~3 we translate information  theoretic properties of pairs $(x,y)$ in the language of graph theory and present three explicit examples of pairs $(x,y)$ satisfying \eqref{eq:complexity-profile}.
\begin{itemize} 
\item Example~1 involves finite geometry, $x$ and $y$ are incident points and lines on a finite plane; 
\item Example~2 uses a discrete version of the Euclidean distance, $x$ and $y$ are points on the discrete plane with a known quasi-Euclidean distance between them; 
\item Example~3 involves binary strings  $x$ and $y$ with  a fixed Hamming distance between them. 
\end{itemize}
The pairs $(x,y)$ from these examples have pretty much the same complexity profile, but the third example has significantly different spectral properties.
The proof of the main result sketched above applies only to Example~1 and Example~2 (for Example~3 not only the \emph{argument} fails but the \emph{statement} of the main result is false\footnote{%
The revealed difference between Examples~1~and~2 on the one hand and Example~3 on the other hand 
shows that the optimal communication between Alice and Bob is not determined by the complexity profile of the input and depends on subtler properties of $(x, y)$.}).

In Section~4 we use a spectral technique to analyze the combinatorial properties of graphs.
We combine spectral bounds with information-theoretic arguments and prove our main result (Theorem~\ref{th:main}) for the pairs $(x,y)$ from Example~1 and Example~2 mentioned above.

In Section~5 we show that the statement of Theorem~\ref{th:main} is not true for the pairs $(x,y)$ from our Example~3: for those $x$ and $y$ there is no ``threshold phenomenon'' mentioned above, and the size of the longest achievable secret key depends continuously on the communication complexity of the protocol, see Theorem~\ref{th:hamming1} and Theorem~\ref{th:hamming2}.

In Appendix we provide a self-contained proof of a version of Newman's theorem on the reduction of the number of random bits used in a communication protocol 
(we prove a variant of this theorem for protocols with independent sources of private randomness).%

\section{Preliminaries}

\paragraph{Kolmogorov Complexity.}
Given a Turing machine $M$ with two input tapes and one output tape, we say that $p$ is a program that prints a string $x$ conditional on $y$ (a description of $x$ conditional on $y$) if $M$ prints $x$ on the pair of inputs $p$, $y$. Here $M$ can be understood as an interpreter of some programming language that simulates a program $p$ on a given input $y$.
We denote the length of a binary string $p$ by $|p|$. The \emph{algorithmic complexity} of $x$ \emph{conditional on} $y$ relative to $M$ is defined as 
\[
C_M(x|y)=\min\{|p|:M(p,y)=x\}.
\]
It is known that there exists an \emph{optimal} Turing machine $U$ such that for every other Turing machine $M$ there is a number $c_M$ such that for all $x$ and $y$
\[
C_U(x|y)\le C_M(x|y)+c_M.
\]
Thus, if we ignore the additive constant $c_M$, the algorithmic complexity of $x$ relative to $U$ is minimal. In the rest we fix an optimal machine $U$, omit the subscript $U$ and denote
\[
C(x|y):=C_U(x|y).
\]
This value is called \emph{Kolmogorov complexity} of $x$ conditional on $y$. Kolmogorov complexity of a string $x$ is defined as the Kolmogorov complexity of $x$ conditional on the empty string $\Lambda$,
\[
C(x):=C(x|\Lambda).
\]
We fix an arbitrary computable bijection between binary strings and all finite tuples of binary strings (i.e., each tuple is encoded in a single string). Kolmogorov complexity of a tuple $\langle x_1,\ldots,x_k\rangle$ is defined as Kolmogorov complexity of the code of this tuple. For brevity we denote this complexity by $C(x_1,\ldots,x_k)$.

We use the conventional notation
\[
I(x:y) := C(x) +C(y) - C (x,y)
\]
and 
\[
I(x:y|z) := C(x|z) +C(y|z) - C (x,y|z).
\]
In this paper we  use systematically the Kolmogorov--Levin theorem, \cite{zvonkin-levin}, which claims that all $x,y$.
\[
| C(x|y) + C(y) - C(x,y) | = O(\log(|x|+|y|)).
\]
It follows, in particular, that
\[
I(x:y) = C(x) - C(x|y) + O(\log(|x|+|y|)) = C(y) - C(y|x) + O(\log(|x|+|y|)). 
\]

Since many natural equalities and inequalities for Kolmogorov complexity hold up to a logarithmic term, we abbreviate some  formulas by using  the notation
$
A\eqp B,\ A\lep B, \text{ and } A\gep B
$
for 
\[
|A-B| = O(\log n),\ A\le B + O(\log n), \text{ and } B\le A + O(\log n)
\]
respectively, where $n$ is clear from the context. Usually $n$ is the sum of the lengths of all strings involved in the inequality.
In particular, the Kolmogorov--Levin theorem can be rewritten as  $C(x,y) \eqp C(x|y) + C(y)$.

We also use the notation
\[
I(x:y:z) := C(x) + C(y) + C(z) - C(x,y) - C(x,z) - C(y,z) + C(x,y,z).
\]
Using the Kolmogorov--Levin theorem it is not hard  to show that
\[
 I(x:y:z)  \eqp  I(x:y) - I(x:y|z)
\eqp  I(x:z) - I(x:z|y) 
 \eqp  I(y:z) - I(y:z|x).
\]
These relations can be observed on a Venn-like diagram, as shown in Fig.~\ref{x-y-z}.
			\begin{figure}
				\centering
				\begin{tikzpicture}[scale=1.0]
				  \draw \firstcircle node[above left] {$C(x|y,z)$};
				  \draw \secondcircle node [above right] {$C(y|x,z)$};
				  \draw \thirdcircle node [below] {$C(z|x,y)$};
				  \node at (100:0.10)   {$I(x:y:z)$};
				  \node at (90:1.55) {$ I(x:y|z)$};
				  \node at (210:1.75) {$I(x:z|y)$};
				  \node at (330:1.75) {$I(y:z|x)$};
				  \node at (150:4.75) {\Huge $x$};
				  \node at (30:4.75) {\Huge $y$};
				  \node at (270:4.75) {\Huge $z$};
				\end{tikzpicture}
				\caption{Complexity profile for a triple $x,y,z$. On this diagram it is easy to observe several standard equations: 
				\newline
				$\bullet$ $C(x) = C(x|y,z) +  I(x:y|z) +  I(x:z|y) + I(x:y:z)$ 
				\newline
				\rule{3mm}{0mm}(the sum of all quantities inside the left circle representing $x$);
				\newline
				$\bullet$ $I(x:y) = I(x:y|z) + I(x:y:z)$ 
				\newline
				\rule{3mm}{0mm}(the sum of the quantities in the intersection of the left and the right circles
				\rule{3mm}{0mm}representing $x$ and $y$ respectively);
				\newline
				$\bullet$ $C(x,y) = C(x|y,z) + C(y|x,z) +  I(x:y|z) +  I(x:z|y)  +  I(y:z|x) + I(x:y:z)$ 
				\newline
				\rule{3mm}{0mm}(the sum of all quantities inside the union of the left and the right circles);
				\newline
				$\bullet$ $C(x|y) = C(x|y,z) +  I(x:z|y) $ 
				\newline
				\rule{3mm}{0mm}(the sum of the quantities inside the left circle but outside the right one);
				\newline
				and so on; all these equations are valid up to $O(\log(|x|+|y|+|z|))$.
				}\label{x-y-z}
			\end{figure}
			
In the usual jargon, $x$ is said to be (almost) \emph{incompressible} given $y$ if
\[
C(x|y) \gep |x| ,
\]
and $x$ and $y$ are said to be \emph{independent}, if $I(x:y)\eqp0$.

For a survey of basic properties of Kolmogorov complexity we refer the reader to the introductory chapters of \cite{li-vitanyi} and \cite{shen-vereshchagin}. 

\paragraph{Communication Complexity.} 
In what follows we use the conventional notion of a communication protocol for two parties (traditionally called Alice and Bob), see for detailed definitions \cite{kushilevitz}.
In a deterministic communication protocol the inputs of Alice and Bob are denoted $x$ and $y$ respectively. A deterministic communication protocol is said to be correct for inputs of length $n$ if for all $x,y\in\{0,1\}^n$, following this protocol Alice and Bob obtain a valid result $z=z(x,y)$. The sequence of messages sent by Alice and Bob to each other while following the steps of the protocol is called a \emph{transcript} of the communication.

In the setting of randomized protocols with \emph{private} sources of randomness, Alice can access $x$ and an additional string of bits $r_A$, and Bob can access $y$ and an additional string of bits $r_B$. A randomized communication protocol is said to be correct for inputs of length $n$ if for \emph{all} $x,y\in\{0,1\}^n$ and for \emph{most} $r_A$ and $r_B$, following this protocol Alice and Bob obtain a valid result $z=z(x,r_A,y,r_B)$.

In a secret key agreement protocol, \emph{correctness} of the result $z$ means that (i)~$z$ is of the required size and (ii)~it is almost incompressible even given the transcript of the communication. That is, if $t=t(x,r_A,y,r_B)$ denotes the transcript of the communication, then $C(z|t)$ must be close enough to $|z|$. Note that in this setting the transcript $t$ and the final result $z$ are not necessarily functions of the inputs $x,y$, they may depend also on the random bits used by Alice and Bob. For a more detailed discussion of this setting we refer the reader to \cite{RZ1}.

Communication complexity of a protocol is the maximal length of its transcript, i.e., $\max\limits_{x,r_A,y,r_B} |t(x,r_A,y,r_B)|$.

In this paper we deal with communication protocols that can be meaningfully applied only to pairs of inputs with a special property
(in particular, the mutual information between the inputs of Alice and Bob must be large enough). However, we assume that by definition a communication protocol  can be formally applied to any inputs  with any values of random bits. This means that with any input data
a protocol produces some communication transcript, and each party ends up with some outcome.

\label{par:uniform-protocol}
In communication complexity, it is typical to study a communication protocol as a scheme defined for inputs of one fixed length. 
In this paper, we prefer to deal with protocols that are defined for inputs of all lengths. We always assume that the communication protocols under consideration are \emph{uniformly computable}: there are algorithms for Alice and Bob that compute the messages to be sent by the parties, and produce at the end of the protocol the final result. Uniformity  means that the same algorithms can be used for inputs of all lengths.
This approach simplifies the formal definition of secrecy --- we may assume that the communication protocol 
(an algorithm that computes this protocol) is a finite object, and this object is known to Alice, Bob, and to the eavesdropper
(see also Remark~\ref{uniform-protocols} on p.~\pageref{uniform-protocols}). 
We may ignore the description of a protocol in the definition of secrecy, since the information on a finite object affects only additive constant terms in all expressions with Kolmogorov complexity.

\paragraph{The Amount of Randomness in Communication Protocols.} In the usual definition of a randomized communication protocol, Alice and Bob are allowed to use any number of random bits. However, for most settings, it can be shown that the amount of randomness in use can be reduced without loss of the performance of the protocol. It is known that for a communication protocol  computing a function of Alice' and Bob's inputs with help of a \emph{public} source of randomness, the number of used random bits can be reduced to $O(\log n)$ (where $n$ is an upper bound on the length of the inputs of the protocol) 
at the price of only a minor degradation of the error probability, see Newman's theorem \cite{newman1991private}. We cannot apply Newman's theorem directly in our setting. 
The main reason is that we care about not only the result obtained by Alice and Bob but also about the relation between the final result (the secret key) and the transcript of the communication (the secrecy condition). In general, 
we cannot reduce the number of random bits used  in such a protocol to  $O(\log n)$, but we will see that  $O(n)$ random bits is always enough. 
In an arbitrary protocol of this type we cannot delegate the task of generating the random bits to only one participant of the protocol, and Alice and Bob must be careful when communicating the produced random bits to each other. Indeed, sending the random bits via the communication channel not only affects the communication complexity of the protocol but may also reveal to the eavesdropper some information on the resulting secret key.
Thus, another technical difficulty with our communication protocols is that  we have to deal with two \emph{private} sources of randomness (we need to handle separately the random bits produced by Alice and the ones produced by Bob). 
However, the ideas used in the proof of Newman's theorem (random sampling) can be adapted to this setting. We are not aware of such results being previously published, so in what follows we present it in more detail.

We deal with two-party communication protocols where Alice and Bob are given inputs $x$ and $y$ respectively (usually of the same length $n$) and can use any number of private random bits. Following the protocol, the parties exchange several messages (this sequence of messages is a transcript $t$ of the protocol), and end up with some results --- final answers $z_1$ (the answer of Alice) and $z_2$ (the answer of Bob). 
For every pair of inputs $(x,y)$, a randomized communication protocol can produce with different probabilities different answers $z_1,z_2$ and different transcripts $t$. (We usually require that Alice and Bob obtain with a high probability the same answer, i.e., $z_1=z_2$. However, in general the results obtained by two parties can be different.) These outcomes of the protocol can be \emph{valid} or \emph{invalid}. In the classical settings, where Alice and Bob compute some function $F(x,y)$, the results are valid if $z_1=z_2=F(x,y)$. 
In the setting where Alice and Bob compute a relation $R$, an answer $z_i$ is valid if the relation $R(x,y,z_i)$ is true. In a secret key agreement protocol the condition of validity is more involved: for each admissible pair of inputs $(x,y)$ (technically, for all $x$ and $y$ satisfying~\eqref{eq:complexity-profile}) the outcome of the protocol is valid if
\begin{itemize}
\item answers $z_1$ and $z_2$ are equal to each other;
\item Kolmogorov complexity of the obtained key $z_1=z_2$ is bigger than some fixed threshold
(e.g., we may require that Kolmogorov complexity of the key is close to $I(x:y)$);
\item the information in the transcript  on the obtained secret key is smaller than some fixed threshold
(e.g., we may require that $C(z_i|t) \ge |z_i| - O(\log n)$).
\end{itemize}
Let $\pi$ be a  two-party communication protocol of secret key agreement
with private randomness such that  for all inputs $x$ and $y$ satisfying~\eqref{eq:complexity-profile} 
with a probability $>1-\varepsilon$  the produced answers $z_1,z_2$ and the transcript $t$ are valid in the sense explained above. 
We may assume that the transcript is reasonably short (its length is at most $O(n)$, as it is the case for all protocols of interest).
Under this assumption, we argue that there is another communication protocol $\pi'$ that produces valid outcomes
with almost the same probability and uses only $O(n)$ random bits.

\begin{proposition}\label{p:newman}
Let $\varepsilon_1,\varepsilon_2$ be positive numbers, and let 
 $\pi$ be a  two-party communication protocol of secret key agreement
with private randomness such that  for all inputs $x$ and $y$ satisfying~\eqref{eq:complexity-profile} 
with a probability $>1-\varepsilon_1$  the produced answers $z_1,z_2$ and the transcript $t$ are valid in the sense explained above. 
Assume that the lengths of the produced answers $z_1$ and $z_2$ is always bounded by $O(n)$ and
communication complexity of the protocol is $O(n)$.
Then there exists a communication protocol $\pi'$ that produces valid outcomes
with a probability $>1-\varepsilon_1-\varepsilon_2$, and Alice and Bob use at most $O(n+\log(1/\varepsilon_2))$ private random bits.

Moreover, if the original protocol was uniformly computable (there is an algorithm that implements the protocol for all inputs of all lengths), then the new protocol also can be made uniformly computable.
\end{proposition}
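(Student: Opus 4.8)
The plan is to carry out a Newman-type derandomisation adapted to \emph{private} randomness: rather than one small sample of public seeds, $\pi'$ hard-wires a \emph{pair} of small multisets, $S_A=(a_1,\dots,a_m)$ of coin sequences for Alice and $S_B=(b_1,\dots,b_m)$ for Bob. View $\pi$ as a distribution over deterministic strategy pairs --- once $(x,y,r_A,r_B)$ is fixed the transcript $t$ and the answers $z_1,z_2$ are determined, and since $\pi$ always halts it reads only finitely many coins. Call an outcome \emph{invalid} if it fails one of the three validity conditions; by hypothesis, for every $(x,y)$ satisfying~\eqref{eq:complexity-profile} the probability over $(r_A,r_B)$ of an invalid outcome is $<\varepsilon_1$. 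The protocol $\pi'$ runs as follows: Alice spends $\lceil\log m\rceil$ fresh private bits to choose an index $i\in[m]$ and then executes her part of $\pi$ with $a_i$ as her coins; Bob does likewise with $j$ and $b_j$. The transcript and the answers of $\pi'$ on $(x,y,i,j)$ are literally those of $\pi$ on $(x,y,a_i,b_j)$, so $\pi'$ consumes only $2\lceil\log m\rceil$ private random bits, and, since the outcome $(z_1,z_2,t)$ produced by $\pi'$ is an outcome of $\pi$, each validity condition (which speaks only about the strings $x,y,z_1,z_2,t$) transfers verbatim. Hence it suffices to pick $S_A,S_B$ so that for every admissible $(x,y)$ the fraction of $(i,j)\in[m]^2$ for which $\pi$ produces an invalid outcome is below $\varepsilon_1+\varepsilon_2$.

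I would get such $S_A,S_B$ by the probabilistic method, drawing the $a_i$ i.i.d.\ from the law of $r_A$ and the $b_j$ i.i.d.\ from the law of $r_B$. Fix any $(x,y)$ of length $n$ and put $f(r_A)=\Pr_{r_B}[\text{the run }(x,y,r_A,r_B)\text{ is invalid}]$, so $\mathbb{E}_{r_A}[f]=\Pr_{r_A,r_B}[\text{invalid for }(x,y)]$. The empirical error is $\tfrac1m\sum_j g(b_j)$ with $g(r_B)=\tfrac1m\sum_i\mathbf{1}[\text{run}(x,y,a_i,r_B)\text{ invalid}]$, and its conditional expectation given $S_A$ is $\tfrac1m\sum_i f(a_i)$. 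Two applications of Hoeffding's inequality --- first to $\tfrac1m\sum_i f(a_i)$, which concentrates around $\mathbb{E}_{r_A}[f]$, and then, conditionally on $S_A$, to $\tfrac1m\sum_j g(b_j)$, which concentrates around $\tfrac1m\sum_i f(a_i)$ --- bound by $2\exp(-\Omega(m\varepsilon_2^2))$ the probability that the empirical error exceeds $\mathbb{E}_{r_A}[f]+\varepsilon_2$. A union bound over the at most $2^{2n}$ inputs of length $n$ then shows that, once $m=\Theta(n/\varepsilon_2^2)$, some pair $(S_A^{(n)},S_B^{(n)})$ has empirical error within $\varepsilon_2$ of $\Pr_{r_A,r_B}[\text{invalid}]$ at \emph{every} such input; in particular for the admissible ones, where this probability is $<\varepsilon_1$, the empirical error is $<\varepsilon_1+\varepsilon_2$, so $\pi'$ is correct, and it uses $2\lceil\log m\rceil=O(\log n+\log(1/\varepsilon_2))=O(n+\log(1/\varepsilon_2))$ private random bits. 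The essential point is that this two-stage estimate makes $m$, hence the randomness of $\pi'$, independent of how many coins $\pi$ itself flips; a naive one-stage reduction (shrinking Bob's seeds separately for each value of Alice's coins, then shrinking Alice's) yields a bound that grows with the original number of random bits, which is exactly the situation we must escape.

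The remaining part --- and where I expect the real work to be --- is the ``moreover'' clause, i.e.\ making $(S_A^{(n)},S_B^{(n)})$ computable from $n$. Because $\pi$ halts on every input and every pair of coin sequences, the finitely branching tree of coin prefixes has no infinite branch, so by K\"onig's lemma each party reads at most $\rho(n)<\infty$ coins on inputs of length $n$; and $\rho(n)$ is computable by breadth-first exploration of that tree using the algorithm for $\pi$. It therefore suffices to search among the finitely many pairs of multisets of seeds of length $\rho(n)$. The obstacle is that the validity predicate mentions plain Kolmogorov complexity and is only co-semidecidable, so goodness of a candidate cannot be tested outright. I would circumvent this by dovetailing: one enumerates, for growing time bounds, certificates that individual runs are invalid together with certificates that $\Pr_{r_A,r_B}[\text{invalid for }(x,y)]$ is at least a given amount, and uses only these partial certificates to reject candidates; the pair guaranteed by the probabilistic argument above --- taken with empirical error within $\varepsilon_2$ of $\Pr_{r_A,r_B}[\text{invalid}]$ at every input of length $n$, not merely the admissible ones --- ensures the search locates a good candidate, and correctness on the admissible inputs again follows since there $\Pr_{r_A,r_B}[\text{invalid}]<\varepsilon_1$. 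The uniformity of $\pi$ is what makes these searches effective, so $\pi'$ is uniformly computable as well; handling this last step rigorously, rather than the concentration bound, is the crux of the proof.
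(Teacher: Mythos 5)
Your two-stage Hoeffding sampling (first fix Alice's multiset of seeds, then Bob's, conditionally) is exactly the mechanism of the paper's Lemma~\ref{l:large-numbers} and Lemma~\ref{l:derandom}, and your existence argument is sound --- in fact, by union-bounding only over the $2^{2n}$ inputs and estimating the invalidity probability of each input to precision $\varepsilon_2$, you get away with $m=\Theta(n/\varepsilon_2^2)$ seeds, i.e.\ $O(\log n+\log(1/\varepsilon_2))$ random bits, which is stronger than what the proposition claims. But this choice of what to approximate is precisely what sinks the ``moreover'' clause, and you correctly sense that this is where the difficulty lies.

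The gap is in the dovetailing search. Your goodness criterion for a candidate $(S_A,S_B)$ is ``empirical invalidity fraction $\le$ true invalidity probability $+\varepsilon_2$ at every input.'' Invalidity of a single run is semidecidable (one can enumerate upper bounds on Kolmogorov complexity, hence eventually certify that $C(z)$ or $C(z|t)$ is too small), so both the empirical fraction and the true probability are only approximable \emph{from below}, and they converge at unrelated rates. To reject a bad candidate you would need a lower bound on its empirical error together with an \emph{upper} bound on the true error; the latter is not semicomputable, so no finite stage of the enumeration ever licenses a rejection, and symmetrically no finite stage ever certifies acceptance. The search as described either never terminates or may commit to a candidate that later certificates reveal to be bad. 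The paper escapes this by choosing a criterion that never mentions validity at all: the sample must $\delta$-approximate the probability of \emph{each individual outcome tuple} $(z_1,z_2,t)$ for each input, with $\delta=\varepsilon_2/2^{O(n)}$. That condition is checkable by exhaustive simulation of $\pi$ (uniformity of $\pi$ plus the finiteness of the coin tree, as you note via K\"onig's lemma, makes this effective), and it implies your condition by summing $\delta$ over the at most $2^{O(n)}$ invalid outcomes --- whatever the invalid set turns out to be. The price is a union bound over $2^{O(n)}$ outcome tuples rather than $2^{2n}$ inputs, forcing $k=2^{O(n)}/\varepsilon_2^2$ and hence $O(n+\log(1/\varepsilon_2))$ random bits; this is exactly why the proposition is stated with $O(n+\cdot)$ rather than $O(\log n+\cdot)$. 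To repair your proof, replace the per-input invalidity criterion by this per-outcome criterion before running the brute-force search.
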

Notice that we do not assume that the notion of \emph{validity} of  the  output of a protocol is computable (we cannot assume that because Kolmogorov complexity and the mutual information are not computable). However, we do assume that the protocol is defined for all pairs of inputs (even on those which do not satisfy~\eqref{eq:complexity-profile} and for which  the outcome of the communication is meaningless).
The proof of Proposition~\ref{p:newman} uses the standard idea of random sampling. The proof is deferred to Appendix.

\section{From Information-Theoretic Properties to Combinatorics of Graphs}

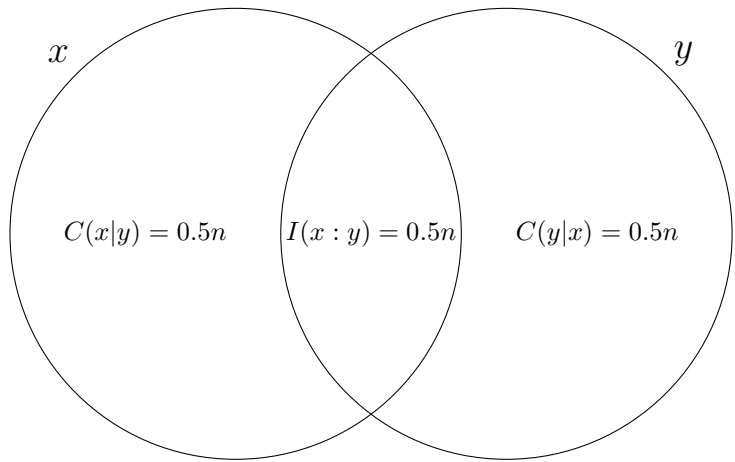
\begin{figure}
				\centering
				\begin{tikzpicture}[scale=1.10]
				  \draw \leftcircle node[left] {$C(x|y)=0.5n$};
				  \draw \rightcircle node [right] {$C(y|x)=0.5n$};
				  \node {$I(x:y)=0.5n$};
				  \node at (150:4) {\Huge $x$};
				  \node at (30:4) {\Huge $y$};
				\end{tikzpicture}
				\caption{A diagram for the complexity profile of two strings $x,y$: from the Kolmogorov--Levin theorem we have $C(x) \eqp C(x|y)+ I(x:y) \eqp n$, $C(y) \eqp C(y|x)+ I(x:y) \eqp n$, and $C(x,y)\eqp C(x|y)+ C(y|x)+I(x:y)\eqp  1.5n.$}\label{x-y-venn-diagram}
			
\end{figure}

To study the information-theoretic properties of a pair $(x,y)$ we will embed this pair of strings in a large set of pairs that are in some sense similar to each other. We will do it in the language of bipartite graphs. The information-theoretic properties of the initial pair $(x,y)$ will be determined by the combinatorial properties of these graphs. In their turn, the combinatorial properties of these graphs will be proven using the spectral technique.
In this section we present three examples of $(x,y)$ corresponding to three different constructions of graphs. In the next sections we will study the spectral and combinatorial properties of these graphs and, accordingly, the information-theoretic properties of these pairs $(x,y)$.

We start with a simple lemma that establishes a correspondence between information-theoretic and combinatorial language for the properties of pairs $(x,y)$.
\begin{lemma}\label{l:graph-complexity-profile}
Let $G=(L\cup R,E)$ be a bipartite graph such that $|L|=|R|=2^{n+O(1)}$ and the degree of each vertex is $D=2^{0.5n +O(\log n)}$. We assume that this graph  has an \emph{explicit construction} in the sense that the complete description of this graph (its adjacency matrix) has Kolmogorov complexity $O(\log n)$.
Then most\footnote{For every $d>0$, we can make the fraction of such pairs greater than  is $1-1/n^{d}$ by choosing the constants hidden in the terms $O(\log n)$ in \eqref{eq:graph-complexity-profile}.} $(x,y)\in E$ (pairs of vertices connected by an edge) have the following complexity profile:
\begin{equation}\label{eq:graph-complexity-profile}
\left\{
\begin{array}{rcccl}
 n - O(\log n)  &\le& C(x) &\le& n + O(\log n),\\
n - O(\log n) &\le& C(y) &\le& n + O(\log n),\\
1.5n - O(\log n) &\le& C(x,y) &\le& 1.5n + O(\log n)
\end{array}
\right.
\end{equation}
(which is equivalent to the triple of equalities in \eqref{eq:complexity-profile}, see Fig.~\ref{x-y-venn-diagram}).
\end{lemma}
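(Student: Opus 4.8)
The plan is to do a standard counting argument: since the graph has an explicit $O(\log n)$ description, membership in $E$ is a cheap fact to state, and each of $L$, $R$, $E$ is a set of size we can pin down, so a random edge has complexity essentially equal to the logarithm of the number of edges.

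First I would fix the graph $G$ as a constant-size object (complexity $O(\log n)$), so that $G$ can be used freely inside any description at cost $O(\log n)$. Observe $|E| = |L|\cdot D = 2^{n+O(1)}\cdot 2^{0.5n+O(\log n)} = 2^{1.5n + O(\log n)}$. For the upper bounds: given $G$ and the index of the pair $(x,y)$ in some fixed computable enumeration of $E$, one can reconstruct $(x,y)$; hence $C(x,y) \lep \log|E| = 1.5n + O(\log n)$, and similarly $C(x) \lep \log|L| = n + O(\log n)$ and $C(y) \lep \log|R| = n + O(\log n)$ (here we also use that $x$ has degree at most $D$, so listing its neighbours is cheap, but for the pure upper bound on $C(x)$ just the index in $L$ suffices). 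For the lower bounds I would count: the number of $(x,y)\in E$ with $C(x,y) < 1.5n - c\log n$ is at most $2^{1.5n - c\log n}$, which is a $2^{-\Omega(\log n)} = n^{-\Omega(1)}$ fraction of $|E|$; taking $c$ large enough (depending on the $O(\cdot)$ constants) makes this fraction $o(1)$. So for all but an $o(1)$ fraction of edges, $C(x,y) \gep 1.5n$.

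Next, from $C(x,y) \gep 1.5n$ together with the upper bounds $C(x) \lep n$, $C(y) \lep n$ and the Kolmogorov--Levin inequality $C(x,y) \lep C(x) + C(y)$, we automatically get $C(x) \gep C(x,y) - C(y) \gep 1.5n - n = 0.5n$ — wait, that only gives $C(x)\gep 0.5n$, not $n$. To get $C(x) \gep n$ I instead argue directly by counting the bad edges: the set of $x\in L$ with $C(x) < n - c\log n$ has size $< 2^{n-c\log n}$, and each such $x$ is incident to at most $D$ edges, so the number of edges touching a low-complexity left endpoint is at most $2^{n-c\log n}\cdot D = 2^{1.5n - c\log n + O(\log n)}$, again an $o(1)$ fraction of $|E|$ for $c$ large; symmetrically for $y\in R$. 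Combining, all but an $o(1)$ fraction of edges $(x,y)$ simultaneously satisfy $C(x)\eqp n$, $C(y)\eqp n$, and $C(x,y)\eqp 1.5n$. Finally I would note that \eqref{eq:graph-complexity-profile} is equivalent to \eqref{eq:complexity-profile}: by Kolmogorov--Levin, $I(x:y) = C(x)+C(y)-C(x,y) \eqp 0.5n$, $C(x|y) \eqp C(x,y)-C(y) \eqp 0.5n$, and $C(y|x) \eqp C(x,y)-C(x) \eqp 0.5n$, which is exactly Fig.~\ref{x-y-venn-diagram}.

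I do not expect a real obstacle here — this is a routine incompressibility argument. The only mild subtlety worth stating carefully is bookkeeping of the logarithmic error terms: the $O(\log n)$ in the degree $D$ and in the description of $G$ feed into the constants, so one must choose the counting threshold constant $c$ after those are fixed, and the conclusion is "most edges" (an $1 - n^{-\Omega(1)}$ fraction), not "all edges". It is also worth remarking that the argument only uses $|L|,|R| = 2^{n+O(1)}$ and regularity of degree $D$; irregular graphs with comparable parameters would work with $D$ replaced by a max-degree bound, but we do not need that generality.
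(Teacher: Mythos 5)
Your proposal is correct and follows essentially the same counting/incompressibility argument as the paper: upper bounds via indices into $L$, $R$, $E$ (using the $O(\log n)$ description of $G$), and lower bounds by counting the few low-complexity objects, with your conversion from "most vertices" to "most edges" via $D$-regularity being exactly the (implicit) step the paper relies on. Your self-correction about why $C(x)\gep n$ needs its own counting argument rather than following from $C(x,y)\gep 1.5n$ is also the right observation.
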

\begin{proof}
There are $D\cdot |L|  = D\cdot |R| = 2^{1.5n+O(\log n)}$ edges in the graph. Each of them can be specified by its index in the list of elements of $E$,
and this index should consists of only $\log|E|$ bits;  the set $E$ itself can be described by $O(\log n)$ bits. Therefore, 
\[
C(x,y) \le \log|E| + O(\log n) = 1.5n+O(\log n).
\]
For most $(x,y)\in E$ this bound is tight. Indeed, for every number $c$ there are at most $2^{1.5n-c}$ objects with complexity less than $1.5n-c$.
Therefore,  for all but $|E|/2^{O(\log n)}= 2^{1.5n-O(\log n)}$ edges in $E$ we have 
\begin{equation}\label{eq:C(xy)>1.5n-O(logn)}
  C(x,y)  \ge 1.5n-O(\log n).
\end{equation}
Similarly, for each $x\in L$ and for each $y\in R$ we have
\[
C(x)\le\log|L|+O(\log n)=n+O(\log n),\ C(y)\le\log|R|+O(\log n)=n+O(\log n).
\]
To prove that for most pairs $(x,y)$ these bounds are also tight, we use the fact that each vertex in the graph has $D$ neighbors. 
A pair $(x,y)\in E$ can be specified by a description of $x$ combined with the index of $y$ in the list of all neighbors of $x$. 
It follows that 
\[
C(x,y) \le C(x) + \log D + O(\log n) = C(x) + 0.5 n + O(\log n).
\]
 Therefore, if $C(x)$ is significantly (by at least $\omega(\log n)$) less than $n$, then $C(x,y)$ must be significantly less than $1.5n$. But we have seen that \eqref{eq:C(xy)>1.5n-O(logn)} is true  for the vast majority of pairs $(x,y)$. 
Hence, for the vast majority of pairs we have $ C(x) \ge n - O(\log n).$
A similar argument implies that for most pairs $(x,y)\in E$  we have $C(y) \ge n - O(\log n)$.
\end{proof}
\begin{remark}
In a graph satisfying the conditions of Lemma~\ref{l:graph-complexity-profile} each vertex has $D=2^{0.5n+O(\log n)}$ neighbors. Therefore, for \emph{all} $(x,y)\in E$ we have 
\[
C(x|y)  \le 0.5 n+O(\log n),\  C(y|x) \le  0.5 n+O(\log n),
\]
(given $x$, we can specify $y$ by its index in the list of all neighbors of $x$ and vice-versa.)
From Lemma~\ref{l:graph-complexity-profile} and the Kolmogorov--Levin theorem it follows that for \emph{most}  $(x,y)\in E$ these inequalities are tight, i.e.,  $C(x|y) = 0.5n \pm O(\log n)$ and $C(y|x) = 0.5n \pm O(\log n)$.
\end{remark}
\begin{remark}
The density of edges in the graph $G=(L\cup R,E)$ (i.e., the ratio $\frac{|E|}{|L|\cdot|R|}$) corresponds on the logarithmic scale to the mutual information between $x$ and $y$. Indeed,  
the equations in \eqref{eq:graph-complexity-profile} mean that for most $(x,y)\in E$
		\[
		\frac{|E|}{|L|\cdot|R|}=\frac{2^{C(x,y)+O(\log n)}}{2^{C(x)+O(\log n)}\cdot2^{C(y)+O(\log n)}}=2^{-I(x:y)\pm O(\log n)}.
		\]
Below we pay attention to the density of edges in the \emph{induced subgraphs} of $G$. We will see that this ratio corresponds in some sense to the property of ``extractability'' of the mutual information. We will show that for some specific graphs $G$ satisfying Lemma~\ref{l:graph-complexity-profile}, in all large enough induced subgraphs, the density of edges is close to $2^{-I(x:y)}$.
\end{remark}

\begin{example}[discrete plane]\label{lines-and-points}
Let $\mathbb{F}_q$ be a finite field of cardinality $q=2^n$. Consider the set $L$ of points on plane $\mathbb{F}_q^2$ and the set $R$ of non-vertical lines, which can be represented as affine functions $y=ax-b$ for $(a,b)\in \mathbb{F}_q^2$. Let $G=(L\cup R,E)$ be the bipartite graph where a point $(x_0,y_0)$ is connected to a line $y=ax-b$ if and only if it is on the line i.e. $y_0=ax_0-b$. Clearly $|L|=|R|=2^{2n}$. The degree of each vertex is $2^n$ since there are exactly $q$ points on each line and there are exactly $q$ lines on each point. 
In the sequel we denote this graph by $G^{Pl}_n$.

This graph (or its adjacency matrix) can be constructed effectively when the field $\mathbb{F}_q$ is given. We assume a standard construction of the field $\mathbb{F}_{2^n}$ to be fixed.  Thus,  the graph is uniquely defined by the binary representation of $n$. Therefore, we need only $O(\log n)$ bits  to describe the graph (as a finite object).  
Lemma~\ref{l:graph-complexity-profile} applies to this graph, so for most $(x,y)\in E$ the equalities in  \eqref{eq:graph-complexity-profile} are satisfied.
\end{example}
The graph from Example~\ref{lines-and-points} was used in \cite{muchnik1998common,cmrsv} to construct an  example of a pair with non-extractable mutual information. We will use a similar property of  this graph (although formulated  differently and proven by a different technique, see Remark~\ref{r:lines-points-history} below).

\begin{example}[discrete Euclidean distance]\label{quasi-distance}
Let $\mathbb{F}_q$ be a finite field of order $q$, where $q$ is an odd prime power. Let us define the distance function between two points $x=(x_1,x_2)$, $y=(y_1,y_2)$ in $\mathbb{F}_q^2$ as
\[
\mathrm{dist}(x,y)=(x_1-y_1)^2+(x_2-y_2)^2.
\]
For every $r\in\mathbb{F}_q\setminus\{0\}$ we define the \emph{finite Euclidean distance graph} $G=(L\cup R,E)$ as follows:
$
L=R=\mathbb{F}_q^2,
$
and 
\[
E=\{(x,y))\ \mid\ \mathrm{dist}(x,y)=r\}.
\]
Obviously, $|L|=|R|=q^2$. 

This graph  is regular: for each point $x\in \mathbb{F}_q^2$ the number of neighbors is equal  by definition to the number of  points  $y=x+u$ where the coordinates of $u=(u_1,u_2)$ 
satisfy $u_1^2+u_2^2 = r$. This number depends on $q$ and $r$ but not on a specific $x$. It can be shown that for all $r\not=0$ the degree of this graph is $\Theta(q)$, and $|E|=\Theta(q^3)$, see \cite{quasi-dist}.

For definiteness, we can fix for each $n$ one $q_n$ such that $\lceil\log (q_n^2)\rceil=n$.
Since the prime numbers are dense enough\footnote{It is known, see  \cite{nagura1952interval}, that for all $x\ge25$ there exists a prime number between $x$ and $(1+\frac15)x$. Since $(1+\frac15)^2<2$, this implies that for all $n\ge10$ we can find a square of a prime number  between $2^{n-1}$ and $2^{n}$ (starting, for example, with the number $23$ whose square lies between $2^9$ and $2^{10}$).}, for every large enough  integer $n$ we can fix a suitable prime number $q_n$. 

For the defined above graph $G=(L\cup R,E)$ for this $\mathbb{F}_{q_n}$ we have $|L|=|R|=2^{n+O(1)}$ and $|E|=2^{1.5n+O(1)}$, and Lemma~\ref{l:graph-complexity-profile} applies to this graph. We should also fix the value of $r$. Any non-zero element of   $\mathbb{F}_{q_n}$ would serve the purpose. 
For simplicity we may assume that $r=1$.
Thereafter we denote this graph by $G^{Euc}_n$.
\end{example}

In our next example we use the following standard lemma.
\begin{lemma}\label{l:binom}
	Denote $h(t):= -t\log t-(1-t)\log(1-t)$. For any real number $\gamma\in(0,1)$ and every positive integer $m$, ${\binom m{\gamma m}}=2^{h(\gamma)m+O(\log m)}$.
\end{lemma}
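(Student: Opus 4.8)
Strictly speaking $\gamma m$ need not be an integer; the plan is to prove the estimate for $k:=\lfloor\gamma m\rfloor$ (or, equivalently, for the integer nearest to $\gamma m$), which is all that is ever used. This suffices because two adjacent binomial coefficients $\binom mk$ and $\binom m{k+1}$ differ by a factor at most $m$, and $\log m$ is absorbed into the $O(\log m)$ error term; likewise, for $m=O(1)$ the statement is vacuous, so I may freely assume $m$ large and $0<k<m$.

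First I would establish the ``integer'' version, namely that for $\gamma'=k/m$ we have $\log\binom mk=h(\gamma')\,m+O(\log m)$. For the upper bound, expand $1=(\gamma'+(1-\gamma'))^m$ and keep only the $k$-th summand: this gives $\binom mk(\gamma')^{k}(1-\gamma')^{m-k}\le 1$, i.e.\ $\binom mk\le 2^{h(\gamma')m}$. For the lower bound, observe that inside the same expansion the $j$-th term $T_j=\binom mj(\gamma')^{j}(1-\gamma')^{m-j}$ satisfies $T_{j+1}/T_j=\frac{(m-j)\gamma'}{(j+1)(1-\gamma')}$, which is $\ge 1$ exactly for $j\le k-1$ (here one uses $\gamma'm=k\in\mathbb Z$); hence $T_k$ is the largest among the $m+1$ terms, so $T_k\ge\frac1{m+1}$ and therefore $\binom mk\ge \frac{2^{h(\gamma')m}}{m+1}$. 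Taking logarithms yields $\log\binom mk=h(\gamma')\,m+O(\log m)$ as claimed. (An alternative route is Stirling's formula $\log j!=j\log j-j\log e+\tfrac12\log j+O(1)$ applied to $m!,k!,(m-k)!$; the $\tfrac12\log$ and $O(1)$ contributions total $O(\log m)$ and the leading terms simplify to exactly $m\,h(k/m)$.)

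It remains to replace $h(\gamma')$ by $h(\gamma)$, i.e.\ to bound $m\bigl(h(k/m)-h(\gamma)\bigr)$. Since $|k/m-\gamma|\le 1/m$ and $h$ is continuous on $[0,1]$, this difference is small; the only care needed is near the endpoints, where $h$ has unbounded derivative. Using the elementary modulus-of-continuity estimate $|h(t)-h(t')|=O\bigl(|t-t'|\log\frac1{|t-t'|}\bigr)$ for small $|t-t'|$ one gets $m\bigl(h(k/m)-h(\gamma)\bigr)=O(\log m)$. (If $\gamma$ itself is within $1/m$ of $0$ or $1$, then $k\in\{0,m\}$, $\binom mk=1=2^0$, and $h(\gamma)m=O(\log m)$, so the claim holds trivially in that regime.) Combining with the previous paragraph gives $\binom m{\gamma m}=2^{h(\gamma)m+O(\log m)}$.

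The only genuinely delicate point is this last endpoint estimate for $h$; everything else is standard bookkeeping with Stirling's formula or the ``largest-term'' argument.
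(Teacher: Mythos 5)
Your proof is correct, and the paper itself states this as a ``standard lemma'' without giving any proof, so there is nothing to compare against. Your largest-term-of-the-binomial-expansion argument (equivalently, Stirling) is the standard one, and you correctly attend to the two points usually glossed over: the non-integrality of $\gamma m$ and the unbounded derivative of $h$ near the endpoints.
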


\begin{example}[Hamming distance]\label{hamming}
We choose $\theta\in(0,\frac12)$ such that $h(\theta)=0.5$. Let $L=R=\{0,1\}^n$. We define the bipartite graph $G=(L\cup R,E)$ so that two strings (vertices) from $L$ and $R$ are connected if and only if the Hamming distance between them is $\theta n$. Clearly $|L|=|R|=2^n$. By Lemma~\ref{l:binom} the degree of each vertex is $D=\binom n{\theta n}=2^{0.5n+O(\log n)}$.
Lemma~\ref{l:graph-complexity-profile} applies to this graph. Therefore, for most $(x,y)\in E$ we have \eqref{eq:graph-complexity-profile}. Subsequently we denote this graph by $G^{Ham}_{\theta,n}$.
\end{example}

We are interested in properties of $(x,y)$ that are much subtler than those from Lemma~\ref{l:graph-complexity-profile}. For example, we would like to know whether there exists a $z$ materializing a part of the mutual information between $x$ and $y$ (i.e., such that $C(z|x)\approx0$, $C(z|y)\approx0$, and $C(z)\gg0$). These subtler properties are not determined completely by the ``complexity profile'' of $(x,y)$ shown in Fig.~\ref{x-y-venn-diagram}. In particular, we will see that some of these properties are different for pairs $(x,y)$ from Example~\ref{lines-and-points} and Example~\ref{quasi-distance} on the one hand and from Example~\ref{hamming} on the other hand.
In the next section we will show that some information-theoretic properties of $(x,y)$ are connected with the spectral properties of these graphs.

\paragraph{Randomized Communication Protocols in the Information-Theoretic Framework.}
In our main results we discuss communication protocols with two parties, Alice and Bob, who are given inputs $x$ and $y$. We will assume that  Alice and Bob are given the ends of some edge $(x,y)$ from $G^{Pl}_n$, from $G^{Euc}_n$,  or from  $G^{Ham}_{\theta,n}$. 

We admit randomized communication protocols with private sources of randomness. Technically this means that besides the inputs $x$ and $y$, Alice and Bob are given strings of random bits, $r_A$ and $r_B$ respectively. We assume that both $r_A$ and $r_B$  are binary strings from $\{0,1\}^m$. Without loss of generality   we may assume that $m=\poly(n)$, see Proposition~\ref{p:newman}.

It is helpful to represent the entire inputs available to Alice and Bob as an edge in a graph. The data available to Alice are $x':=\langle x,r_A\rangle$ and the data available to Bob are $y':=\langle y,r_B\rangle$. We can think of the pair $(x',y')$ as an edge in the graph
\[
\widehat{G^{Pl}_n} : = G^{Pl}_n \otimes K_{M,M}
\]
\label{hat-g-pl}
(if $(x,y)$ is an edge in $G^{Pl}_n$), or
\[
\widehat{G^{Euc}_n} : = G^{Euc}_n \otimes K_{M,M}
\]
(if $(x,y)$ is an edge in $G^{Euc}_n$), or
\[
\widehat{G^{Ham}_{\theta,n}} : = G^{Ham}_{\theta,n} \otimes K_{M,M}
\]
(if, respectively, $(x,y)$ is an edge in $G^{Ham}_{\theta,n}$). Here $K_{M,M}$ is a complete bipartite graph with $M=2^m$ vertices in each part, and $\otimes$ denotes the usual tensor product of bipartite graphs.

Keeping in mind  Example~\ref{lines-and-points}, Example~\ref{quasi-distance},  and Example~\ref{hamming}, we obtain that for 
 \emph{most}  edges $(x',y')$ in $\widehat{G^{Pl}_n} $, in $\widehat{G^{Euc}_n} $,  and in $\widehat{G^{Ham}_{\theta,n}} $ we have
\[
\begin{array}{ccccc}
C(x') &\eqp& n &+& m ,\\
C(y') &\eqp& n &+ & m,\\
C(x',y') &\eqp& 1.5n &+& 2m.
\end{array}
\]
 
\section{Bounds with the Spectral Method}

\subsection{Information Inequalities from the Graph Spectrum}\label{ss:spectre}

In this section we show that the spectral properties of a graph can be used to prove information-theoretic inequalities for pairs $(x,y)$ corresponding to the edges in this graph. We start with a reminder of the standard considerations involving the {spectral gap} of a graph. 

Let $G = (L\cup R, E)$ be a regular bipartite  graph of degree $D$ on $2N$ vertices  ($|L|=|R|=N$,  each edge $e\in E$ connects  one vertex from $L$ with another one from $R$, and each vertex is incident to exactly $D$ edges). The adjacency matrix of such a graph is a $(2N) \times (2N)$  zero-one matrix $H$ of the form
\[
\left(
\begin{array}{cc}
0 & J\\
J^\top &0
\end{array}
\right)
\]
(the $N\times N$ submatrix $J$ is usually called \emph{bi-adjacency} matrix of the graph;
$J_{ab} =1$ if and only if there is an edge between the $a$-th vertex in $L$ and the $b$-th vertex in $R$). Let 
\[
\lambda_1\ge\lambda_2\ge\ldots\ge\lambda_{2N}.
\]
be the eigenvalues of $H$. Since $H$ is symmetric, all $\lambda_i$ are real numbers. It is well known that for a bipartite graph
the spectrum is symmetric, i.e., $\lambda_i= - \lambda_{2N-i+1}$ for each $i$. As the degree of each vertex in the graph is equal to $D$, we have $\lambda_1=-\lambda_{2N} =D$. 
We focus on the second eigenvalue of the graph $\lambda_2$; we are interested in graphs such that $\lambda_2\ll \lambda_1$ (that is, the \emph{spectral gap} is large).
\begin{remark}
If the bi-adjacency matrix of the graph is symmetric, then the spectrum of the $(2N)\times (2N)$ matrix $H$ consists of the eigenvalues of the $N\times N$ matrix $J$ and their opposites. This observation makes the computation of the eigenvalues simpler.

It is immediately clear that the bi-adjacency matrices of the bipartite graphs from Example~\ref{quasi-distance} and Example~\ref{hamming} are symmetric. For Example~\ref{lines-and-points} this is also true, since a point with coordinates $(x,y)$ and a line indexed $(a,b)$ are incident if 
$
 a\cdot x - y- b = 0.
$
\end{remark}

In the rest we will use the fact that for the graphs from Example~\ref{lines-and-points} and Example~\ref{quasi-distance} the value of $\lambda_2$ is much less than $\lambda_1=D$:
 \begin{lemma}[see lemma 5.1 in \cite{RVW1}]\label{l:eigenvalue-points-lines}
For the bipartite graph $G^{Pl}_n$ from Example~\ref{lines-and-points} (incident points and lines on plane $\mathbb{F}_q^2$) the second eigenvalue is equal to $\sqrt{q} = \sqrt{D}$.
\end{lemma}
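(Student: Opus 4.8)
The plan is to pass to the $N\times N$ bi-adjacency matrix $J$ of $G^{Pl}_n$, where $N=q^2$. As observed in the remark preceding the lemma, a point $(x,y)$ is incident to a line $(a,b)$ precisely when $ax-y-b=0$, a relation symmetric under swapping $(x,y)\leftrightarrow(a,b)$; hence $J$ is symmetric, the spectrum of the full adjacency matrix $H$ is the set of eigenvalues of $J$ together with their negatives, and it suffices to find the eigenvalues of $J$, equivalently its singular values, i.e. the square roots of the eigenvalues of $JJ^{\top}$.

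First I would compute $JJ^{\top}$ combinatorially: its $(p_1,p_2)$ entry counts the non-vertical lines through both points. For $p_1=p_2$ this equals $q$ (one line $y=ax-b$ for each slope $a$); for distinct points it equals $1$ when they have different $x$-coordinates (the unique line through them is then non-vertical) and $0$ when they share an $x$-coordinate (the only line through them is vertical, and $R$ contains no vertical lines). Thus $JJ^{\top}=qI+A$, where $A$ is the adjacency matrix of the complete multipartite graph on $\mathbb{F}_q^2$ whose $q$ parts are the vertical fibers $\{x=c\}$, $c\in\mathbb{F}_q$.

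Next I would diagonalize $JJ^{\top}$ using this block structure. Ordering the $q^2$ points by $x$-coordinate into $q$ blocks of size $q$, write $A=\mathbf{1}_{q^2}-I_{q^2}-C$, where $\mathbf{1}_{m}$ denotes the $m\times m$ all-ones matrix and $C=I_q\otimes(\mathbf{1}_q-I_q)$ is the block-diagonal ``same $x$-coordinate'' matrix. The simultaneous eigenspaces are: the span of the all-ones vector, on which $JJ^{\top}$ acts by $q+q^2-1-(q-1)=q^2$; the space of vectors constant on each block and summing to zero across blocks, of dimension $q-1$, on which $JJ^{\top}$ acts by $q+0-1-(q-1)=0$; and the space of vectors summing to zero within each block, of dimension $q^2-q$, on which $JJ^{\top}$ acts by $q+0-1-(-1)=q$. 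Since $1+(q-1)+(q^2-q)=q^2$, these exhaust the spectrum, so $J$ has singular values $q$, $\sqrt q$, $0$ with multiplicities $1$, $q^2-q$, $q-1$. Hence the eigenvalues of $H$ lie in $\{\pm q,\pm\sqrt q,0\}$, and as $\lambda_1=D=q$ we conclude $\lambda_2=\sqrt q=\sqrt D$.

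Every step is a short linear-algebra calculation, so I do not expect a genuine obstacle; the one point requiring care is the bookkeeping of the eigenspace decomposition — checking that the three candidate subspaces are simultaneously invariant under $\mathbf{1}_{q^2}$, $I_{q^2}$, and $C$, that their dimensions add to $q^2$, and that the degenerate ``vertical line'' case is handled consistently. It is precisely the absence of vertical lines in $R$ that forces two distinct points on a common vertical line to contribute $0$ rather than $1$ to $JJ^{\top}$, and this is what creates the kernel of $J$ and pins down $\lambda_2=\sqrt q$.
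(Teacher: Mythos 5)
Your proof is correct and complete. The paper does not prove this lemma at all --- it simply cites it (Lemma~5.1 in \cite{RVW1}) --- so there is no in-paper argument to compare against; your computation is the standard one and the one the cited source uses. The key steps all check out: $JJ^{\top}=qI+\mathbf{1}_{q^2}-I-C$ with $C=I_q\otimes(\mathbf{1}_q-I_q)$, the three invariant subspaces have dimensions $1$, $q-1$, $q^2-q$ summing to $q^2$, the eigenvalues $q^2$, $0$, $q$ are consistent with $\operatorname{tr}(JJ^{\top})=q^3$ (the total number of incidences), and the eigenvalues of $H$ are $\pm$ the singular values of $J$, giving $\lambda_2=\sqrt q=\sqrt D$. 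Your observation that the exclusion of vertical lines is exactly what produces the $(q-1)$-dimensional kernel is also accurate.
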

\begin{remark}
We prove the main result of this paper (Theorem~\ref{th:main}) for the construction of $(x,y)$ from  Example~\ref{lines-and-points}.
The same result can be proven for a similar (and even somewhat more   symmetric) construction: we can take lines and points  in the \emph{projective} plane over a finite field.
The construction based on the projective plane has spectral properties similar to Lemma~\ref{l:eigenvalue-points-lines}.
\end{remark}

\begin{lemma}[see theorem~3 in \cite{quasi-dist}]\label{l:eigenvalue-euclid-dist}
For the bipartite graph $G^{Euc}_n$ from  Example~\ref{quasi-distance} (a discrete version of the Euclidean distance) the second eigenvalue is equal to $O(\sqrt{q}) = O(\sqrt{D})$.
\end{lemma}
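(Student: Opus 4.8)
The plan is to recognize $G^{Euc}_n$ as a Cayley graph, diagonalize it by characters, and bound the nontrivial eigenvalues via Weil's estimate for Kloosterman sums. Since the bi-adjacency matrix $J$ of $G^{Euc}_n$ is symmetric (as already noted) and invariant under translations of the plane, $J$ is precisely the adjacency matrix of the Cayley graph on the additive group $(\mathbb{F}_q^2,+)$ with the symmetric connection set $S=\{(a,b)\in\mathbb{F}_q^2 : a^2+b^2=r\}$. The spectrum of the full $(2N)\times(2N)$ matrix $H$ consists of the eigenvalues of $J$ together with their negatives, so it suffices to bound the eigenvalues of $J$ other than the degree $D$. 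Fixing a nontrivial additive character $\psi$ of $\mathbb{F}_q$, the eigenvalues of $J$ are indexed by $(u,v)\in\mathbb{F}_q^2$ and equal the character sums
\[
\lambda_{(u,v)}=\sum_{(a,b)\in S}\psi(ua+vb).
\]
For $(u,v)=(0,0)$ this is $|S|=D$, and the standard count of points on the conic $a^2+b^2=r$ gives $D=q-\eta(-1)=\Theta(q)$, where $\eta$ is the quadratic multiplicative character of $\mathbb{F}_q$; so it remains to show $|\lambda_{(u,v)}|=O(\sqrt q)$ for $(u,v)\neq(0,0)$.

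Next I would detect the quadratic constraint by orthogonality, $\mathbf{1}[a^2+b^2=r]=q^{-1}\sum_{t\in\mathbb{F}_q}\psi(t(a^2+b^2-r))$, which yields
\[
\lambda_{(u,v)}=\frac1q\sum_{t\in\mathbb{F}_q}\psi(-tr)\Bigl(\sum_{a\in\mathbb{F}_q}\psi(ta^2+ua)\Bigr)\Bigl(\sum_{b\in\mathbb{F}_q}\psi(tb^2+vb)\Bigr).
\]
The $t=0$ term vanishes whenever $(u,v)\neq(0,0)$, and for $t\neq0$ completing the square turns each inner factor into a Gauss sum, $\sum_{a}\psi(ta^2+ua)=\psi\!\left(-u^2/(4t)\right)\eta(t)\,g$, where $g=\sum_{a}\eta(a)\psi(a)$ satisfies $|g|=\sqrt q$ and $g^2=\eta(-1)q$. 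Substituting, and using $\eta(t)^2=1$ and $g^2/q=\eta(-1)$, collapses the whole expression into a Kloosterman sum:
\[
\lambda_{(u,v)}=\eta(-1)\sum_{t\neq0}\psi\!\left(-rt-\frac{u^2+v^2}{4t}\right).
\]

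Finally, since the coefficient $-r$ of $t$ is nonzero, Weil's bound for Kloosterman sums gives $|\lambda_{(u,v)}|\le 2\sqrt q$ for every $(u,v)\neq(0,0)$; in the degenerate isotropic case $u^2+v^2=0$ the sum reduces further to $\sum_{t\neq0}\psi(-rt)=-1$, which is again $O(\sqrt q)$. Hence every eigenvalue of $J$ apart from $D$ has absolute value $O(\sqrt q)$, so the second eigenvalue of $G^{Euc}_n$ is $O(\sqrt q)=O(\sqrt D)$. The only delicate point is the Gauss-sum bookkeeping — carrying the factors $\eta(-1)$, using $g^2=\eta(-1)q$, and separating out the isotropic case — after which the bound is immediate from Weil's estimate; alternatively, one may simply invoke Theorem~3 of \cite{quasi-dist}, where exactly this computation is carried out.
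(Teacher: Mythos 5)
Your proof is correct, and it is essentially the argument the paper delegates to the citation: the paper gives no proof of this lemma, simply invoking Theorem~3 of \cite{quasi-dist}, where exactly this Cayley-graph diagonalization reducing the nontrivial eigenvalues to Kloosterman sums bounded by Weil's estimate is carried out (including the point count $D=q-\eta(-1)$ and the isotropic case $u^2+v^2=0$). Your derivation is sound as written, relying only on $q$ being odd, which Example~\ref{quasi-distance} guarantees.
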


\begin{remark}\label{r:tensor-product-eigenvalues}
For the tensor product of two graphs $G_1 \otimes G_2$, the eigenvalues can be obtained as pairwise products of the eigenvalues of $G_1$ and $G_2$.  So, for the graph $\widehat{G^{Pl}_n} $ (see p.~\pageref{hat-g-pl}) the eigenvalues are all pairwise products of the graph of incidents lines and points  ${G^{Pl}_n} $ and the complete bipartite graph $K_{M,M}$.
For  ${G^{Pl}_n} $ the first eigenvalue $D$ and the second eigenvalue $\sqrt{D}$. 
The bi-adjacency matrix of $K_{M,M}$ is the $M\times M$ matrix with $1$'s in each cell. 
It is not hard to see that its maximal eigenvalue is $M$ and all other eigenvalues are $0$. Therefore, 
 the first eigenvalue of $\widehat{G^{Pl}_n} $  is equal to $MD$ and the second one is equal to $M\sqrt{D}$. A similar observation is valid for  ${G^{Euc}_n} $.

 \end{remark}

 It is well known that the graphs with a large gap between the first and the second eigenvalues have nice combinatorial properties (vertex expansion, strong connectivity, mixing). 
One version of this property is expressed by  the expander mixing lemma, which was observed by several researchers (see, e.g.,  \cite[lemma 2.5]{HLW1} or  \cite[theorem 9.2.1]{AS1}).
We use a variant of the expander mixing lemma for bipartite graphs (see \cite{bipartite-mixing-lemma}):
\begin{lemma}[Expander Mixing Lemma for bipartite graphs]\label{mixing-lemma}
	Let $G=(L\cup R,E)$ be a regular bipartite graph, where $|L|=|R|=N$ and each vertex has degree $D$. Then for each $A\subseteq L$ and $B\subseteq R$ we have
	\[
	\left|E(A,B)-\frac{D\cdot |A|\cdot|B|}{N} \right|\leq \lambda_2 \sqrt{|A|\cdot|B|},
	\]
	where $\lambda_2$ is the second largest 
	eigenvalue of the adjacency matrix of $G$ and $E(A,B)$ is the number of edges between $A$ and $B$.
\end{lemma}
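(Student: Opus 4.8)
The plan is to express everything in terms of the $N\times N$ bi-adjacency matrix $J$ of $G$ (the matrix appearing in the block form of $H$ above), observing that $E(A,B)=\mathbf{1}_A^{\top}J\,\mathbf{1}_B$, where $\mathbf{1}_A\in\{0,1\}^{|L|}$ and $\mathbf{1}_B\in\{0,1\}^{|R|}$ are the indicator vectors of $A$ and $B$. Two structural facts do all the work. First, since $G$ is $D$-regular and bipartite, the all-ones vector satisfies $J\mathbf{1}=D\mathbf{1}$ and $J^{\top}\mathbf{1}=D\mathbf{1}$, so the normalized all-ones vector is a singular vector of $J$ with singular value $D$. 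Second, the eigenvalues of the $2N\times 2N$ matrix $H$ are exactly $\pm\sigma_1,\dots,\pm\sigma_N$, where $\sigma_1\ge\sigma_2\ge\cdots\ge\sigma_N\ge 0$ are the singular values of $J$; hence $\lambda_1=\sigma_1=D$ and $\lambda_2=\sigma_2$.

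First I would split the indicator vectors into their components parallel and orthogonal to $\mathbf{1}$: write $\mathbf{1}_A=\frac{|A|}{N}\mathbf{1}+\mathbf{a}$ and $\mathbf{1}_B=\frac{|B|}{N}\mathbf{1}+\mathbf{b}$ with $\mathbf{a}\perp\mathbf{1}$ and $\mathbf{b}\perp\mathbf{1}$. Expanding the bilinear form $\mathbf{1}_A^{\top}J\,\mathbf{1}_B$ gives four terms. The parallel--parallel term equals $\frac{|A||B|}{N^{2}}\,\mathbf{1}^{\top}J\,\mathbf{1}=\frac{|A||B|}{N^{2}}\cdot|E|=\frac{D|A||B|}{N}$, since the graph has $DN$ edges. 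The two mixed terms vanish: e.g. $\mathbf{1}^{\top}J\,\mathbf{b}=(J^{\top}\mathbf{1})^{\top}\mathbf{b}=D\,\mathbf{1}^{\top}\mathbf{b}=0$, and symmetrically for $\mathbf{a}^{\top}J\,\mathbf{1}$. Therefore
\[
E(A,B)-\frac{D\,|A|\,|B|}{N}=\mathbf{a}^{\top}J\,\mathbf{b}
\]
exactly, and it remains only to bound the right-hand side.

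For that I would use the singular value decomposition of $J$: since $\mathbf{b}$ is orthogonal to the top singular direction (the all-ones direction), one gets $\|J\mathbf{b}\|\le\sigma_2\|\mathbf{b}\|=\lambda_2\|\mathbf{b}\|$, and then Cauchy--Schwarz gives $|\mathbf{a}^{\top}J\,\mathbf{b}|\le\|\mathbf{a}\|\cdot\|J\mathbf{b}\|\le\lambda_2\|\mathbf{a}\|\,\|\mathbf{b}\|$. Finally, by orthogonality $\|\mathbf{a}\|^{2}=\|\mathbf{1}_A\|^{2}-\frac{|A|^{2}}{N}=|A|-\frac{|A|^{2}}{N}\le|A|$, and likewise $\|\mathbf{b}\|^{2}\le|B|$, which yields $|\mathbf{a}^{\top}J\,\mathbf{b}|\le\lambda_2\sqrt{|A|\,|B|}$, as required.

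There is no genuine difficulty here --- this is the classical argument --- but the one point deserving care is the bookkeeping that identifies the second eigenvalue $\lambda_2$ of the $2N\times 2N$ matrix $H$ with the second singular value $\sigma_2$ of the $N\times N$ matrix $J$, together with the claim that the restriction of $J$ to the orthogonal complement of $\mathbf{1}$ has operator norm exactly $\sigma_2$. This is precisely where $D$-regularity and bipartiteness enter: they guarantee that the largest singular direction is the all-ones direction that we peeled off, so no larger singular value survives on $\mathbf{1}^{\perp}$. (An alternative, if one prefers to invoke a black box, is to apply the standard non-bipartite expander mixing lemma to the $2N$-vertex graph with adjacency matrix $H$ and the sets $A\subseteq L$, $B\subseteq R$; the direct computation above is cleaner and self-contained.)
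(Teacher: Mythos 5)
Your proof is correct and is the standard argument for the bipartite mixing lemma. Note that the paper itself does not prove this statement at all --- it is imported as a black box from the cited reference --- so there is nothing to compare against; your SVD-based derivation (peel off the all-ones component, kill the cross terms using $J\mathbf{1}=J^{\top}\mathbf{1}=D\mathbf{1}$, bound the remainder by $\sigma_2\|\mathbf{a}\|\,\|\mathbf{b}\|$ via Cauchy--Schwarz) is exactly the expected one, and you correctly flag the only delicate point, namely that $D$-regularity lets you take $\mathbf{1}/\sqrt{N}$ as a top singular vector so that $\lambda_2(H)=\sigma_2(J)$ controls $J$ on $\mathbf{1}^{\perp}$.
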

\begin{remark}
In what follows we apply Lemma~\ref{mixing-lemma} to the graphs with a large gap between $D$ and $\lambda_2$. This technique is pretty common. See, e.g., \cite[Theorem~3]{mixing-lines-points} where the Expander Mixing Lemma was applied to the graph from Example~\ref{lines-and-points}.  
Due to technical reasons, we will need to apply the Expander Mixing Lemma not only to the graph $G^{Pl}_n $  from Example~\ref{lines-and-points} and  $G^{Euc}_n $  from Example~\ref{quasi-distance} 
but also to the tensor product of these graphs and a complete bipartite graph, see below.
\end{remark}
In what follows we use a straightforward corollary of the expander mixing lemma:
\begin{corollary}\label{mixing-lemma-large-sets}
(a) Let $G=(L\cup R,E)$ be a graph satisfying the same conditions as in Lemma~\ref{l:graph-complexity-profile} with $\lambda_2=O(\sqrt D)$.  Then for $A\subseteq L$ and $B\subseteq R$
such that $|A| \cdot |B| \ge N^2/D$ we have 
	\begin{equation}
		\label{eq:mixing-easy}
		\left|E(A,B)\right|  = O\left(  \frac{D\cdot |A|\cdot|B|}{N}  \right).
	\end{equation}
	
(b)  Let $G=(L\cup R,E)$ be the same graph as in~(a),  and let $K_{M,M}$ be a complete bipartite graph for some integer $M$. Define the tensor product of these graphs $\hat G := G\otimes K_{M,M}$ (this is a bipartite graph $(\hat L\cup \hat R, \hat E)$  with $|\hat L |= |\hat R| =  N\cdot M$,   with degree $D\cdot M$). 

Then for all subsets $A \subset \hat L$  and $B\subset \hat R$ such that  $|A| \cdot |B| \ge (MN)^2/D$ inequality~\eqref{eq:mixing-easy} holds true.
\end{corollary}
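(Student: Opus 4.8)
The plan is to derive Corollary~\ref{mixing-lemma-large-sets} directly from the Expander Mixing Lemma (Lemma~\ref{mixing-lemma}) by observing that when $|A|\cdot|B|$ is large enough, the ``error term'' $\lambda_2\sqrt{|A|\cdot|B|}$ is dominated by (a constant multiple of) the ``main term'' $D|A||B|/N$. First I would treat part~(a). Applying Lemma~\ref{mixing-lemma} to $G$ gives
\[
\left| E(A,B) - \frac{D\,|A|\,|B|}{N}\right| \le \lambda_2\sqrt{|A|\,|B|} = O(\sqrt{D})\cdot\sqrt{|A|\,|B|}.
\]
The hypothesis $|A|\cdot|B|\ge N^2/D$ rearranges to $\sqrt{|A|\,|B|} \le \frac{D}{N}\sqrt{|A|\,|B|}\cdot\frac{N}{\sqrt{D}\cdot\sqrt{|A||B|}}\cdot\sqrt{|A||B|}$; more cleanly, multiply both sides of $|A||B|\ge N^2/D$ by $D^2/(N^2|A||B|)$ — actually the quickest route is: from $|A||B|\ge N^2/D$ we get $\sqrt{D}\cdot\sqrt{|A||B|} \le \sqrt{D}\cdot\sqrt{|A||B|}\cdot\frac{\sqrt{D}\sqrt{|A||B|}}{N} = \frac{D\,|A||B|}{N}$, where the inequality step used $\frac{\sqrt{D}\sqrt{|A||B|}}{N}\ge 1$, which is exactly the hypothesis. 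Hence the error term is $O\!\left(\frac{D|A||B|}{N}\right)$, and combining with the triangle inequality, $|E(A,B)| \le \frac{D|A||B|}{N} + O\!\left(\frac{D|A||B|}{N}\right) = O\!\left(\frac{D|A||B|}{N}\right)$, which is~\eqref{eq:mixing-easy}.

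For part~(b), the point is simply to check that $\hat G = G\otimes K_{M,M}$ satisfies the hypotheses needed to rerun the argument of part~(a) with the parameters $\hat N := MN$, $\hat D := DM$. The number of vertices on each side is $\hat N = MN$ and the degree is $\hat D = DM$, as stated. By Remark~\ref{r:tensor-product-eigenvalues}, the second eigenvalue of $\hat G$ is $\lambda_2(\hat G) = M\cdot\lambda_2(G) = M\cdot O(\sqrt{D}) = O(M\sqrt{D})$; since $\sqrt{\hat D} = \sqrt{DM}$, we have $M\sqrt{D} = \sqrt{M}\cdot\sqrt{DM} = \sqrt{M}\cdot\sqrt{\hat D}$, so $\lambda_2(\hat G)$ is \emph{not} $O(\sqrt{\hat D})$ in general — but that is fine, because we will instead feed the raw bound of Lemma~\ref{mixing-lemma} the hypothesis $|A|\cdot|B|\ge \hat N^2/\hat D = (MN)^2/(DM) = M N^2/D$, which is exactly the stated condition $|A||B|\ge (MN)^2/D$ up to the identity $(MN)^2/D = M\cdot(MN^2/D)$; wait — $(MN)^2/D = M^2N^2/D$, and $\hat N^2/\hat D = M^2N^2/(DM) = MN^2/D$, so the stated hypothesis $|A||B|\ge (MN)^2/D = M^2N^2/D$ is \emph{stronger} than $|A||B|\ge \hat N^2/\hat D = MN^2/D$ by a factor of $M$, and this extra factor is precisely what compensates for $\lambda_2(\hat G)$ being larger than $\sqrt{\hat D}$ by a factor $\sqrt{M}$. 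Concretely, Lemma~\ref{mixing-lemma} gives error term $\lambda_2(\hat G)\sqrt{|A||B|} = O(M\sqrt{D})\sqrt{|A||B|}$, and from $|A||B|\ge M^2N^2/D$ we get $\frac{\sqrt{D}\sqrt{|A||B|}}{MN}\ge 1$, hence $M\sqrt{D}\sqrt{|A||B|} \le M\sqrt{D}\sqrt{|A||B|}\cdot\frac{\sqrt{D}\sqrt{|A||B|}}{MN} = \frac{D|A||B|}{N} = \frac{\hat D\,|A||B|}{\hat N}$, so again the error term is dominated by the main term and~\eqref{eq:mixing-easy} follows for $\hat G$.

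The only genuinely delicate point — and the one I would be most careful about when writing it up — is the bookkeeping with the tensor-product parameters: making sure that the threshold $(MN)^2/D$ in the statement, the actual ``balanced'' threshold $\hat N^2/\hat D = MN^2/D$, and the inflated eigenvalue $\lambda_2(\hat G) = O(M\sqrt D)$ all fit together so that the $\sqrt M$ discrepancy in the spectral gap is exactly absorbed by the $\sqrt M$-fold strengthening of the size hypothesis. Everything else is a one-line consequence of Lemma~\ref{mixing-lemma} and Remark~\ref{r:tensor-product-eigenvalues}. I would present part~(a) in full and then note that part~(b) is obtained verbatim by substituting $\hat N, \hat D, \lambda_2(\hat G)$ and using the hypothesis $|A||B| \ge (MN)^2/D$ in place of $N^2/D$.
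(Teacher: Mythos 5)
Your proof is correct and follows essentially the same route as the paper: apply Lemma~\ref{mixing-lemma} and observe that under the size hypothesis the error term $\lambda_2\sqrt{|A||B|}$ is dominated by the main term $D|A||B|/N$. In fact, for part~(b) the paper only says ``the rest of the proof is similar to case~(a),'' whereas you explicitly verify the one non-trivial point — that $\lambda_2(\hat G)=O(M\sqrt D)$ exceeds $\sqrt{\hat D}$ by a factor $\sqrt M$ and that the stronger threshold $(MN)^2/D$ exactly absorbs it — which is a welcome clarification, though the exploratory false starts in part~(a) should be cleaned up before writing it down.
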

\begin{proof}
(a) From Lemma~\ref{mixing-lemma} it follows that
	\begin{equation}
		\label{eq:mixing}
		\left|E(A,B)  \right| \le \frac{D\cdot |A|\cdot|B|}{N}  + \lambda_2\sqrt{|A|\cdot|B|}
	\end{equation}
	Assuming that $\lambda_2=O(\sqrt D)$ and $|A|\cdot|B|\ge N^2/D$ we conclude that the first term on the right-hand side of \eqref{eq:mixing} is dominating:
	\[
		\lambda_2\sqrt{|A|\cdot|B|}=O\left(\frac{D\cdot|A|\cdot|B|}N\right)
	\]
	Given this and Lemma~\ref{mixing-lemma} we obtain \eqref{eq:mixing-easy}.

\smallskip

(b)  Let us recall that the eigenvalues of $G\otimes K_{M,M}$  are pairwise products of 
the eigenvalues of $G$ and $K_{M,M}$. Therefore, the maximal eigenvalue of $G\otimes K_{M,M}$ is $MD$ and the second one is $O(M\sqrt{D})$, see Remark~\ref{r:tensor-product-eigenvalues}. The rest of the proof is similar to the case~(a).
\end{proof}

Now we translate the combinatorial property of  \emph{mixing} in the information-theoretic language. We show 
that a large spectral gap in a graph implies some inequality for Kolmogorov complexity that is valid for each pair of adjacent vertex in this graph. We do it in the next lemma, which  is the main technical ingredient of the proof of our main result.
 \begin{lemma}\label{l:graph-spectral-gap}
Let $G=(L\cup R,E)$ be a bipartite graph satisfying the same conditions as in Lemma~\ref{l:graph-complexity-profile}, with  $|L|=|R|=N=2^{n+O(1)}$ and degree $D=O(\sqrt N)$. Assume also that 
the second largest eigenvalue of this graph is $\lambda_2=O(\sqrt{D})$. 
Let $K_{M,M}$ be a complete bipartite graph for some $M=2^m$. Define the tensor product of these graphs $\hat G := G\otimes K_{M,M}$. 

For each edge $(x,y)$ in $\hat G$ and for all $w$ one of the following inequalities is true: either 
\begin{equation}\label{eq:graph-spectral-gap-cond}
 C(x|w)+C(y|w)\le 1.5n+2m + O(\log k)
\end{equation}
or
\begin{equation}\label{eq:graph-spectral-gap}
I(x:y|w)\ge0.5n-O(\log k),
\end{equation}
where $k=n+m$. 
\end{lemma}
\begin{remark}\label{r:spectral-examples}
Note that  Lemma~\ref{l:graph-spectral-gap} applies to the graphs from Example~\ref{lines-and-points} and Example~\ref{quasi-distance} due to Lemma~\ref{l:eigenvalue-points-lines} and Lemma~\ref{l:eigenvalue-euclid-dist} respectively.
\end{remark}
\begin{proof}
Denote $a=C(x|w)$ and $b=C(y|w)$. If $a+b \le 1.5n+2m + O(\log k)$, then \eqref{eq:graph-spectral-gap-cond} is true and we are done. In what follows we assume that $a+b> 1.5n+2m + c \log k$
(for some constant $c$ to be specified later), and our aim is to prove \eqref{eq:graph-spectral-gap}.

 Let $A$  be the set of all $x'\in L$ such that $C(x'|w)\le a$
and  $B$ be the set of all $y'\in R$ such that $C(y'|w)\le b$.  Note that by definition $A$ contains $x$ and $B$ contains $y$.
In what follows we show that for all  pairs $(x',y')\in (A\times B) \cap E$ we have  $C(x',y')\lep a+b-0.5n$.

	\smallskip

	\noindent
	\emph{Claim~1.} We have $\big|\log  |A| - a\big| \le O(\log k)$ and $\big|\log  |B| - b\big| \le O(\log k)$.

	\smallskip
	
	\noindent
	\emph{Proof of the claim~1:} 
	We start with a proof of the upper bounds for the cardinalities of $A$ and $B$. Each element of $A$ can be obtained from $w$ with a programs (description) of length at most $a$. Therefore, the number of elements in $A$ is not greater than the number of such descriptions, which is at most $1+2+\ldots+2^a<2^{a+1}$. Similarly, the number of elements in $B$ is less than $2^{b+1}$.

	Now we proceed with the lower bounds. Given $w$ and an integer number $a$ we can take all programs of size at most $a$, apply them to $w$ and run in parallel. As  some programs converge, we will discover one by one all elements in $A$ (though we do not know when the last stopping program terminates, and when the last element of $A$ is revealed). The element $x$ must appear in this enumeration. Therefore, we can identify it given its position in this list, which requires only $\log|A|$ bits. Thus, we have
	\[
	C(x|w) \le \log|A| + O(\log k)
	\]
	(the logarithmic additive term is needed to specify the binary expansion of $a$).
	On the other hand, we know that $C(x|w)=a$. It follows that 
	$
	\log  |A| \ge {a- O( \log k)},
	$
	and we are done.  The lower bound $\log  |B| \ge {b- O( \log k)}$ can be proven in a similar way.

	\smallskip

	\noindent
	\emph{Claim~2.} The number of edges between $A$ and $B$ is rather small:
	\[
	|(A\times B) \cap E|  \le O\left(\frac{D\cdot |A| \cdot |B|}{N} \right).  
	\]

	\noindent
	\emph{Proof of the claim~2:} By Claim~1 we have $|A| = 2^{a + O(\log k)}$ and $|B| = 2^{b + O(\log k)}$. Since $a+b>1.5n+2m+c\log k$ we obtain
	\[
	|A|\cdot|B|=2^{a+b+c\log k\pm O(\log k)}>2^{1.5n+2m}=(NM)^2/D  
	\]	
	(here we fix the constant $c$: the term $O(\log k)$ can be negative but it should be compensated by the positive term $c\log k$).
	Hence, we can apply Corollary~\ref{mixing-lemma-large-sets}~(b) and obtain the claim.

	\smallskip

	\noindent
	\emph{Claim~3.} For all pairs $(x',y')\in (A\times B) \cap E$ we have 
	\[
	 C(x',y'|w) \le \log|E(A,B)| + O(\log k).
	\]

	\noindent
	\emph{Proof of the claim~3:} Given a string $w$ and the integer numbers $a$, $b$, we can run in parallel all programs of length at most $a$ and $b$ (applied to $w$) and reveal one by one all elements of $A$ and $B$. If we have in addition the integer number $n$, then we can construct the graph $G$ and enumerate all the edges between $A$ and $B$ in the graph $G$. The pair $(x',y')$ must appear in this enumeration. Therefore, we can identify it by its ordinal number in this enumeration. Thus, 
	\[
	C(x',y'|w)\le\log|E(A,B)|+O(\log k),
	\]
	where the logarithmic term involves the binary expansions of $n$, $a$, and $b$.

	\smallskip

	Now we can finish the proof of the lemma.
	By claim~3, we have
	 \[
	 C(x',y'|w)\le\log|E(A,B)|+O(\log k)
	\]
	for all pairs $x',y'\in(A\times B)\cap E$. By using claim~2, we obtain 
	\[
	C(x',y'|w)\le\log D+\log|A|+\log|B|-\log N+O(\log k).
	\]
	With claim~1 this rewrites to
	\begin{equation}\label{eq:xy|w<a+b-0.5n}\nonumber
	C(x',y'|w)\le a+b-0.5n + O(\log k).
	\end{equation}
	Now we apply this inequality to the initial  $x$ and $y$:
	\[
	\begin{array}{rcl}
	I(x:y|w)&\eqp&C(x'|w)+C(y'|w)-C(x',y'|w)\\
	           &\gep& a+b-(a+b-0.5n)+O(\log k)\eqp 0.5n.
	\end{array}
	\]
\end{proof}
\begin{remark}\label{r:lines-points-history}
The property of ``extractability'' of the mutual information for neighboring vertices in the graph from Example~\ref{lines-and-points} 
was studied  in \cite{muchnik1998common,cmrsv}. Instead of the expander mixing lemma,  these papers used the fact that this graph has no cycles of length $4$ 
(since two  lines  cannot intersect at two different points). With this technique, it was proven in \cite{cmrsv}  that for the same $x$ and $y$ as above and for every $w$
\begin{equation}\label{eq:chernov-et-al}
C(w) \le 2C(x|w) +2C(y|w)+O(\log (|x|+|y|+|w|)).
\end{equation}
(Technically, \cite{cmrsv} only considered the case $m=0$; however, the same argument applies to an arbitrary $m$, \cite{zyavgarov}.)
This result is incomparable with Lemma~\ref{mixing-lemma-large-sets}. On the one hand, \eqref{eq:chernov-et-al} is weaker then \eqref{eq:graph-spectral-gap}. 
Indeed, \eqref{eq:graph-spectral-gap} together with  $I(x:y)\eqp0.5n$ imply
 \begin{equation}
 \nonumber
C(w) \le C(x|w) +C(y|w)+O(\log (|x|+|y|+|w|))
\end{equation}
(without factors $2$ in the right-hand side).
On the other hand, \eqref{eq:chernov-et-al} is true for every $w$, while  \eqref{eq:graph-spectral-gap} is proven only under the assumption that~\eqref{eq:graph-spectral-gap-cond} is false.
 
 The bound \eqref{eq:chernov-et-al} is not strong enough to prove our main theorem.
\end{remark}

\subsection{Information Inequalities for a Secret Key Agreement}\label{ss:ineq}

In this section we prove some information-theoretic inequalities that hold true for the objects involved in a communication protocol: the inputs given to Alice and Bob, the transcript of the communication, and the final result computed by Alice and Bob.

The intuitive meaning of the argument below can be explained as follows.
We would like to reduce every transcript of a communication protocol to a ``natural'' form, where the complexity
profile for the triple
\[
\langle \text{Alice' data}, \text{Bob's data}, \text{transcript}\rangle
\]
looks as it is shown in the diagram in Fig.~\ref{x-y-t-natural}.  The key property of this profile is that the triple mutual information
\[
\begin{array}{l}
I( \text{Alice' data} :  \text{Bob's data} : \text{transcript} ) =\\
I( \text{Alice' data} :  \text{Bob's data} ) - I( \text{Alice' data} :  \text{Bob's data}\ |\ \text{transcript} ) 
\end{array}
\]
is equal to $0$ (as usual, within a logarithmic precision). In Shannon's entropy setting, a similar property would mean that
the \emph{external information complexity} of the protocol is equal to its \emph{internal information complexity}.
In other words, the quantity  of information learned from the transcript by the eavesdropper 
is equal to  the sum of information quantities  learned from the transcript by Alice and Bob.

In Fig.~\ref{x-y-t-natural}, the transcript is split into two components, $t_A$ and $t_B$. These components are independent with each other (the mutual information $I(t_A:t_B)$ is negligibly small), one of them is independent of Alice' data and the other one is independent of Bob's data. Intuitively, one of these component should consist of all messages of Alice, and the other one should consist of all messages of Bob.
			\begin{figure}[h]
				\centering
				\begin{tikzpicture}[scale=0.70]
				  \draw \firstcircle node[above left] {$\ge 0$};
				  \draw \secondcircle node [above right] {$\ge 0$};
				  \draw \thirdcircle node [below] {$0$};
				  \node {$ 0$};
				  \node at (90:1.55) {$ \ge 0 $};
				  \node at (210:1.75) {$t_A$};
				  \node at (330:1.75) {$t_B$};
				  \node at (150:5.75) {\textsl{Alice' data}};
				  \node at (30:5.75) {\textsl{Bob's data}};
				  \node at (270:4.75) {\textsl{transcript} = $\langle t_A,t_B\rangle$};
				\end{tikzpicture}
				\caption{Complexity profile for Alice' and Bob's data and a ``natural''  communication transcript.}
				\label{x-y-t-natural}
			\end{figure}
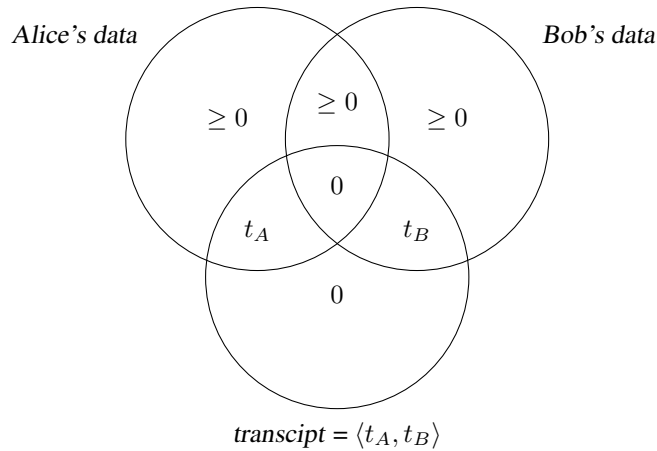

The diagram in Fig.~\ref{x-y-t-natural} represents well enough the situation for ``natural'' examples of communication protocols. Indeed,
in naturally constructed  examples, each message of Alice can be chosen in such a way that it has virtually 
no mutual information with Bob's input (even given all the previous messages of the protocol as a condition). 
Similarly, each message of Bob can be chosen in such a way that it has virtually  no mutual information with Alice' input 
(again, given all the previous messages of the protocol as a condition). 

In general case, the transcript of the protocol cannot be represented as shown in Fig.~\ref{x-y-t-natural}. Indeed, Alice may put in her messages some information that is already known to Bob, and vice-versa. However, it seems plausible that the protocols which are \emph{not} natural (in the sense explained above) are also not optimal in the sense of communication complexity. Indeed, if Alice and Bob send to each other excessive information, then it looks believable that we can ``compress'' excessively long messages of Alice and Bob and, therefore, make the communication complexity smaller. 
The aim of this section is to formulate and prove a more precise version of this intuitive guess.

We will prove that the transcript $t$ of \emph{every} communication protocol can be reduced in some sense to the form shown in Fig.~\ref{x-y-t-natural}. More precisely, we will extract from a generic transcript $t$ a digital fingerprint $t'$ so that the complexity profile of the triple $\langle  \text{Alice' data}, \text{Bob's data},t'\rangle$ looks as shown in Fig.~\ref{x-y-t-natural}. Moreover, this ``extraction''  preserves all essential information of the original transcript: Alice and Bob can retrieve  from the new $t'$ (almost) the same information as they could obtain from the original transcript $t$. 
The formal version of this statement is given in Lemma~\ref{l:simplifying-t}(e), see below.

The reduction of an arbitrary communication transcript $t$ to its compressed version $t'$ is quite non-trivial, especially for protocols with an unbounded number of rounds. 
We use in the proof two important technical tools: 
Muchnik's technique of conditional descriptions (see Proposition~\ref{p:muchnik}) and the lemma on the non-negativity of the triple mutual information for communication transcripts (see Lemma~\ref{l:triple-info}).

\smallskip 

The following lemma was proven in \cite{RZ1} (see also a similar result proven for Shannon entropy in \cite{kaced}):
\begin{lemma}[\cite{RZ1}]\label{l:triple-info}
Let us consider a deterministic communication protocol  with two parties. Denote by $x$ and $y$  the inputs of the parties, and denote  by  $t=t(x,y)$ the transcript of the communication between the parties. Then 
$
I(x:y:t) \gep 0,
$
see Fig.~\ref{x-y-t}.
\end{lemma}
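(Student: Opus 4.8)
The plan is to establish the equivalent statement $I(x:y\,|\,t)\lep I(x:y)$ --- that conditioning on the transcript of a deterministic protocol cannot raise the mutual information of the inputs by more than a logarithmic term --- since $I(x:y:t)=I(x:y)-I(x:y\,|\,t)$ by the Kolmogorov--Levin identities.

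First I would strip off the algebra. Because $t$ is computed from $(x,y)$ by the fixed, $O(1)$-size protocol, we have $C(t\,|\,x,y)\eqp 0$, hence $C(x,y,t)\eqp C(x,y)$. Expanding the definition of $I(x:y:t)$ and applying Kolmogorov--Levin to each joint complexity (so $C(x,t)\eqp C(x)+C(t\,|\,x)$ and $C(y,t)\eqp C(y)+C(t\,|\,y)$) everything collapses to
\[
I(x:y:t)\ \eqp\ C(t)-C(t\,|\,x)-C(t\,|\,y),
\]
so it suffices to prove $C(t\,|\,x)+C(t\,|\,y)\lep C(t)$: the transcript can be split into an ``Alice part'' and a ``Bob part'' in a way compatible with who knows what.

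To exhibit such a split I would use the \emph{replay} property of communication protocols. Let $t_A$ and $t_B$ be the concatenations of Alice's and of Bob's messages; since whose turn it is and where each message ends are determined by the transcript so far, $t$ and the pair $(t_A,t_B)$ are computable from one another, so $C(t)\eqp C(t_A,t_B)$. Moreover, given her own input $x$ and \emph{all} of Bob's messages $t_B$, Alice can re-run her side of the protocol and recover all of her own messages, hence the whole transcript: $C(t_A\,|\,x,t_B)\eqp 0$, and symmetrically $C(t_B\,|\,y,t_A)\eqp 0$. These give $C(t\,|\,x)\eqp C(t_B\,|\,x)$ and $C(t\,|\,y)\eqp C(t_A\,|\,y)$. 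A one-step computation then already yields $I(x:y\,|\,t)\lep I(x:y\,|\,t_B)$ up to $O(\log n)$: indeed $I(x:y\,|\,t_B)-I(x:y\,|\,t)\eqp I(x:t_A\,|\,t_B)+I(y:t_A\,|\,t_B)-I((x,y):t_A\,|\,t_B)$, and by the chain rule $I((x,y):t_A\,|\,t_B)\eqp I(x:t_A\,|\,t_B)+I(y:t_A\,|\,x,t_B)\eqp I(x:t_A\,|\,t_B)$ because $C(t_A\,|\,x,t_B)\eqp 0$, leaving the non-negative remainder $I(y:t_A\,|\,t_B)\gep 0$.

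What remains --- bounding $I(x:y\,|\,t_B)$ by $I(x:y)$ --- is the heart of the matter, and here the round structure must genuinely be used, since $t_B$ alone is \emph{not} computable from $y$ (Bob's messages are interleaved with and react to Alice's). The natural route is the induction of \cite{kaced}: process the messages $m_1,m_2,\dots$ in order, using at the $i$-th step that the sender's next message is a function of the sender's input $p_i$ and the history $t_{<i}$ only, so $C(m_i\,|\,p_i,t_{<i})\eqp 0$ and therefore $I(m_i:q_i\,|\,p_i,t_{<i})\eqp 0$ for the other party's input $q_i$; each step is then an instance of the one-step computation above, and the relevant ``learned information'' stays non-increasing. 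I expect the main obstacle to be precisely the bookkeeping of the logarithmic error over a possibly large number of rounds: a literal round-by-round induction loses $\Theta(\log n)$ per round, which is unacceptable for the $O(\log n)$ bound claimed. This is the delicate point that must be handled as in \cite{RZ1}: rather than conditioning on every prefix, one works with the single complexity identity for the whole tuple $\langle x,y,m_1,\dots,m_r\rangle$ --- all of $m_1,\dots,m_r$ and the round count $r$ being computable from $(x,y)$ via the $O(1)$-size protocol, so that only a bounded number of Kolmogorov--Levin invocations enters --- and uses the non-negativity of the per-message quantities $I(m_i:q_i\,|\,p_i,t_{<i})$ in aggregate, keeping the total slack at $O(\log n)$.
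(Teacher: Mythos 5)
Your reduction is sound as far as it goes: the identity $I(x:y:t)\eqp C(t)-C(t|x)-C(t|y)+C(t|x,y)$ together with $C(t|x,y)\eqp 0$ correctly turns the lemma into $C(t|x)+C(t|y)\lep C(t)$; the replay property $C(t_A|x,t_B)\eqp 0$ is valid; and the one-step computation $I(x:y|t)\lep I(x:y|t_B)$ is correct and uses only a constant number of Kolmogorov--Levin applications. But note what that step actually achieves: since $I(x:y|t_B)\eqp I(x:y)+I(x:t_B|y)-I(x:t_B)$, the remaining goal $I(x:y|t_B)\lep I(x:y)$ is exactly $I(x:y:t_B)\gep 0$ --- a statement of the same shape as the one you set out to prove, for an object $t_B$ that is, if anything, less structured than $t$ (it is computable from neither input alone, and the rectangle property of full transcripts is lost). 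You have peeled off one half-round at a cost of $O(\log n)$, and the natural continuation is to keep peeling --- which is precisely the round-by-round induction you then fall back on.

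That last step is where the genuine gap lies. You correctly diagnose that a literal induction over messages loses $\Theta(\log n)$ per round, but the proposed remedy --- working with ``the single complexity identity for the whole tuple $\langle x,y,m_1,\dots,m_r\rangle$ so that only a bounded number of Kolmogorov--Levin invocations enters'' --- does not hold up. One invocation gives $C(x,y,m_1,\dots,m_r)\eqp C(x,y)$, but to use the per-message quantities $I(m_i:q_i\,|\,p_i,t_{<i})$ ``in aggregate'' you must expand some joint complexity by the chain rule into $r$ separate terms, and each link of that chain carries its own $O(\log n)$ error; nothing in your sketch prevents these from summing to $\Theta(r\log n)$, which is unacceptable for protocols with an unbounded number of rounds (and the lemma carries no bound on $r$). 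The paper is explicit about exactly this obstacle in the Remark following Lemma~\ref{l:simplifying-t}: message-by-message compression fails for unbounded-round protocols, and Lemma~\ref{l:triple-info} is imported from \cite{RZ1} precisely as the tool that treats the transcript in one piece. Indeed, the paper gives no proof of this lemma at all --- it cites \cite{RZ1} --- so the content you would need to supply (a global argument, e.g.\ a counting argument over the protocol tree that avoids any per-round decomposition) is exactly the part your proposal leaves as a gesture.
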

			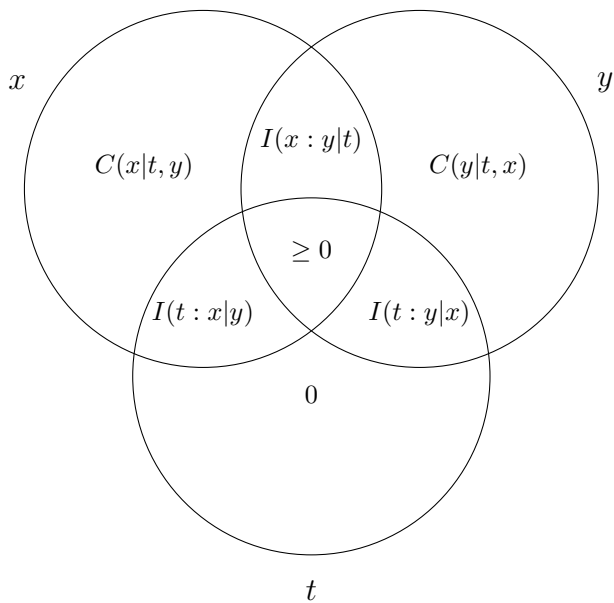
\begin{figure}[h]
				\centering
				\begin{tikzpicture}[scale=0.95]
				  \draw \firstcircle node[above left] {$C(x|t,y)$};
				  \draw \secondcircle node [above right] {$C(y|t,x)$};
				  \draw \thirdcircle node [below] {$0$};
				  \node {$\geq 0$};
				  \node at (90:1.55) {$ I(x:y|t)$};
				  \node at (210:1.75) {$I(t:x|y)$};
				  \node at (330:1.75) {$I(t:y|x)$};
				  \node at (150:4.75) {\Large $x$};
				  \node at (30:4.75) {\Large $y$};
				  \node at (270:4.75) {\Large $t$};
				\end{tikzpicture}
				\caption{Complexity profile for  inputs $x,y,$ and the transcript $t$ of a communication protocol with given inputs. Note that $C(t|x,y)$ is negligibly small (we can compute $t$ by simulating the communication protocol) and  $I(x:y|t)\lep I(x:y)$ due to Lemma~\ref{l:triple-info}.}\label{x-y-t}
			\end{figure}
\begin{proposition}[Muchnik's theorem on conditional descriptions, \cite{muchnik}]\label{p:muchnik}
(a) Let $a$ and $b$ be arbitrary strings of length at most $n$. Then there exists a string p of length $C(a|b)$ such that
\begin{itemize}
\item	$C(p|a) = O(\log n)$, 
\item $C(a|p,b)=O(\log n)$.
\end{itemize}
(b) Let $a, b_1, b_2$ be arbitrary strings of length at most $n$. Then there exist  strings $q_1$, $q_2$  of length $C(a|b_1)$ and   $C(a|b_2)$ respectively such that
\begin{itemize}
\item	$C(q_j|a) = O(\log n)$,
\item $C(a|b_j, q_j) =O(\log n)$ 
\end{itemize}
for $j = 1,2$; we may also require  that one of the strings $q_1, q_2$ is a prefix of another one.
As usual, the constants in $O(\log n)$-notation do not depend on $n$.
\end{proposition}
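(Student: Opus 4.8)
The plan is to derive the proposition from a purely combinatorial lemma about the existence of a suitable bipartite ``covering'' graph, and then to read off the three conclusions by routine descriptive-complexity bookkeeping. I will describe part~(a) in full; part~(b) is a refinement of the same argument. Throughout fix $n$, and note that if $C(a\mid b)=O(\log n)$ part~(a) is trivial (take $p$ to be a shortest program printing $a$ from $b$), so assume $m:=C(a\mid b)$ is large.

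The combinatorial input is: \emph{for every $n$ and every $m\le n+O(1)$ there is a bipartite graph $G_{n,m}$ with left part $\{0,1\}^{\le n}$, right part $\{0,1\}^{m}$ and left-degree $D=\poly(n)$ such that every set $A$ of left vertices with $|A|\ge 2^{m}/n^{c}$ has $|\Gamma(A)|\ge \tfrac34\cdot 2^{m}$}, where $\Gamma(A)$ is the union of the neighbourhoods and $c$ is a constant fixed below. Such a graph exists by the probabilistic method: for a random $D$-left-regular graph the bad event (some $A$ of size exactly $2^m/n^c$ together with a forbidden set $B$ of $\tfrac14\cdot 2^m$ right vertices such that all $|A|D$ edge-slots of $A$ avoid $B$) has probability at most $2^{(n+1)|A|}\cdot 2^{2^m}\cdot(3/4)^{|A|D}$, which is below $1$ as soon as $D=\poly(n)$, because $|A|D=2^mD/n^c$ dominates both $(n+1)|A|$ and $2^m$. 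Taking $G_{n,m}$ lexicographically first with this property makes it computable from $n,m$, so $C(G_{n,m})=O(\log n)$.

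For part~(a), let $S:=\{a'\in\{0,1\}^{\le n}:C(a'\mid b)\le m\}$, so $|S|<2^{m+1}$ and $a\in S$. Call a right vertex \emph{$S$-heavy} if more than $\lambda:=\poly(n)$ vertices of $S$ are adjacent to it; then there are fewer than $|S|D/\lambda<\tfrac14\cdot2^{m}$ $S$-heavy vertices once $\lambda\ge 8D$. Put $S':=\{a'\in S:\text{all neighbours of }a'\text{ are }S\text{-heavy}\}$. Then $\Gamma(S')$ consists only of $S$-heavy vertices, so $|\Gamma(S')|<\tfrac34\cdot2^m$, whence the covering property forces $|S'|<2^m/n^{c}$. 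But $S'$ is enumerable given $b,m,n$ (enumerate $S$; then ``$y$ is $S$-heavy'' is enumerable, and so is the finite conjunction ``all neighbours of $a'$ are $S$-heavy''), so every $a'\in S'$ has $C(a'\mid b)\le\log|S'|+O(\log n)<m-c\log n+O(\log n)$, which is $<m$ for $c$ larger than the constant in the $O(\log n)$ term. Hence $a\notin S'$: the vertex $a$ has a neighbour $p$ that is not $S$-heavy, and we take $p$ lexicographically least with this property. Then $|p|=m=C(a\mid b)$; since $p$ is a neighbour of $a$, it is named given $a$ by $m,n$ ($O(\log n)$ bits) plus its index among the $D$ neighbours of $a$, so $C(p\mid a)=O(\log n)$; and given $p,b,m,n$ one enumerates $S$, records which of its elements are adjacent to $p$ (at most $\lambda$ of them), and names $a$ by its rank in this sublist, so $C(a\mid p,b)=O(\log n)$.

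For part~(b), after renaming assume $C(a\mid b_1)\le C(a\mid b_2)=:m_2$ and $m_1:=C(a\mid b_1)$. One wants a single codeword for $a$ whose length-$m_2$ prefix $q_2$ is non-heavy for $S_{b_2}$ and whose length-$m_1$ prefix $q_1$ is non-heavy for $S_{b_1}$; then $q_1$ is a prefix of $q_2$, and each $q_j$ inherits properties~(i)--(iii) exactly as in part~(a). This calls for a graph with right part $\{0,1\}^{\le L}$, $L=n+O(1)$, in which each left vertex carries $\poly(n)$ codewords in $\{0,1\}^{L}$, its level-$\ell$ neighbourhood being the length-$\ell$ prefixes of these codewords (a union bound over the $L+1$ levels keeps the covering property at every level with $D=\poly(n)$), and additionally having, for the two relevant levels and sets \emph{simultaneously}, a non-heavy codeword; arranging this is more delicate than part~(a) — a naive union bound over sets is too weak — and is exactly Muchnik's combinatorial lemma for several conditions. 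The step I would be least cavalier about is precisely this combinatorial lemma; for the purposes of the paper I would simply cite \cite{muchnik} (and the later expositions of it) both for the existence of the level-by-level covering graphs used above and for the multi-condition refinement, and devote the write-up to the reduction just sketched.
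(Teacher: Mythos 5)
The paper never proves Proposition~\ref{p:muchnik}: it is imported as a known result, with \cite{muchnik} cited for the original argument and \cite{buhrman2001resource}, \cite{musatov2011variations} for the extractor-based proofs, so there is no in-paper argument to compare yours against. On its own terms, your part~(a) is correct and complete. It is the standard covering-graph (disperser-type) proof: the bound $2^{(n+1)|A|}\cdot 2^{2^m}\cdot(3/4)^{|A|D}<1$ does hold for $D=\Theta(n^{c})$; the set $S'$ of left vertices all of whose neighbours are $S$-heavy is enumerable from $(b,m,n)$ and has size $<2^m/n^c$; the contradiction with $C(a\mid b)=m$ is not circular, since the constant $c$ affects only the $O(1)$ part of the description of the enumeration procedure and not the coefficient of $\log n$; and the final accounting for $C(p\mid a)$ (index among $D=\poly(n)$ neighbours) and $C(a\mid p,b)$ (rank among the at most $\lambda$ elements of $S$ adjacent to the non-heavy $p$) is right.

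Part~(b), however, has a genuine gap, which you yourself flag. Running the part-(a) argument at level $m_1$ against $S_{b_1}$ and at level $m_2$ against $S_{b_2}$ only shows that $a$ has \emph{some} codeword whose $m_1$-prefix is non-heavy for $S_{b_1}$ and \emph{some} codeword whose $m_2$-prefix is non-heavy for $S_{b_2}$; nothing rules out that these two sets of good codewords are disjoint, in which case no single codeword yields the required pair $q_1,q_2$ with $q_1$ a prefix of $q_2$. Closing this requires upgrading ``not all codewords of $a$ are bad at level $\ell$'' to ``only a small fraction are bad at level $\ell$'', which needs an extractor-type edge-counting property of the graph rather than the neighbourhood-covering property your union bound delivers; that is precisely the delicate core of Muchnik's theorem with several conditions. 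Citing \cite{muchnik} for it is legitimate --- the paper does the same for the whole proposition --- but as written your text proves (a) and merely reduces (b) to the statement being proved.
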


\begin{remark}
In Proposition~\ref{p:muchnik}(a) the string $p$ can be interpreted as an almost (up to $O(\log n)$) shortest description of $a$ conditional on $b$ that satisfies a nice additional  property: it can be easily computed given $a$. In other words, $p$ is a ``digital fingerprint'' of $a$ (it is easy to obtain $p$ from $a$) such that we can reconstruct $a$ given this fingerprint and~$b$.

Similarly, in Proposition~\ref{p:muchnik}(b) the strings $q_1$ and $q_2$ can be interpreted as almost shortest descriptions of $a$ given $b_1$ and $b_2$ respectively.
The non-trivial part of (b) is the requirement that one of the strings $q_1,q_2$ (the shorter one) is a prefix of the other (longer) one. In particular, if $C(a|b_1)=C(a|b_2)$, then $q_1=q_2$, and we can use \emph{one and the same} shortest program to transform $b_1$ or $b_2$ into $a$.

Since $p$  in Proposition~\ref{p:muchnik}(a) is an almost optimal description of $a$ given $b$, we can conclude that $p$  is almost incompressible given $b$, i.e., $C(p|b) = |p|+O(\log n)$. Similar properties hold true for $q_1$and $q_2$ in Proposition~\ref{p:muchnik}(b): these strings are almost incompressible conditional on $b_1$ and $b_2$ respectively. We will use this observation in the proof of the next lemma.

This proposition can be proven by an ad~hoc argument, \cite{muchnik}, or by using the generic technique of extractor, as shown in \cite{buhrman2001resource}. For a discussion of different proofs of this theorem we refer the reader to \cite{musatov2011variations}.
\end{remark}
Now we can prove the main technical result of this section.
\begin{lemma}\label{l:simplifying-t}
Assume a deterministic communication protocol for  two parties on inputs $x$ and $y$ gives transcript $t$ and denote $n=C(x,y,t)$.

\smallskip

\noindent
(a) $C(t|x,y) = O(\log n)$.

\smallskip

\noindent
(b) $C(t|x) \eqp I(t:y|x) $.

\smallskip

\noindent
(c) $C(t|y) \eqp I(t:x|y) $.

\smallskip

\noindent
(d) $ C(t|x) + C(t|y) \eqp  I(t:x|y) + I(t:y|x)+ O(\log n) \lep C(t)  $.

\smallskip

\noindent
(e) There exist  $t_x$ and $t_y$ such that 
\begin{itemize}
\item $C(t_x) = C(t|x)$  and $C(t_y) = C(t|y)$, 
\item $C(t_x|t) = O(\log n)$ and $C(t_y|t) = O(\log n)$,
\item $C(t| t_x,x) = O(\log n)$ and  $C(t| t_y,y) = O(\log n)$,
\item $C(t_x,t_y) \eqp C(t_x) + C(t_y) $.
\end{itemize}
Speaking informally, $t_x$ and $t_y$ are ``fingerprints'' of $t$ that can play the roles of (almost) shortest descriptions of $t$
conditional on $x$ and $y$ respectively. The last condition means that the mutual information between $t_x$ and $t_y$ is negligibly small. 

The complexity profile for $x$, $y$, and $\langle t_x, t_y\rangle$ is shown in Fig.~\ref{x-y-t'}.

\end{lemma}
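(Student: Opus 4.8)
\textbf{Proof plan for Lemma~\ref{l:simplifying-t}.}

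The plan is to establish parts (a)--(d) by routine manipulations with the Kolmogorov--Levin theorem and Lemma~\ref{l:triple-info}, and then to spend the real effort on part (e), which is the heart of the statement.

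First, part (a) is immediate: given $x$ and $y$ (and the finite description of the protocol, which costs $O(1)$), one simulates the protocol and reconstructs the transcript $t$ deterministically, so $C(t\mid x,y)\eqp 0$. For part (b), I would expand $C(t\mid x)$ using the chain rule: $C(t\mid x)\eqp C(t,y\mid x)-C(y\mid t,x)\eqp C(y\mid x)+C(t\mid x,y)-C(y\mid t,x)$. By part (a) the middle term vanishes, so $C(t\mid x)\eqp C(y\mid x)-C(y\mid t,x)\eqp I(t:y\mid x)$, which is exactly the quantity sitting in the bottom-left lens of Fig.~\ref{x-y-t}. Part (c) is symmetric. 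For part (d), adding (b) and (c) gives $C(t\mid x)+C(t\mid y)\eqp I(t:x\mid y)+I(t:y\mid x)$; looking at the diagram in Fig.~\ref{x-y-t}, the quantity $C(t)$ equals $C(t\mid x,y)+I(t:x\mid y)+I(t:y\mid x)+I(x:y:t)\eqp I(t:x\mid y)+I(t:y\mid x)+I(x:y:t)$, and since $I(x:y:t)\gep 0$ by Lemma~\ref{l:triple-info}, we get $I(t:x\mid y)+I(t:y\mid x)\lep C(t)$, which is the claimed inequality.

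The substantive part is (e). Here I would invoke Muchnik's theorem (Proposition~\ref{p:muchnik}(b)) with $a:=t$, $b_1:=x$, $b_2:=y$: this produces strings $q_1,q_2$ with $|q_1|=C(t\mid x)$, $|q_2|=C(t\mid y)$, satisfying $C(q_j\mid t)\eqp 0$ and $C(t\mid b_j,q_j)\eqp 0$, and with the extra property that one of $q_1,q_2$ is a prefix of the other. Set $t_x:=q_1$ and $t_y:=q_2$. The first three bullet points of (e) then follow directly: $C(t_x)\le |t_x|=C(t\mid x)$ and, since $t_x$ is reconstructible from $t$ and is an almost-shortest description of $t$ given $x$, the reverse inequality $C(t_x)\gep C(t\mid x)$ holds too (otherwise one could compress $t$ given $x$ below $C(t\mid x)$); similarly for $t_y$; and $C(t\mid t_x,x)\eqp 0$, $C(t\mid t_y,y)\eqp 0$ are given by Muchnik directly. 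The delicate fourth bullet, $C(t_x,t_y)\eqp C(t_x)+C(t_y)$, i.e.\ $I(t_x:t_y)\eqp 0$, is where the prefix property does the work: WLOG $t_x$ is a prefix of $t_y$, so $C(t_x\mid t_y)\eqp 0$, hence $C(t_x,t_y)\eqp C(t_y)$; it therefore suffices to show $C(t_x)\eqp 0$, which is false in general --- so this naive route fails, and instead I must argue $I(t_x:t_y)\lep 0$ via a symmetry/counting argument. The right way: by the remark after Proposition~\ref{p:muchnik}, $t_x$ is almost incompressible given $x$ and $t_y$ is almost incompressible given $y$; combining $C(t_x\mid x)\eqp |t_x|$, $C(t_y\mid y)\eqp|t_y|$ with part (d) and the fact that $t_x,t_y$ are both functions of $t$ (up to $O(\log n)$), one derives $C(t_x,t_y)\gep C(t_x)+C(t_y)$: indeed $C(t_x,t_y)\gep C(t\mid x)+C(t\mid y)$ would suffice together with $C(t_x)\lep C(t\mid x)$, $C(t_y)\lep C(t\mid y)$, and the lower bound on $C(t_x,t_y)$ comes from noting that $t_x$ and $t_y$ together let one reconstruct enough of $t$ (given $x$ or given $y$) while being individually incompressible relative to the ``other side''. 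I expect the bookkeeping here --- propagating the $O(\log n)$ terms through several applications of Kolmogorov--Levin without letting them accumulate --- to be the main obstacle, and the cleanest presentation is probably to verify $I(t_x:t_y)\lep 0$ by the chain $I(t_x:t_y)\eqp C(t_x)-C(t_x\mid t_y)\lep C(t\mid x)-C(t_x\mid x, t_y)\eqp C(t\mid x)-(|t_x|-I(t_x:? ))$, carefully using that conditioning on $t_y$ (which is essentially derived from $t$, hence almost independent of $x$'s ``private'' part) does not shrink $C(t_x\mid x)$ by more than $O(\log n)$. This last point is exactly the content that makes the external information cost equal the internal information cost, and it should be extracted as a short separate computation rather than folded inline.
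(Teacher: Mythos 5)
Parts (a)--(d) of your argument are correct and essentially identical to the paper's proof. The problem is in part (e), and it is a genuine gap, not merely missing bookkeeping. You apply Muchnik's theorem (Proposition~\ref{p:muchnik}(b)) with $b_1=x$, $b_2=y$ and set $t_x:=q_1$, $t_y:=q_2$. For this construction the prefix property is fatal rather than helpful: one of $t_x,t_y$ is a prefix of the other, so $I(t_x:t_y)\eqp\min\{C(t_x),C(t_y)\}\eqp\min\{C(t|x),C(t|y)\}$, which is large in general. You notice yourself that your route would force $C(t_x)\eqp 0$ and that this is ``false in general'', but you then try to prove $I(t_x:t_y)\lep 0$ for the \emph{same} pair $t_x,t_y$ --- and that statement is simply false for the objects you constructed, so no amount of careful propagation of $O(\log n)$ terms can rescue it. The chain of inequalities you sketch is not a proof (it contains an unresolved placeholder), and the assertion that conditioning on $t_y$ does not shrink $C(t_x|x)$ is precisely the claim in need of proof, not a tool you may assume.

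The fix is to choose the two conditions in Muchnik's theorem differently. First apply Proposition~\ref{p:muchnik}(a) with $a=t$, $b=x$ to obtain $t_x$ of length $C(t|x)$ with $C(t_x|t)\eqp 0$ and $C(t|t_x,x)\eqp 0$. Then observe, using part (d), that $C(t|t_x)\eqp C(t)-C(t_x)\eqp C(t)-I(t:y|x)\gep C(t|y)$. Now apply Proposition~\ref{p:muchnik}(b) with $a=t$, $b_1=y$ and $b_2=t_x$ (not with $x$ and $y$). Since $|q_2|=C(t|t_x)\gep C(t|y)=|q_1|$, the string $q_1$ is a prefix of $q_2$; and since $q_2$ is an almost-shortest description of $t$ conditional on $t_x$, it is almost incompressible given $t_x$, hence so is its prefix $q_1$. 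Setting $t_y:=q_1$ therefore yields $I(t_x:t_y)\eqp 0$, which is exactly the fourth bullet, while the first three bullets follow from Muchnik's guarantees as in your sketch. Here the prefix property works for you instead of against you precisely because one of the two conditions is $t_x$ itself.
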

			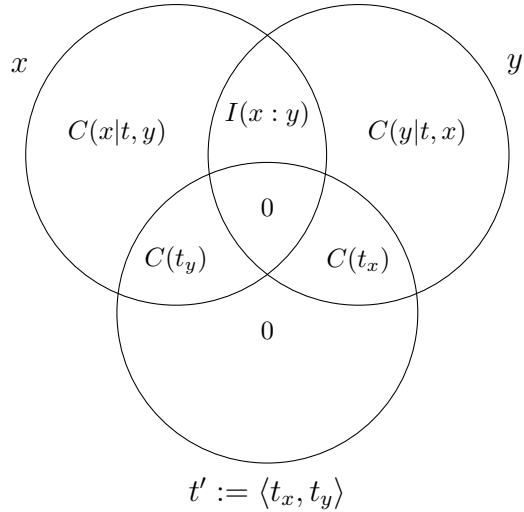
\begin{figure}[h]
				\centering
				\begin{tikzpicture}[scale=0.80]
				  \draw \firstcircle node[above left] {$C(x|t,y)$};
				  \draw \secondcircle node [above right] {$C(y|t,x)$};
				  \draw \thirdcircle node [below] {$0$};
				  \node {$ 0$};
				  \node at (90:1.55) {$ I(x:y)$};
				  \node at (210:1.75) {$C(t_y)$};
				  \node at (330:1.75) {$C(t_x)$};
				  \node at (150:4.75) {\Large $x$};
				  \node at (30:4.75) {\Large $y$};
				  \node at (270:4.75) {\Large $t':=\langle t_x,t_y\rangle$};
				\end{tikzpicture}
				\caption{Complexity profile for  $x,y,$ and $t':=\langle t_x,t_y\rangle$ from Lemma~\ref{l:simplifying-t}. Note that $C(t_x)= I(x:t|y)$, $C(t_y)= I(y:t|x)$, and $I(x:y|t')= I(x:y).$}\label{x-y-t'}
			\end{figure}

\begin{remark}
Lemma~\ref{l:simplifying-t} 
is a technical statement, and its claim~(e)  might look artificial. However, this claim has  an intuitive motivation, as explained in the beginning of this section. The ``compressed'' components $t_x$ and $t_y$ correspond to the components $t_B$ and $t_A$ of a ``natural'' transcripts shown in Fig.~\ref{x-y-t-natural}.

In an arbitrary protocol the messages of Alice and Bob may have positive mutual information with the data of their counterparts: Alice may send a message (partially) known to Bob, and Bob may send a message (partially) known to Alice,
in absolute terms or conditional on the previously sent messages.
We can find  ``compressed'' descriptions of each individual message in the protocol using Muchnik's method, Proposition~\ref{p:muchnik}. Indeed, for each of Alice' message there exists a compressed description that looks incompressible from Bob's point of view, conditional on the data available to Bob and the previous messages of Alice; similarly, for each of Bob's messages there exists a compressed description that looks incompressible from Alice' perspective.  
By combining together the obtained compressed codes of separate messages, we can try to construct $t_x$ and $t_y$ required in the claim~(e) of the lemma.
However, this approach fails for protocols  with unbounded number of rounds. Indeed, ``compressing'' each individual message with Proposition~\ref{p:muchnik}(a) costs us a logarithmic error term;  as the number of messages is large, the sum of logarithmic terms grows beyond control.
This is why  we have to use a less intuitive argument based on  Lemma~\ref{l:triple-info}, which helps 
to handle the transcript of a protocol in one piece, without splitting it into separate messages, as we do in the proof below.
\end{remark}

\begin{proof}
(a) follows trivially from the fact that $t$  can be computed given $(x,y)$ (we may simulate the communication protocol on the given inputs). Note that the constant in the term $O(\cdot)$ includes implicitly a description of the communication protocol
(we assume that the protocol has a description of size $O(1)$, see the discussion on p.~\pageref{par:uniform-protocol}).

\smallskip

(b) For all $x,y,t$ we have
\[
C(t|x) \eqp C(t|x,y) + I(t:y|x) .
\]
The term $C(t|x,y)$ vanishes due to~(a), and we are done.

\smallskip

(c) Is similar to (b).

\smallskip

(d) A routine check shows that for all $x,y,t$ we have
\[
C(t) \eqp I(t:x|y) + I(t:y|x) +( I(x:y) - I(x:y|t) ) + C(t|x,y) .
\]
Due to Lemma~\ref{l:triple-info} we have $I(x:y) - I(x:y|t) \eqp I(x:y:t)\gep0$,
so (d) follows.

\smallskip

(e) First, we apply Proposition~\ref{p:muchnik}(a) with $a=t$ and $b=x$;  we obtain a string $p$ of length $C(t|x)$ such that 
\begin{itemize}
\item $C(p|t)= O(\log n)$ and
\item $C(t|p,x)= O(\log n)$.
\end{itemize}
From (b) we have $C(p)\eqp I(t:y|x)$. So we can let $t_x:=p$.

Observe that 
\[
C(t|t_x) \eqp  C(t) - C(t_x) \eqp C(t) - I(t:y|x)   \ge C(t|y) 
\]
(this inequality follow from (d)).
Now we apply Proposition~\ref{p:muchnik}(b) with $a=t$, $b_1=y$, and $b_2=t_x$. We obtain $q_1,q_2$ 
of length  $C(t|y)$ and $C(t|t_x)$ respectively  such that
\begin{itemize}
\item $q_1$ and $q_2$ are comparable (one of these strings is a prefix of the other one),
\item $C(q_1|t) = O(\log n)$ and $C(q_2|t) = O(\log n)$,
\item $C(t|y, q_1) = O(\log n)$, where $q_1$ is the prefix of $q$ having length $C(t|y)$, and
\item $C(t|t_x, q_2) = O(\log n)$, where $q_2$ is the prefix of $q$ having length $C(t|t_x)$.
\end{itemize}
Note that the length of $q_2$ is not less (up to $O(\log n)$) than the length of $q_1$. Since  $q_2$ is incompressible conditional on $t_x$, the shorter prefix $q_1$ must be also incompressible conditional on $t_x$. Thus, $t_x$ and $q_1$ are independent. We let $t_y:=q_1$, and (e) is proven. 
\end{proof}

\subsection{Proof of Theorem~\ref{th:main}}

Now we are ready to combine the spectral technique from Section~\ref{ss:spectre} and the information-theoretic technique from Section~\ref{ss:ineq} and prove our main result.
\begin{proof}[Proof of Theorem~\ref{th:main}]
Let us take a pair of $(x,y)$ from Example~\ref{lines-and-points} or Example~\ref{quasi-distance}. We know that it satisfies \eqref{eq:graph-complexity-profile} and, therefore, \eqref{eq:complexity-profile}.  Assume that  in a communication protocol $\pi$  Alice and Bob (given as inputs $x$ and $y$ respectively) agree on a secret key $z$ of size $\delta(n)$. We will prove a lower bound on the communication in this protocol. To simplify the notation, in what follows we ignore the  description of $\pi$  in all complexity terms (assuming that it is a constant, which is negligible compared with $n$).
 
In this proof we will deal with four objects: the inputs $x'=\langle x, r_A\rangle$ and  $y'=\langle y, r_B\rangle$, the transcript $t$, and the output of the protocol (secret key) $z$. Our aim is to prove that $C(t)$ cannot be much less than $0.5n$. This is enough to  conclude that the length of the transcript measured in bits (which is exactly the communication complexity of the protocol)  also  cannot be much less than $0.5n$. Due to some technical reasons that will be clarified below we need to reduce in some sense the sizes of $t$ and $z$. 

\smallskip

\emph{Reduction of the key}.
First of all, we reduce the size of $z$. This step might seem counter-intuitive: we make the assumption of the theorem \emph{weaker} by suggesting that Alice and Bob agree on a rather small secret key. We know from \cite{RZ1} that $C(z)$ can be pretty large (more specifically, it can be of complexity $0.5n + O(\log n)$). However, we prefer to deal with protocols where Alice and Bob agree on a moderately small (but still not \emph{negligibly} small) key. To this end we may need to degrade the given communication protocol and reduce the size of the secret key to the value $\mu\log n$ (the constant $\mu$ to be chosen later). It is simple to make the protocol weaker: if the original protocol provides a common secret key $z$ of larger size, then in the degraded protocol Alice and Bob can take only the $\delta(n)$ first bits of this key. Thus, without loss of generality, we may assume that the protocol gives a secret key $z$ with complexity $\delta(n)=\mu \log n$.

\smallskip

\emph{Reduction of the transcript}.
Now we perform a reduction of $t$. We know from Lemma~\ref{l:triple-info} that $I(x':y':t)$
%the difference $I(x':y') - I(x':y'|t)$ 
is non-negative. We want to reduce $t$ to a $t'$ such that $I(x':y':t')$
%the difference $I(x':y') - I(x':y'|t')$ 
 is exactly $0$ (here \emph{exactly} means, as usual, an equality that holds up to $O(\log n)$).
To this end, we apply Lemma~\ref{l:simplifying-t} to the triple $(x',y',t)$ and obtain $t_x$ and $t_y$, which play the roles of optimal descriptions of $t$ given the conditions $x'$ and $y'$ respectively. We let $t':=\langle t_x,t_y\rangle$.
Though technically this $t'$ is not a transcript of any communication protocol, we will see that in some sense it behaves similarly to the initial transcript.

We know from Lemma~\ref{l:simplifying-t}(d,e) that $C(t') \lep C(t) $. Thus, to prove the theorem, it is enough to show that 
$
C(t')  \gep 0.5n-2\delta(n) .
$
\begin{lemma} \label{l:t-prime}
For $t'=\langle t_x,t_y\rangle$ we have the following equalities:

\smallskip

(a)
$C(x'|t',z)  \eqp  n   +m -   C(t_y) - \delta(n)$,

\smallskip

(b)
$C(y'|t',z)  \eqp n   +m -   C(t_x) - \delta(n) $,

\smallskip

and
\begin{equation}\label{eq:cond-info-small}
\text{\it (c)}\ I(x':y'|t',z) \eqp  I(x':y') - C(z)+ O(\log n) = 0.5n - \delta(n), \rule{8mm}{0mm}
\end{equation}
 see Fig.~\ref{x'-y'-t'-z'}.
\end{lemma}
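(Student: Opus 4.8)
The plan is to prove all three identities by chain-rule bookkeeping on the four-tuple $\langle x',y',t',z\rangle$, using only facts already at hand: the complexity profile of the chosen good edge, $C(x')\eqp C(y')\eqp n+m$ and $C(x',y')\eqp 1.5n+2m$ (hence $I(x':y')\eqp 0.5n$); the properties of $t_x,t_y$ from Lemma~\ref{l:simplifying-t}(e), in particular $C(t')\eqp C(t_x,t_y)\eqp C(t_x)+C(t_y)$, the fact (from the Muchnik remark used in that proof) that $t_x$ is incompressible conditional on $x'$ and $t_y$ conditional on $y'$, and that $t$ is computable up to $O(\log n)$ from $\langle t_x,x'\rangle$ and from $\langle t_y,y'\rangle$; the computability facts that $z$ is computable from $\langle x',t\rangle$ and from $\langle y',t\rangle$ and that $t'$ and $z$ are both computable from $\langle x',y'\rangle$; and the secrecy condition $C(z|t)\gep\delta(n)$ together with $C(z)\eqp\delta(n)$ (after the reduction of the key). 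A point to keep in mind throughout: since $m=\poly(n)$ by Proposition~\ref{p:newman}, every error term $O(\log(n+m))$ is $O(\log n)$, so ``$\eqp$'' uniformly means ``up to $O(\log n)$''.

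I would first extract two auxiliary observations. Because $z$ is computable from $\langle x',t\rangle$ and $t$ from $\langle x',t_x\rangle$, the key $z$ is computable from $\langle x',t_x\rangle$, hence from $\langle x',t'\rangle$; symmetrically $z$ is computable from $\langle y',t'\rangle$. Since $t'$ and $z$ are computable from $\langle x',y'\rangle$ we get $C(x',y',t',z)\eqp C(x',y')\eqp 1.5n+2m$. Next, $t'$ is computable from $t$ (Lemma~\ref{l:simplifying-t}(e) gives $C(t_x|t)=C(t_y|t)=O(\log n)$), so $C(z|t')\gep C(z|t)\gep\delta(n)$; with $C(z)\eqp\delta(n)$ this forces $C(z|t')\eqp\delta(n)$, i.e.\ $I(z:t')\eqp 0$ and $C(t',z)\eqp C(t')+\delta(n)\eqp C(t_x)+C(t_y)+\delta(n)$.

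For~(a), the Kolmogorov--Levin theorem gives $C(x'|t',z)\eqp C(x',t',z)-C(t',z)$. To evaluate $C(x',t',z)=C(x',t_x,t_y,z)$, observe that $z$ is redundant given $\langle x',t_x\rangle$ and that $t_y$ is redundant given $t$, which is itself redundant given $\langle x',t_x\rangle$; hence $C(x',t',z)\eqp C(x',t_x)\eqp C(x')+C(t_x|x')\eqp (n+m)+C(t_x)$, the last step using that $t_x$ is incompressible given $x'$ (i.e.\ $I(x':t_x)\eqp 0$). Subtracting $C(t',z)\eqp C(t_x)+C(t_y)+\delta(n)$ yields $C(x'|t',z)\eqp n+m-C(t_y)-\delta(n)$. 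Part~(b) is obtained verbatim by interchanging the roles of $\langle x',t_x\rangle$ and $\langle y',t_y\rangle$. For~(c), write $I(x':y'|t',z)=C(x'|t',z)+C(y'|t',z)-C(x',y'|t',z)$; the last term equals $C(x',y',t',z)-C(t',z)\eqp (1.5n+2m)-(C(t_x)+C(t_y)+\delta(n))$ by the first auxiliary observation, and substituting~(a),~(b) and cancelling the $C(t_x),C(t_y),m$ and one copy of $\delta(n)$ leaves $I(x':y'|t',z)\eqp 0.5n-\delta(n)$, which equals $I(x':y')-C(z)$ since $I(x':y')\eqp 0.5n$ and $C(z)\eqp\delta(n)$; this is exactly the profile depicted in Fig.~\ref{x'-y'-t'-z'}.

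The arithmetic is routine once the structural inputs are assembled; the only genuine hazard is error-term control — which is precisely why the private random bits were first cut down to $\poly(n)$ — and the step I expect to be the most delicate is the evaluation $C(x',t',z)\eqp C(x')+C(t_x)$, since it chains two separate redundancy arguments together with the independence $I(x':t_x)\eqp 0$. That independence is a consequence of the Muchnik construction of $t_x$ (it holds because $t_x$ is an almost optimal conditional description of $t$ given $x'$) rather than of anything intrinsic to the protocol, so one must take care that the particular $t_x$ handed to us by Lemma~\ref{l:simplifying-t}(e) does carry this incompressibility property.
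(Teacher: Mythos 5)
Your proposal is correct and follows essentially the same route as the paper: Kolmogorov--Levin bookkeeping on $\langle x',y',t',z\rangle$ using the properties of $t_x,t_y$ from Lemma~\ref{l:simplifying-t}(e), the computability of $t$ and $z$ from each party's data together with (a piece of) $t'$, and the secrecy-derived independence $I(z:t')\eqp 0$ (which the paper asserts and you justify). The incompressibility of $t_x$ given $x'$ that you flag as the delicate point does follow from the listed properties alone, since $C(t_x)=C(t|x')\lep C(t_x|x')+C(t|t_x,x')\eqp C(t_x|x')\lep C(t_x)$, so there is no gap.
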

			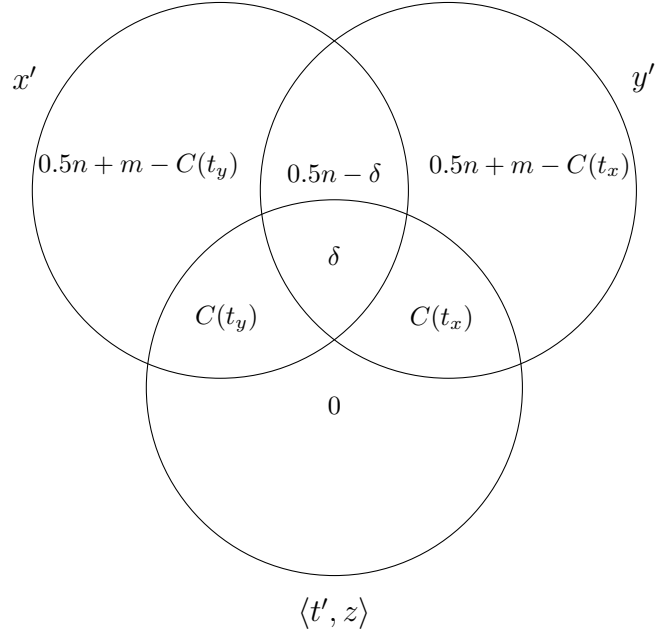
\begin{figure}[h]
				\centering
				\begin{tikzpicture}[scale=0.85]
				  \draw \firstcircle node[above left] {}; %$C(x'|t',y',z)$
				    \node at (-2.6,1.2) {$0.5n+m-C(t_y)$};   
				  \draw \secondcircle node [above right] {}; %$C(x'|t',y',z)$
				    \node at (2.6,1.2) {$0.5n+m-C(t_x)$};   

				  \draw \thirdcircle node [below] {$0$};
				  \node {$ \delta$};
				  \node at (90:1.10) {$ 0.5n - \delta $};
				  \node at (210:1.65) {$C(t_y)$};
				  \node at (330:1.65) {$C(t_x)$};
				  \node at (150:4.75) {\Large $x'$};
				  \node at (30:4.75) {\Large $y'$};
				  \node at (270:4.75) {\Large $\langle t', z\rangle$};
				\end{tikzpicture}
				\caption{Complexity profile for  the triple that consists of $x',y'$, and $\langle t',z\rangle$ from Lemma~\ref{l:t-prime}.}\label{x'-y'-t'-z'}
			\end{figure}

\begin{proof}[Proof of the lemma]  
The proof is a routine check where we use repeatedly the Kolmogorov--Levin theorem. For~(a) we have
\[
\begin{array}{rcl}
C(x'|t',z)  & \eqp& C(x',t', z)  - C(t',z)   \\
 &&\text{/ from the Kolmogorov--Levin theorem /} \\
& \eqp & C(x') + C(t'|x') + C( z| t',x') - (C(t') + C(z) )   \\
 &&\text{/  since $z$ is independent of $t'$ /} \\
& \eqp & C(x') + C(t'|x') - C(t')  - C(z)   \\
 &&\text{/  $z$ is computable given $t$ and $x'$, so $C(z|x',t')=O(\log n)$  /} \\
& \eqp & C(x') + I(y':t|x' )  -   I(x':t|y')  - I(y':t|x')   -  C(z)  \\
 &&\text{/ from Lemma~\ref{l:simplifying-t}  /} \\
& \eqp & C(x')  -  I(x':t|y')  -  \delta(n)   \\
&\eqp& n   + m  -   C(t_y) - \delta(n) 
\end{array}
\]
The proof of~(b) is similar.

Since $t$ and $z$ can be computed from $(x',y')$ by a simulation of the protocol
and $t'$ has negligibly small complexity conditional on $t$,  we obtain from the Kolmogorov--Levin theorem
\[
\begin{array}{rcl}
C(x' , y'|t',z) & \eqp& C(x', y',t',z) - C(t',z)\\
&&\ \text{ / from the Kolmogorov--Levin theorem /}\\
& \eqp& C(x', y') - C(t',z) \\
&& \text{ / since $t'$ and $z$ have logarithmic complexity conditional on $(x',y')$ /}\\
& \eqp& C(x' , y') - C(t') - C(z)  \\
&& \text{ / since $z$ is incompressible given $t'$ /}\\
& \eqp& 1.5n +2m   -  I(x':t|y')   -  I(y':t|x')  -   \delta(n).
\end{array}
\]
Combining this with (a) and (b) we obtain (c).
\end{proof}

Now we are ready to prove the theorem.  Assume that 
\begin{equation}\label{eq:cc-is-small}
 C(t_x)  +  C(t_y) < 0.5n -2\delta(n) - \lambda \log n.
 \end{equation}
 If the constant $\lambda$ is large enough, then we obtain from Lemma~\ref{l:t-prime}(a,b) 
\begin{equation}\label{eq:graph-spectral-gap-cond-not}
C(x'|t',z) + C(y'|t',z)  > 1.5n + 2m +c\log n.
\end{equation}
By choosing $\lambda$ we can make the constant $c$ in this inequality sufficiently large, so that \eqref{eq:graph-spectral-gap-cond-not} contradicts \eqref{eq:graph-spectral-gap-cond}.
Now we can apply  Lemma~\ref{l:graph-spectral-gap} (the spectral bound applies to Example~\ref{lines-and-points} and Example~\ref{quasi-distance}, see Remark~\ref{r:spectral-examples}), and
\eqref{eq:graph-spectral-gap} rewrites to
\begin{equation}\label{eq:cond-info-large}
I(x':y'|t',z) \ge   0.5n - O(\log n).
\end{equation}
Comparing \eqref{eq:cond-info-small} and \eqref{eq:cond-info-large} we conclude that 
$\delta(n)=O(\log n)$ (the constant hidden in $O(\cdot)$ depends only on the choice of the optimal description method in the definition of Kolmogorov complexity).
This contradicts the assumption $\delta(n)=\mu\log n$, if $\mu$ is chosen large enough. 
Therefore, the assumption in \eqref{eq:cc-is-small} is false
(without this assumption we cannot apply Lemma~\ref{l:graph-spectral-gap} and conclude with \eqref{eq:cond-info-large}).
 
The negation of \eqref{eq:cc-is-small} gives
\[
C(t) \ge C(t_x) + C(t_y) - O(\log n) \ge 0.5n -2 \delta(n) - O(\log n),
\]
and we are done.
\end{proof}

\section{Pairs with a Fixed Hamming Distance}

Theorem~\ref{th:main} estimates communication complexity of the protocol in the worst case. For some classes of inputs $(x,y)$ there might exist more efficient communication protocol. In this section we study one such special class --- the pairs $(x,y)$ from Example~\ref{hamming}. The spectral argument from the previous section does not apply to this example. The spectral gap for the graph from Example~\ref{hamming} is too small: for this graph we have $\lambda_2=\Theta(\lambda_1)$, while in Example~\ref{lines-and-points} and Example~\ref{quasi-distance} we had $\lambda_2=O(\sqrt\lambda_1)$. In fact, the spectrum of the graph from Example~\ref{hamming} can be computed explicitly:
the eigenvalues of this graph are the numbers 
\[
	K_{\theta n}(i)=\sum_{h=0}^{\theta n}(-1)^h\binom ih\binom{n-i}{\theta n-h}\text{ for }i\in\{0,1,\dots n\}
\]
with different multiplicities, see \cite{BCIMG1} and the survey \cite{LZh1}.
In particular, the maximal eigenvalue of this graph is $K_{\theta n}(0) = \binom n{\theta n}$ and its second eigenvalue is $K_{\theta n}(1)=\binom{n-1}{\theta n}-\binom{n-1}{\theta n-1}$. It is not difficult to verify that $K_{\theta n}(1)=\Omega\left(\binom n{\theta n}\right)$ (for a fixed $\theta$ and $n$ going to infinity), so the difference between the first and the second eigenvalues is only a constant factor. Thus, we cannot apply Lemma~\ref{l:graph-spectral-gap} to this graph.

It is no accident that our  \emph{proof} of Theorem~\ref{th:main}  fails on  Example~\ref{hamming}. Actually, the \emph{statement} of the theorem is not true for $(x,y)$ from this example. In what follows we show  that given these $x$ and $y$ Alice and Bob can agree on a secret key of any size $m$ (intermediate between $\log n$ and $n/2$) with communication complexity $\Theta(m )$. 
The positive part of this statement (the existence of a communication protocol with communication complexity $O(m)$) is proven in Theorem~\ref{th:hamming1}. 
The negative part of the statement (the lower bound $\Omega(m)$ for all communication protocols) is proven in Theorem~\ref{th:hamming2}. 

\begin{theorem}\label{th:hamming1}
For every $\delta\in(0,1)$  there exists a two-parties randomized communication protocol $\pi$ such that  given inputs  $x$ and $y$
from Example~\ref{hamming} (a pair of $n$-bit strings with the Hamming distance $\theta n$ and complexity profile \eqref{eq:complexity-profile}) Alice and Bob with probability $>0.99$ agree on a secret key $z$ of size $\delta n/2 - o(n)$ with communication complexity $O(\delta n)$.
(The constant hidden in the $O(\cdot)$ does not depend on $n$ or $\delta$.)
\end{theorem}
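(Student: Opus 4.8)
The plan is to build an explicit protocol that reveals a random subset of the coordinates and then extracts a key from the agreement of $x$ and $y$ on the remaining coordinates. Concretely, Alice and Bob use shared (public, or exchanged) randomness to select a random set $S\subseteq\{1,\dots,n\}$ of roughly $\delta n$ coordinates. Alice sends Bob the restriction $x|_S$; this costs $O(\delta n)$ bits of communication. Since the Hamming distance between $x$ and $y$ is exactly $\theta n$ with $\theta<1/2$, the number of coordinates in $S$ on which $x$ and $y$ disagree is, with high probability, close to $\theta|S|$; in particular the symmetric difference restricted to $S$ has size $\approx\theta\delta n$, which is well below $|S|/2$, so Bob can reconcile the rest of the disagreement using a short syndrome (a few bits) — or, even more simply, we can arrange the protocol so that the key is extracted only from coordinates \emph{inside} $S$, avoiding any reconciliation of the complement. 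The simplest clean version: from the revealed block $x|_S$ and $y|_S$, the positions where they agree form a subset of $S$ of size $\approx(1-\theta)\delta n$, and the \emph{pattern} of these agreements, conditioned on the transcript (which includes $S$ and $x|_S$), still carries fresh complexity from Bob's side — but this is awkward because $x|_S$ is in the transcript.

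**A cleaner approach** is the following. Reveal $S$ of size $\delta n$ publicly but reveal \emph{nothing} about $x|_S$ or $y|_S$; instead, Alice and Bob each compute the string $x|_S$ and $y|_S$ respectively, which differ in about $\theta\delta n$ positions. Now they run the known secret-key-agreement protocol of Theorem~\ref{th:rz} (equivalently the protocol of \cite{RZ1} with communication $\min\{C(x|_S\,|\,y|_S),C(y|_S\,|\,x|_S)\}+O(\log n)$) \emph{on the shorter inputs} $x|_S$ and $y|_S$. Since $x|_S,y|_S$ are $\delta n$-bit strings with Hamming distance $\approx\theta\delta n$ and complexity profile rescaled by $\delta$ (because $S$ is random and $x,y$ have the complexity profile~\eqref{eq:complexity-profile}, a random restriction to a $\delta$-fraction of coordinates has, with high probability, $C(x|_S)\approx\delta n$, $C(y|_S)\approx\delta n$, $C(x|_S,y|_S)\approx 1.5\delta n$, so $I(x|_S:y|_S)\approx 0.5\delta n$ and each conditional complexity is $\approx 0.5\delta n$), that inner protocol has communication complexity $0.5\delta n + O(\log n) = O(\delta n)$ and produces a key of length $I(x|_S:y|_S)-O(\log n) = 0.5\delta n - o(n)$, and this key is incompressible given the full transcript (which is the choice of $S$ together with the inner transcript). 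One must check secrecy with respect to the \emph{original} $x,y$ rather than just $x|_S,y|_S$: here the point is that $S$ together with the inner transcript has complexity $O(\delta n)$, and the key $z$ satisfies $C(z\mid \text{transcript})\geq |z|-O(\log n)$ by the guarantee of the inner protocol applied to inputs $(x|_S,y|_S)$ which themselves have the right complexity profile conditional on $S$.

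**The main steps, in order.** (1) Fix the parameter so that $|S|=\delta' n$ for an appropriate $\delta'=\Theta(\delta)$; pick $S$ uniformly at random among $\binom n{\delta' n}$ subsets, using private randomness of Alice who then sends $S$ to Bob ($O(\delta n)$ bits, since $\log\binom n{\delta'n}=h(\delta')n=O(\delta n)$ for small $\delta'$ — or simply send $\delta' n$ indices at $O(\log n)$ bits each if we take $\delta'$ bounded away from $1$, giving $O(\delta n\log n)$; to get the clean $O(\delta n)$ bound one should pick $S$ via a more economical encoding or via public randomness). (2) Establish, by a concentration/counting argument, that with probability $>0.999$ the restricted pair $(x|_S,y|_S)$ has Hamming distance within $o(n)$ of $\theta\delta' n$ and complexity profile within $o(n)$ of the $\delta'$-rescaled version of~\eqref{eq:complexity-profile}. (3) Invoke Theorem~\ref{th:rz}(a) on inputs $(x|_S,y|_S)$ — which requires Alice and Bob to know the complexity profile of $(x|_S,y|_S)$; this profile is determined up to $o(n)$ by $n$, $\delta'$, and $\theta$, all of which are known, so this is legitimate — obtaining a key $z$ of length $I(x|_S:y|_S)-O(\log n)=0.5\delta' n - o(n)$ with inner communication $0.5\delta' n+O(\log n)$. (4) Verify the total communication is $O(\delta n)$ and the secrecy condition $C(z\mid t)\geq |z|-O(\log n)$ holds for the full transcript $t$.

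**The hard part** will be step (2) — controlling the Kolmogorov complexity profile of a random restriction $x|_S$ of a string $x$ that is only known to satisfy~\eqref{eq:complexity-profile}. Unlike the Shannon setting, where a random restriction of an i.i.d.-like source is trivially well-behaved, here we must argue that for a random $S$, $C(x|_S \mid S)$ is close to $|S|=\delta' n$ (incompressibility is inherited by most restrictions, by a counting argument over the $\binom n{\delta' n}$ choices of $S$ and a Markov-type bound), and simultaneously $C(x|_S, y|_S\mid S)$ is close to $1.5\delta' n$, and the Hamming distance $\mathrm{HammingDist}(x|_S,y|_S)$ concentrates around $\theta\delta' n$. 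The Hamming-distance concentration is a standard hypergeometric tail bound (sampling $\delta' n$ coordinates out of $n$, of which $\theta n$ are "bad"), but one must be slightly careful that it holds for the \emph{specific} pair $(x,y)$ we fixed, which it does since it is a purely combinatorial statement about the set of disagreement coordinates. The complexity concentration requires the observation that $x$ having near-maximal complexity forces "most" of its projections to also have near-maximal complexity; the bad set of $S$ is small and can be described by few bits, which would contradict the incompressibility of $x$ if it were large — this is the one place where care with the logarithmic error terms is needed, exactly the kind of subtlety the paper's preliminaries warn about. Once these facts are in hand, plugging into Theorem~\ref{th:rz} and bookkeeping the transcript give the claimed bounds.
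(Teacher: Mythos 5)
Your proposal follows essentially the same route as the paper: restrict the inputs to a $\delta$-fraction of the coordinates, show that the restricted pair inherits the complexity profile \eqref{eq:complexity-profile} rescaled by $\delta$ (up to $o(n)$), and then run the optimal protocol of Theorem~\ref{th:rz}/\cite{RZ1} on the restricted inputs, whose communication is $C(\hat x\,|\,\hat y)+O(\log n)\approx \delta n/2$. The one genuine divergence is that you choose the restriction set $S$ at random, and this introduces a real gap that you flag but do not close: in the paper's model the randomness is \emph{private}, so Alice must transmit $S$ to Bob, and a uniformly random $S$ of size $\delta' n$ costs $\log\binom{n}{\delta' n}=h(\delta')n\approx \delta' n\log(1/\delta')$ bits. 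This is not $O(\delta n)$ with a constant independent of $\delta$, which is exactly what the theorem asserts; sending indices explicitly is even worse ($O(\delta n\log n)$), and "use public randomness" changes the communication model. The paper avoids the issue entirely by taking $S$ to be the deterministic length-$\delta n$ prefix, which requires no communication to agree on.

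With the deterministic prefix, your "hard part" (step (2)) becomes the paper's Lemma~\ref{l:prefix}, and it is handled without any concentration over random $S$. The Hamming-distance claim is forced by the complexity profile itself: if the prefix distance $\theta_1 m$ deviated from $\theta m$ by more than $o(n)$, then by concavity of the binary entropy function the count of strings $y$ at the prescribed restricted distances from $x$ would be $2^{0.5n-\omega(\log n)}$, giving $C(x,y)<1.5n-\omega(\log n)$ and contradicting \eqref{eq:complexity-profile}. The lower bounds on the complexities of the restrictions then follow from subadditivity over the prefix/suffix decomposition, e.g.\ $C(\hat x,\hat y)\ge C(x,y)-C(\hat x',\hat y')-O(\log n)\ge 1.5m-o(n)$; note this style of argument gives the lower bound for \emph{every} split of the coordinates, so no Markov-type bound over choices of $S$ is needed even in your randomized variant. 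To turn your write-up into a complete proof, either switch to the deterministic prefix or supply a $\delta$-independent-constant encoding of $S$; as written, the $O(\delta n)$ bound does not hold for small $\delta$.
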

\begin{theorem}\label{th:hamming2}
For every $\delta\in(0,1)$ 
for every randomized communication protocol $\pi$ such that for inputs  $x$ and $y$
from  Example~\ref{hamming} Alice and Bob with probability $>0.99$ agree on a secret key $z$ of size $\ge \delta n $,
the communication complexity is at least $\Omega(\delta n)$.
(The constant hidden in the $\Omega(\cdot)$ does not depend on $n$ or $\delta$.)
\end{theorem}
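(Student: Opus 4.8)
The plan is to prove a lower bound on the Kolmogorov complexity of the transcript and then observe that the number of bits actually exchanged is at least $|t|\gep C(t)$. Concretely, I would fix $n$-bit strings $x,y$ realising a \emph{generic} edge of $G^{Ham}_{\theta,n}$ (so that \eqref{eq:complexity-profile} holds), and fix private random strings $r_A,r_B$ for which the run is successful and which are generic (incompressible and independent of $x,y$); by Proposition~\ref{p:newman} I may assume $|r_A|=|r_B|=m=O(n)$. Writing $x'=\langle x,r_A\rangle$ and $y'=\langle y,r_B\rangle$, the pair $(x',y')$ is a generic edge of $\widehat{G^{Ham}_{\theta,n}}$, hence $I(x':y')\eqp 0.5n$, and the protocol guarantees $C(t|x',y')\eqp 0$, $C(z|x',t)\eqp C(z|y',t)\eqp 0$, and, by secrecy, $C(z|t)\gep |z|\ge\delta n$ (so $C(z)\eqp|z|$ and $I(z:t)\eqp 0$). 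The goal is $C(t)\gep\Omega(\delta n)$.

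The backbone is the identity obtained by applying Fig.~\ref{x-y-z} to the triple $z,x',y'$:
\[
C(z)\eqp C(z|x')+C(z|y')+I(z:x':y'),
\]
together with the elementary bounds $C(z|x')\lep C(t|x')$ and $C(z|y')\lep C(t|y')$ (since $z$ is computable from $t$ and $x'$, resp.\ $t$ and $y'$), and Lemma~\ref{l:simplifying-t}(d), which gives $C(t|x')+C(t|y')\lep C(t)$. Thus the whole problem reduces to bounding $I(z:x':y')$ — the amount of ``materialised common information'' that the key $z$ carries about the source $(x',y')$.

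To bound $I(z:x':y')$ I would split the transcript using Lemma~\ref{l:simplifying-t}: obtain $t_x,t_y$ such that $t$ is recoverable from $\langle x',t_x\rangle$ and from $\langle y',t_y\rangle$, with $C(t_x)\eqp C(t|x')$, $C(t_y)\eqp C(t|y')$, and $t_x,t_y$ (almost) independent. Then $z$ is computable from $\langle x',t_x\rangle$ and from $\langle y',t_y\rangle$, i.e.\ $z$ is a common fingerprint of the two augmented inputs, and one should then show that the complexity of any such common fingerprint exceeds the ``pure'' materialised common information of $(x',y')$ by at most $C(t_x)+C(t_y)$. The key input here is a structural property of the Hamming graph: a generic edge $(x',y')$ of $\widehat{G^{Ham}_{\theta,n}}$ carries essentially no materialised common information, and this must survive conditioning on the short strings $t_x,t_y$. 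Unlike in the proof of Theorem~\ref{th:main}, the Expander Mixing Lemma is of no help here, since the spectral gap of $G^{Ham}_{\theta,n}$ is only a constant factor ($\lambda_2=\Theta(\lambda_1)$); instead one must exploit the XOR/sphere structure of the source directly — for instance, that the coordinatewise XOR of any moderately small subset of $\{0,1\}^n$ with the radius-$\theta n$ sphere is spread out over the whole cube — via Fourier-analytic (hypercontractivity / level-$k$) estimates or a direct covering argument.

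Combining the pieces, $C(z)\lep C(t|x')+C(t|y')+I(z:x':y')\lep C(t)+(\text{materialised common information of }(x',y'))$, and since the latter term is $o(\delta n)$ (this is precisely what the structural lemma delivers), we get $|t|\gep C(t)\gep C(z)-o(\delta n)=\Omega(\delta n)$, hence the communication complexity is $\Omega(\delta n)$. The main obstacle is exactly the structural lemma for $\widehat{G^{Ham}_{\theta,n}}$ and its robustness under conditioning on $t_x,t_y$: the spectral method fails, so this step needs a genuinely different (most plausibly Fourier-analytic) argument, and controlling its error term is what confines the conclusion to $\Omega(\delta n)$ rather than $(1-o(1))\delta n$.
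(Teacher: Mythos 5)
Your reduction is sound as far as it goes: fixing a generic edge and generic private randomness, writing $C(z)\eqp C(z|x')+C(z|y')+I(z:x':y')$ (valid here because $C(z|x',y')\eqp 0$), bounding $C(z|x')\lep C(t|x')$, $C(z|y')\lep C(t|y')$, and $C(t|x')+C(t|y')\lep C(t)$ via Lemma~\ref{l:simplifying-t}(d), correctly isolates the one genuinely hard step: controlling the ``materialised common information'' $I(z:x':y')$ for a pair at fixed Hamming distance. But that step is the entire content of the theorem, and you leave it unproven — you only gesture at hypercontractivity or a covering argument and explicitly flag it as the main obstacle. This is a genuine gap, not a routine verification: non-extractability of common information for Hamming pairs is a nontrivial result, and neither the spectral method nor elementary counting delivers it.

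The paper fills exactly this gap by a different mechanism. It invokes Lemma~\ref{l:cond-independence} (imported from \cite{romash-cond-indep}): there is a chain $x=x_0,\dots,x_k$, $y=y_0,\dots,y_k$ of constant length $k=k(\theta)$ with $I(x_i:y_i|x_{i+1})\eqp 0$, $I(x_i:y_i|y_{i+1})\eqp 0$, and $I(x_k:y_k)\eqp 0$ (geometrically, pairs at increasing Hamming distances up to $n/2$). Iterating the inequality $C(w|d)\lep C(w|a)+C(w|b)+I(a:b|d)$ of Lemma~\ref{l:ineq} along the chain and closing with $C(w)\lep C(w|x_k)+C(w|y_k)+I(x_k:y_k)$ yields the \emph{multiplicative} bound $C(w)\lep 2^{k+1}\bigl(C(w|x)+C(w|y)\bigr)$; applied to $w=\langle t,z\rangle$ this gives $C(z)+C(t)\lep 2^{k+1}\,C(t)$, hence $C(t)=\Omega(\delta n)$. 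Note also that the form of the structural lemma you ask for — an additive bound $I(z:x':y')\lep C(t_x)+C(t_y)+o(\delta n)$ — is stronger than what is known; only the multiplicative version (with a constant $2^{k+1}$ depending on $\theta$) is available, and this is precisely what confines the conclusion to $\Omega(\delta n)$ with an unspecified constant. To complete your argument you should import that lemma (or prove an equivalent) rather than aim for an additive Fourier-analytic estimate.
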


\begin{proof}[Proof of Theorem~\ref{th:hamming1}.]
We start  the proof with a  lemma.
\begin{lemma}\label{l:prefix}
Let $(x,y)$ be a pair from from Example~\ref{hamming} (two $n$-bit strings with the Hamming distance $\theta n$ and complexity profile \eqref{eq:complexity-profile}). Let $m= \delta n$ for some $\delta\in(0,1)$. Denote by $\hat x$ and $\hat y$ the $m$-bit prefixes of $x$ and $y$ respectively. Then 
\begin{itemize}
\item $C(\hat x) = m + o(n)$,
\item $C(\hat y) = m + o(n)$,
\item $I(\hat x: \hat y) = 0.5m + o(n)$,
\item the Hamming distance between $\hat x$ and $\hat y$ is $\theta m + o(n)$.
\end{itemize}
\end{lemma}
\begin{proof}[Proof of lemma.]
Denote by $\hat x'$ and $\hat y'$ the suffixes of length $n-m$ of $x$ and $y$ respectively (so $x$ is a concatenation of $\hat x$ and $\hat x'$, and $y$ is a concatenation of $\hat y$ and $\hat y'$). The idea of the proof is simple: the law of large number guarantees that for the vast majority of pairs $(x,y)$ such that  $\hdist( x,  y)=\theta n$, the fraction of positions where $\hat x$ differs from $\hat y$
and the fraction of positions where $\hat x'$ differs from $\hat y'$ are both close to $\theta$; the pairs violating this rule are uncommon; therefore, Kolmogorov complexity of these ``exceptional'' pairs is small, and they cannot satisfy  \eqref{eq:complexity-profile}. To convert this idea into a formal proof, we need the following technical claim:

\smallskip
\noindent
\emph{Claim:} If the pair $(x,y)$ satisfies \eqref{eq:graph-complexity-profile} and the Hamming distance between $x$ and $y$ is $\theta n$, then
\[
\begin{array}{lcl}
\hdist(\hat x, \hat y)&=& \theta m + o(n), \\
\hdist(\hat x', \hat y') &=& \theta (n-m) + o(n).
\end{array}
\]
\begin{proof}[Proof of the claim.]

Denote
\[
\begin{array}{l}
\theta_1 := \frac{1}{m}\cdot \hdist(\hat x, \hat y),\\
\theta_2 := \frac{1}{n-m}\cdot \hdist(\hat x', \hat y')
\end{array}
\]
(note that  $\theta_1 m + \theta_2 (n-m) = \theta n$; as $\theta$ is fixed, there is a linear correspondence between $\theta_1$ and $\theta_2$).
For a fixed $x$ of length $n$, the number  strings $y$ of the same length that matches the parameters $m,\theta_1,\theta_2$
(i.e., that differ from $x$   
in exactly  $\theta_1m$ bits in the first $m$ positions and in $\theta_2(n-m)$ bits in the last $n-m$ positions)
is 
\[
\begin{array}{rcl}
{m \choose {\theta_1} m} \cdot {n-m \choose {\theta_2 (n-m)}} &=& 2^{h(\theta_1) m + O(\log n)} \cdot 2^{h(\theta_2) (n-m) + O(\log n)} \\
&=&2^{(\frac{m}{n} h(\theta_1) + \frac{n-m}{n} h(\theta_2)) n + O(\log n) } \\
&=&2^{(\delta  h(\theta_1) + (1-\delta)  h(\theta_2)) n + O(\log n) } \\
&\le &2^{ h(\theta) n + O(\log n) } =  2^{ 0.5 n + O(\log n) }, 
\end{array}
\]
where $h(\tau)=-\tau\log\tau - (1-\tau)\log(1-\tau)$ is the binary entropy function.
The last inequality follows from the fact that the  function $h(\tau)$ is concave, and therefore
\begin{equation}\label{eq:concavity}
 \delta h(\theta_1) + (1-\delta) h(\theta_2)  \le  h\left( \delta \theta_1 + (1-\delta) \theta_2  \right) = h(\theta).
\end{equation}
If $\theta_1$ and $\theta_2$ are not close enough to the average value $\theta$, then the gap between the left-hand side and the right-hand side in \eqref{eq:concavity} is getting large. 
More specifically, it is not hard to verify that the difference
\begin{equation}\label{eq:concavity-diff}
 h(\theta) - \delta h(\theta_1) - (1-\delta) h(\theta_2) 
\end{equation}
grows essentially proportionally to the square of $|\theta_1-\theta|$ (as the second term of the Taylor series of the function around the extremum point). So, if $|\theta_1-\theta|$ and $|\theta_2-\theta|$ are getting much larger than $\sqrt{(\log n)/n}$, then the gap \eqref{eq:concavity-diff} becomes much bigger than $(\log n)/n$, and then we obtain
\[
{m \choose {\theta_1} m} \cdot {n-m \choose {\theta_2 (n-m)}}  < 2^{0.5n - \omega(\log n)}.
\]
On the other hand,
\[
C(x,y) \eqp C(x) + C(y|x) \lep n +  \log\left({m \choose {\theta_1} m} \cdot {n-m \choose {\theta_2 (n-m)}}\right).
\]
Thus, the assumption $C(x,y)\eqp1.5n$ can be true only if $\theta_1m=\theta m+o(n)$ and $\theta_2(n-m)=\theta (n-m)+o(n)$.
\end{proof}
Note that
\[
\begin{array}{ccccl}
C(\hat x) &\le& |\hat x| + O(1)  &=& m + O(1),\\
C(\hat y) &\le& |\hat y| + O(1)  &=& m + O(1).
\end{array}
\]
Further, from the Claim it follows that
\begin{equation}\label{eq:cond-compl}
C(\hat y | \hat x) \lep \log{m \choose {\theta m + o(n)} }  = h(\theta) \cdot m + o(n) = 0.5m + o(n).
\end{equation}
Therefore,
\[
C(\hat x, \hat y) \eqp C(\hat x) + C(\hat y | \hat x)  \lep 1.5m + o(n).
\]
Thus, we have proven that
\[
\begin{array}{ccccl}
C(\hat x) &\le& m + o(n),\\
C(\hat y) &\le& m + o(n),\\
C(\hat x,\hat y) &\le& 1.5m + o(n).\\
\end{array}
\]
It remains to show that these three bounds are tight.  To this end, we observe that a similar argument gives the upper bound
\[
C(\hat x',\hat y') \le 1.5m + o(n).
\]
Since
\[
C(\hat x,\hat x',\hat y, \hat y') \eqp C(x,y) = 1.5n + O(1),
\]
we obtain 
\[
C(\hat x,\hat y) \ge 1.5m - o(n).
\]
Due to \eqref{eq:cond-compl}, this implies 
$
C(\hat x) \ge m - o(n), \text{ and similarly }C(\hat y) \ge m - o(n).
$
\end{proof}

Thus, Alice and Bob can take the prefixes of their inputs $x,y$ of size $m=\delta n$. Lemma~\ref{l:prefix} guarantees that these prefixes $\hat x $ and $\hat y$ have the  complexity profile (Kolmogorov complexities and  mutual information)  similar to the complexity profile of the original pair $(x,y)$ scaled with the factor of $\delta$ (up to an $o(n)$-term).
Thus,  Alice and Bob can apply to $\hat x $ and $\hat y$ the communication protocol from Theorem~\ref{th:rz} and end up with a secret key $z$ of size $\delta n/2-o(n)$. It is shown in \cite{RZ1} that communication complexity of this protocol is $C(\hat x|\hat y) + O(\log m)$ (note that it is enough for Alice and Bob to know the complexity profile of $(\hat x, \hat y)$  within a precision $o(n)$, see Remark~5 in \cite{RZ1}).  In our setting this communication complexity is equal to $\delta n/2 + o(n)$.
\end{proof}

\begin{proof}[Proof of Theorem~\ref{th:hamming2}.]
In the proof of the theorem we use two lemmas. The first lemma gives us a pair of simple  information inequalities:
\begin{lemma}[see, e.g., Ineq~6 in \cite{hrsv} or lemma~7 in \cite{mmrv}]\label{l:ineq}
For all binary strings $a,b,c,d$
\[
\begin{array}{cccl}
(i) & C(c) &\lep& C(c|a) + C(c|b) + I(a:b),\\
(ii) & C(c|d) &\lep& C(c|a) + C(c|b) + I(a:b|d).
\end{array}
\]
\end{lemma}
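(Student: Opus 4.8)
The plan is to derive both inequalities from the submodularity of Kolmogorov complexity together with the Kolmogorov--Levin theorem; no idea beyond the standard apparatus recalled in Preliminaries is needed. First I would record the submodularity bound: for all strings $u,v,w$,
\[
C(u,v,w) + C(w) \lep C(u,w) + C(v,w).
\]
This follows from the chain rule $C(u,v,w)\eqp C(w)+C(u|w)+C(v|u,w)$, the identities $C(u,w)\eqp C(w)+C(u|w)$, $C(v,w)\eqp C(w)+C(v|w)$, and the trivial monotonicity bound $C(v|u,w)\lep C(v|w)$. (Here $n$ bounds the total length of the strings involved.)

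For part~(i), apply submodularity with $u=a$, $v=b$, $w=c$ to get $C(c)\lep C(a,c)+C(b,c)-C(a,b,c)$. Now expand: by Kolmogorov--Levin $C(a,c)\eqp C(a)+C(c|a)$ and $C(b,c)\eqp C(b)+C(c|b)$, while $C(a,b,c)\gep C(a,b)$ since $(a,b)$ is computable from $(a,b,c)$. Substituting,
\[
C(c)\lep C(a)+C(c|a)+C(b)+C(c|b)-C(a,b)=C(c|a)+C(c|b)+I(a:b),
\]
which is~(i).

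For part~(ii) I would run the same argument relativized conditional on $d$. The conditional chain rule gives $C(a,b,c|d)+C(c|d)\lep C(a,c|d)+C(b,c|d)$, hence
\[
C(c|d)\lep C(a,c|d)+C(b,c|d)-C(a,b,c|d)\lep C(a,c|d)+C(b,c|d)-C(a,b|d).
\]
Then, by the conditional Kolmogorov--Levin theorem and the monotonicity bound $C(c|a,d)\lep C(c|a)$ (and symmetrically with $b$), one has $C(a,c|d)\eqp C(a|d)+C(c|a,d)\lep C(a|d)+C(c|a)$ and $C(b,c|d)\lep C(b|d)+C(c|b)$. Plugging these in,
\[
C(c|d)\lep C(a|d)+C(c|a)+C(b|d)+C(c|b)-C(a,b|d)=C(c|a)+C(c|b)+I(a:b|d),
\]
which is~(ii).

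The only thing requiring attention is the bookkeeping of the $O(\log n)$ error terms, but since each inequality uses only a bounded number of chain-rule and monotonicity steps, the accumulated error stays $O(\log n)$. There is no genuine obstacle here: the lemma is a routine consequence of the Kolmogorov--Levin theorem and the submodularity of Kolmogorov complexity, which is why the references \cite{hrsv} and \cite{mmrv} state it without detailed proof.
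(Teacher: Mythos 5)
Your proof is correct. The paper itself gives no proof of this lemma (it is imported from the cited references), but your derivation --- submodularity $C(u,v,w)+C(w)\lep C(u,w)+C(v,w)$ applied with $w=c$, followed by the Kolmogorov--Levin expansions and the monotonicity bound $C(a,b,c)\gep C(a,b)$, and then the same argument relativized to $d$ for part~(ii) --- is exactly the standard route to this inequality, and the $O(\log n)$ bookkeeping is indeed harmless since only a bounded number of chain-rule steps are used.
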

The other lemma is more involved: 
\begin{lemma}[\cite{romash-cond-indep}, see also  exercise~316 in \cite{shen-vereshchagin}]\label{l:cond-independence}
There exists an integer number  $k$ with the following property.
Let $x=x_0$ and $y=y_0$ be a pair of strings from Example~\ref{hamming} (two $n$-bit strings with the Hamming distance $\theta n$ and complexity profile \eqref{eq:complexity-profile}). Then there exist  two sequences of $n$-bit binary strings $x_1,\ldots,x_k$ and $y_1,\ldots,y_k$ such that 
\begin{itemize}
\item $I(x_i:y_i|x_{i+1}) = O(\log n)$ for $i=0,\ldots,k-1$,
\item $I(x_i:y_i|y_{i+1}) = O(\log n)$ for $i=0,\ldots,k-1$,
\item $I(x_k:y_k) =O(\log n)$.
\end{itemize}
(Note that $k$ is a constant that does not depend on $n$. It is  determined uniquely by the value of $\theta$, which is the normalized Hamming distance between $x$ and $y$.)
\end{lemma}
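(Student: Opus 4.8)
The statement is Exercise~316 in \cite{shen-vereshchagin}; it goes back to \cite{romash-cond-indep}, and the plan is to reconstruct that argument. I would build the two chains level by level, maintaining throughout the \emph{invariant} that the tuple $\langle x_0,y_0,x_1,y_1,\dots,x_i,y_i\rangle$ is ``generic'': its Kolmogorov complexity profile coincides (up to $O(\log n)$) with the profile of a typical element drawn from a suitable generative random process $P_i$ on $(\{0,1\}^n)^{2i+2}$. Once this invariant is established at level $k$, each of the three families of relations in the statement becomes simply an instance of a linear information (in)equality that is visible on the complexity profile of the tuple, hence holds with a $O(\log n)$ error; and since $k$ will be bounded by an absolute constant $k(\theta)$, summing $k$ such error terms is still $O(\log n)$, which is what lets the whole argument survive the logarithmic imprecision of the Kolmogorov--Levin theorem.

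The base level is the ``noisy channel'' coupling of a typical Hamming pair: $x_0$ uniform, $y_0 = x_0\oplus e$ with $e$ uniform of weight $\theta n$. Since the given pair $(x_0,y_0)$ satisfies \eqref{eq:complexity-profile}, equivalently \eqref{eq:graph-complexity-profile}, it is a typical element of this distribution (by the counting behind Lemma~\ref{l:graph-complexity-profile} together with Lemma~\ref{l:binom}). The inductive step extends $P_i$ to $P_{i+1}$ by producing $x_{i+1}$ from $x_i$ and $y_{i+1}$ from $y_i$ using two \emph{independent} fresh sources of randomness, the construction being calibrated so that in $P_{i+1}$ two things hold: (i) in the joint distribution, $x_{i+1}$ is conditionally independent of $y_i$ given $x_i$, and $y_{i+1}$ is conditionally independent of $x_i$ given $y_i$ --- i.e.\ $x_i\rightsquigarrow x_{i+1}$ and $y_i\rightsquigarrow y_{i+1}$ each ``carry away'' the common part of $(x_i,y_i)$ --- which is exactly what forces $I(x_i:y_i\mid x_{i+1})\eqp 0$ and $I(x_i:y_i\mid y_{i+1})\eqp 0$ once we pass to the complexity profile; and (ii) the pair $(x_{i+1},y_{i+1})$ is substantially \emph{less} entangled than $(x_i,y_i)$, the decrease of $I(x_{i+1}:y_{i+1})$ being kept from tailing off --- this is possible precisely because the common information of Hamming pairs is \emph{not materialisable}, so two independently chosen ``carriers'' $x_{i+1},y_{i+1}$ of the same common origin can share far less than a naive count would suggest. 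A routine sampling argument (all but an exponentially small fraction of the fresh random bits are generic relative to the already fixed, near-maximal-complexity tuple, and conversely that tuple is generic relative to the fresh bits) shows that a good realisation of the fresh randomness exists and may be fixed, so the invariant propagates.

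Termination: after boundedly many steps the mutual information of the current pair has dropped below the $\Omega(n)$ regime, and one more calibrated step sends it down to $O(\log n)$, so $I(x_k:y_k)\eqp 0$; the number $k$ of steps is governed by a combinatorial parameter of the Hamming cube (roughly, how many compositions of the distance-$\theta n$ relation are needed to reach the almost-independent regime) and depends only on $\theta$, not on $n$. The genuinely delicate point --- and the place where I would follow \cite{romash-cond-indep} most closely --- is the design of the single inductive step so that the \emph{two} conditional-independence relations hold simultaneously while real progress toward independence is made, together with the proof that only a bounded (in terms of $\theta$) number of steps is needed rather than $\Theta(\log n)$; this is exactly where the combinatorics of random overlaps of error patterns and the concavity of the binary entropy (Lemma~\ref{l:binom}) come in, and it is what ultimately makes Theorem~\ref{th:hamming2} give an $\Omega(\delta n)$ bound rather than something weaker. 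Everything else is bookkeeping with complexity profiles and the Kolmogorov--Levin theorem.
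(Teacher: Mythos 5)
The paper does not actually prove this lemma: it is imported from \cite{romash-cond-indep} (cf.\ exercise~316 in \cite{shen-vereshchagin}), and the remark following it only sketches the intended construction --- each pair $(x_i,y_i)$ is a maximally complex pair of $n$-bit strings at Hamming distance $\theta_i n$, with $\theta_0=\theta<\theta_1<\dots<\theta_k=1/2$. Measured against that construction, your inductive step has a genuine gap at its central point. You generate $x_{i+1}$ from $x_i$ plus fresh private randomness (and $y_{i+1}$ from $y_i$ likewise), i.e.\ you impose $I(x_{i+1}:y_i|x_i)\eqp 0$ and $I(y_{i+1}:x_i|y_i)\eqp 0$, and you assert that this ``is exactly what forces'' $I(x_i:y_i|x_{i+1})\eqp 0$ and $I(x_i:y_i|y_{i+1})\eqp 0$. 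That implication is false: a fresh random string $x_{i+1}$ independent of everything satisfies your condition while $I(x_i:y_i|x_{i+1})\eqp I(x_i:y_i)\eqp n/2$; and for the natural choice $x_{i+1}=x_i\oplus f$ with $f$ random of weight $\rho n$ one gets $I(x_i:y_i|x_{i+1})\eqp \bigl(h(\rho+\theta_i-2\rho\theta_i)-h(\theta_i)\bigr)n=\Theta(n)$ for any fixed $\rho\in(0,1/2)$. Worse, the architecture itself conflicts with what is needed: if $I(x_{i+1}:y_i|x_i)\eqp 0$ (guaranteed by your one-sided sampling) and $I(x_i:y_i|x_{i+1})\eqp 0$ (required), then expanding $I(y_i:\langle x_i,x_{i+1}\rangle)$ in two ways gives $I(x_{i+1}:y_i)\eqp I(x_i:y_i)$, so $x_{i+1}$ would have to retain the entire mutual information of the pair; your step provides no mechanism for this, and it is exactly where the ``calibration'' you defer to would have to do all the work.

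In the construction of \cite{romash-cond-indep}, $x_{i+1}$ is built from \emph{both} $x_i$ and $y_i$: it is a maximally complex string at prescribed distances $\alpha n$ from $x_i$ and $\beta n$ from $y_i$ with $\alpha+\beta-2\alpha\beta=\theta_i$, i.e.\ a ``midpoint'' from which $x_i$ and $y_i$ look like two independently chosen typical points. Only with this calibration does a counting argument give $C(y_i|x_i,x_{i+1})\eqp C(y_i|x_{i+1})\eqp h(\beta)n$ and hence $I(x_i:y_i|x_{i+1})\eqp 0$; note that a typical point at distance $\alpha n$ from $x_i$ chosen without looking at $y_i$ lies at distance about $(\alpha+\theta_i-2\alpha\theta_i)n\neq\beta n$ from $y_i$, so the required triple cannot arise from a step that sees $x_i$ alone. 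The new pair $(x_{i+1},y_{i+1})$ is then again a maximally complex pair at a strictly larger distance $\theta_{i+1}n$, and a number of steps depending only on $\theta$ reaches $\theta_k=1/2$, where $I(x_k:y_k)\eqp 0$. Your base case, the genericity bookkeeping, and the observation that $k=O(1)$ keeps the logarithmic error terms under control are fine, but the heart of the induction --- the design of $x_{i+1}$ and $y_{i+1}$ --- is mis-specified, so the proposal as written does not establish the lemma.
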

\begin{remark}
In the proof Lemma~\ref{l:cond-independence} suggested in \cite{romash-cond-indep}, each pair $(x_i,y_i)$ consists of two binary strings of length $n$ with Hamming distance $\theta_i n$ and maximal possible (for this value of $\theta_i$) Kolmogorov complexity. In our case, the initial $\theta_0=\theta$ is chosen so that $I(x_0:y_0) \eqp n/2$. Each next $\theta_i$ is bigger than the previous one. For the last pair  we have $\theta_k=1/2$. This means that  in the last pair $(x_k,y_k)$   the strings differ in a half of the positions, so the mutual information is  only $O(\log n)$.
\end{remark}
Applying  Lemma~\ref{l:ineq}(ii), we obtain for every string $w$ and for all $x_i,y_i, x_{i+1}, y_{i+1}$ the inequalities
\[
\begin{array}{ccl}
C(w|x_{i+1})& \lep &C(w|x_{i}) + C(w|y_{i}) + I(x_i:y_i | x_{i+1} ),\\
C(w|y_{i+1})& \lep &C(w|x_{i}) + C(w|y_{i}) + I(x_i:y_i | y_{i+1} ).\\
\end{array}
\]
Combining these inequalities for $i=0,\ldots,k-1$ and taking into account  the assumptions $I(x_i:y_i|x_{i+1}) = O(\log n)$  and  $I(x_i:y_i|y_{i+1}) = O(\log n)$,  we obtain
\[
C(w|x_{k}) + C(w|y_{k}) \lep 2^k \cdot \left( C(w|x_{0}) + C(w|x_{0}) \right).
\]
Now we use Lemma~\ref{l:ineq}(i)  and obtain
\[
C(w) \lep C(w|x_{k}) + C(w|y_{k}) + I(x_k:y_k).
\]
With the condition $I(x_k:y_k) =O(\log n)$ we get
\begin{equation}\label{eq:2^k}
C(w)\lep 2^k\cdot \left(C(w|x_0)+C(w|y_0) \right) +I(x_k:y_k)\eqp 2^k\cdot \left(C(w|x)+C(w|y)\right) .
\end{equation}
Denote by $r_A$ and $r_B$ the strings of random bits used in the protocol by Alice and Bob respectively. With high probability the randomly chosen $r_A$ and $r_B$ have negligibly small mutual information with $w,x,y$. Therefore, \eqref{eq:2^k} rewrites to
\begin{equation}\label{eq:2^k-bis}
C(w)\lep 2^k\cdot\left(C(w|x,r_A)+C(w|y,r_B)\right) .
\end{equation}

We apply \eqref{eq:2^k-bis} to $w:=\langle t,z\rangle$, where $z$ is the secret key obtained by Alice and Bob, and
$t$ is the transcript of the communication protocol. Then 
\[
C(w) \eqp C(z) + C(t)
\]
(the key has no mutual information with the transcript), and
\[
C(w|x,r_A) \lep C(t),\ C(w|y,r_B) \lep C(t)
\]
(given the transcript and the data available to Alice or to Bob, we can compute $z$). Plugging this in \eqref{eq:2^k-bis} we obtain
\[
C(z) + C(t) \lep 2^{k+1} \cdot  C(t) ,
\]
which implies $C(t) = \Omega(C(z))$. Therefore, the size of the transcript $t$  is not less than $ \Omega(\delta n)$, and we are done.
\end{proof}

\section{Conclusion}

In Theorem~\ref{th:main} we have proven a lower bound for communication complexity of protocols with \emph{private} randomness. The argument can be extended to the setting where Alice and Bob use both \emph{private} and \emph{public} random bits (the private sources of randomness are available only to Alice and Bob respectively; the public source of randomness is available to both parties and to the eavesdropper). Thus, the problem of the \emph{worst case}  complexity is resolved for the most general natural model of communication.

For pairs of inputs $(x,y)$ from Example~\ref{lines-and-points} and Example~\ref{quasi-distance} 
(corresponding to edges of a graph with a large spectral gap) we have  got a lower bound for communication complexity that matches the known upper bound.
However,  in the known protocols achieving this bound the communication is very non-symmetric: all (or almost all) the burden of communication falls on only one of two participants. We do not know whether 
there exist more balanced protocols, where the communication complexity is shared equally between Alice and Bob.

We have no characterization of  the optimal communication complexity of the  secret key agreement for pairs of inputs $(x,y)$ that do not enjoy the spectral property required in Corollary~\ref{mixing-lemma-large-sets}. In particular, there remains a large gap between  constant hidden in the $O(\delta n)$ notation in Theorem~\ref{th:hamming1}  and in the  $\Omega(\delta n)$ notation in  and Theorem~\ref{th:hamming2}, so the question on the optimal trade-off between the secret key size and communication complexity for $(x,y)$ from Example~\ref{hamming} remains open (\emph{cf}.  Conjecture~1 in \cite{Liu-Cuff-Verdu} for an analogous problem in Shannon's setting).

In some applications (see, e.g.,  \cite{devetak2005distillation,horodecki2009quantum}) it is natural to assume that the eavesdropper is given a non-negligible \emph{a priori} information about Alice' and Bob's inputs. For this setting we do not have a characterization of the optimal size of the key and the communication complexity of the protocol.

\paragraph{Acknowledgements}
The authors thank the anonymous reviewers 
for instructive suggestions and corrections concerning this conference publication as well as the full version of the paper.
The second author is grateful to  the Max Planck Institute for Mathematics in the Sciences (Leipzig, Germany) for hospitality,
and  thanks Rostislav Matveev and Jacobus Portegies  for useful discussions.

\bibliographystyle{plain}
\bibliography{arxiv-v6}{}

\begin{thebibliography}{10}

\bibitem{Alswede-Csiszar-1993}
Rudolf Ahlswede and Imre Csisz{\'a}r.
\newblock Common randomness in information theory and cryptography. {I}.
  {S}ecret sharing.
\newblock {\em IEEE Transactions on Information Theory}, 39(4):1121--1132,
  1993.

\bibitem{AS1}
Noga Alon and Joel~H. Spencer.
\newblock {\em The probabilistic method}.
\newblock John Wiley \& Sons, 2004.

\bibitem{antunes2007cryptographic}
Luis Antunes, Sophie Laplante, Alexandre Pinto, and Liliana Salvador.
\newblock Cryptographic security of individual instances.
\newblock In {\em International Conference on Information Theoretic Security},
  pages 195--210. Springer, 2007.

\bibitem{BCIMG1}
Andries~E. Brouwer, Sebastian~M. Cioab{\u{a}}, Ferdinand Ihringer, and Matt
  McGinnis.
\newblock The smallest eigenvalues of hamming graphs, johnson graphs and other
  distance-regular graphs with classical parameters.
\newblock {\em Journal of Combinatorial Theory, Series B}, 133:88--121, 2018.

\bibitem{buhrman2001resource}
Harry Buhrman, Lance Fortnow, and Sophie Laplante.
\newblock Resource-bounded kolmogorov complexity revisited.
\newblock {\em SIAM Journal on Computing}, 31(3):887--905, 2001.

\bibitem{cmrsv}
Alexei Chernov, Andrej Muchnik, Andrei Romashchenko, Alexander Shen, and
  Nikolai Vereshchagin.
\newblock Upper semi-lattice of binary strings with the relation “x is simple
  conditional to y”.
\newblock {\em Theoretical Computer Science}, 271(1-2):69--95, 2002.

\bibitem{devetak2005distillation}
Igor Devetak and Andreas Winter.
\newblock Distillation of secret key and entanglement from quantum states.
\newblock {\em Proceedings of the Royal Society A: Mathematical, Physical and
  engineering sciences}, 461(2053):207--235, 2005.

\bibitem{diffie-hellman}
Whitfield Diffie and Martin Hellman.
\newblock New directions in cryptography.
\newblock {\em IEEE Transactions on Information Theory}, 22(6):644--654, 1976.

\bibitem{bipartite-mixing-lemma}
Shai Evra, Konstantin Golubev, and Alexander Lubotzky.
\newblock Mixing properties and the chromatic number of ramanujan complexes.
\newblock {\em International Mathematics Research Notices},
  2015(22):11520--11548, 2015.

\bibitem{grunwald2004shannon}
Peter Gr{\"u}nwald and Paul Vit{\'a}nyi.
\newblock Shannon information and kolmogorov complexity.
\newblock {\em arXiv preprint arXiv:0410002}, 2004.

\bibitem{hrsv}
Daniel Hammer, Andrei Romashchenko, Alexander Shen, and Nikolai Vereshchagin.
\newblock Inequalities for shannon entropy and kolmogorov complexity.
\newblock {\em Journal of Computer and System Sciences}, 60(2):442--464, 2000.

\bibitem{hoeffding1963probability}
Wassily Hoeffding.
\newblock Probability inequalities for sums of bounded random variables.
\newblock {\em Journal of the American Statistical Association},
  58(301):13--30, 1963.

\bibitem{HLW1}
Shlomo Hoory, Nathan Linial, and Avi Wigderson.
\newblock Expander graphs and their applications.
\newblock {\em Bulletin of the American Mathematical Society}, 43(4):439--561,
  2006.

\bibitem{horodecki2009quantum}
Ryszard Horodecki, Pawe{\l} Horodecki, Micha{\l} Horodecki, and Karol
  Horodecki.
\newblock Quantum entanglement.
\newblock {\em Reviews of modern physics}, 81(2):865, 2009.

\bibitem{ignatenko2012biometric}
Tanya Ignatenko and Frans~MJ Willems.
\newblock Biometric security from an information-theoretical perspective.
\newblock {\em Foundations and Trends{\textregistered} in Communications and
  Information Theory}, 7(2--3):135--316, 2012.

\bibitem{kaced}
Tarik Kaced, Andrei Romashchenko, and Nikolai Vereshchagin.
\newblock A conditional information inequality and its combinatorial
  applications.
\newblock {\em IEEE Transactions on Information Theory}, 64(5):3610--3615,
  2018.

\bibitem{kushilevitz}
Eyal Kushilevitz and Noam Nisan.
\newblock {\em Communication Complexity}.
\newblock Cambridge University Press, 2006.

\bibitem{li-vitanyi}
Ming Li and Paul Vit{\'a}nyi.
\newblock {\em An introduction to Kolmogorov complexity and its applications}.
\newblock Springer, 4 edition, 2019.

\bibitem{Liu-Cuff-Verdu}
Jingbo Liu, Paul Cuff, and Sergio Verd{\'u}.
\newblock Secret key generation with limited interaction.
\newblock {\em IEEE Transactions on Information Theory}, 63(11):7358--7381,
  2017.

\bibitem{LZh1}
Xiaogang Liu and Sanming Zhou.
\newblock Eigenvalues of cayley graphs.
\newblock {\em arXiv preprint arXiv:1809.09829}, 2018.

\bibitem{mmrv}
Konstantin Makarychev, Yury Makarychev, Andrei Romashchenko, and Nikolai
  Vereshchagin.
\newblock A new class of non-shannon-type inequalities for entropies.
\newblock {\em Communications in Information and Systems}, 2(2):147--166, 2002.

\bibitem{Maurer-1993}
Ueli~M. Maurer.
\newblock Secret key agreement by public discussion from common information.
\newblock {\em IEEE transactions on information theory}, 39(3):733--742, 1993.

\bibitem{quasi-dist}
Archie Medrano, Perla Myers, Harold~M. Stark, and Audrey Terras.
\newblock Finite analogues of euclidean space.
\newblock {\em Journal of Computational and Applied Mathematics},
  68(1-2):221--238, 1996.

\bibitem{muchnik1998common}
An~A Muchnik.
\newblock On common information.
\newblock {\em Theoretical Computer Science}, 207(2):319--328, 1998.

\bibitem{muchnik}
Andrej~A. Muchnik.
\newblock Conditional complexity and codes.
\newblock {\em Theoretical Computer Science}, 271(1-2):97--109, 2002.

\bibitem{musatov2011variations}
Daniil Musatov, Andrei Romashchenko, and Alexander Shen.
\newblock Variations on muchnik’s conditional complexity theorem.
\newblock {\em Theory of Computing Systems}, 49(2):227--245, 2011.

\bibitem{nagura1952interval}
Jitsuro Nagura.
\newblock On the interval containing at least one prime number.
\newblock {\em Proceedings of the Japan Academy}, 28(4):177--181, 1952.

\bibitem{newman1991private}
Ilan Newman.
\newblock Private vs. common random bits in communication complexity.
\newblock {\em Information processing letters}, 39(2):67--71, 1991.

\bibitem{RVW1}
Omer Reingold, Salil Vadhan, and Avi Wigderson.
\newblock Entropy waves, the zig-zag graph product, and new constant-degree
  expanders and extractors.
\newblock In {\em Proceedings 41st Annual Symposium on Foundations of Computer
  Science}, pages 3--13. IEEE, 2000.

\bibitem{RZ1}
Andrei Romashchenko and Marius Zimand.
\newblock An operational characterization of mutual information in algorithmic
  information theory.
\newblock {\em Journal of the ACM (JACM)}, 66(5):1--42, 2019.

\bibitem{romash-cond-indep}
Andrei~E. Romashchenko.
\newblock Pairs of words with nonmaterializable mutual information.
\newblock {\em Problems of Information Transmission}, 36(1), 2000.

\bibitem{shen-vereshchagin}
Alexander Shen, Vladimir Uspensky, and Nikolay Vereshchagin.
\newblock {\em Kolmogorov complexity and algorithmic randomness}, volume 220.
\newblock American Mathematical Soc., 2017.

\bibitem{sudan-survey}
Madhu Sudan, Himanshu Tyagi, and Shun Watanabe.
\newblock Communication for generating correlation: A unifying survey.
\newblock {\em IEEE Transactions on Information Theory}, 66(1):5--37, 2019.

\bibitem{Tyagi}
Himanshu Tyagi.
\newblock Common information and secret key capacity.
\newblock {\em IEEE Transactions on Information Theory}, 59(9):5627--5640,
  2013.

\bibitem{mixing-lines-points}
Le~Anh Vinh.
\newblock The {S}zemer{\'e}di--{T}rotter type theorem and the sum-product
  estimate in finite fields.
\newblock {\em European Journal of Combinatorics}, 32(8):1177--1181, 2011.

\bibitem{zvonkin-levin}
Alexander~K. Zvonkin and Leonid~A. Levin.
\newblock The complexity of finite objects and the development of the concepts
  of information and randomness by means of the theory of algorithms.
\newblock {\em Russian Mathematical Surveys}, 25(6):83, 1970.

\bibitem{zyavgarov}
Rustam Zyavgarov.
\newblock Private communication, 2022.

\end{thebibliography}

\appendix

\section{Information-theoretic security in the language of Kolmogorov complexity.}\label{s:antunes-theory}

In this section we briefly recall the notion of secure encoding in the classical settings (in terms of probability distributions and Shannon's information theory)
and its homologues in the framework of Kolmogorov complexity, as proposed in  \cite{antunes2007cryptographic}. 
We begin with the very basic cryptographic primitive---a symmetric encryption scheme.
\begin{definition}
We say that a private-key encryption scheme is a pair of algorithms $(\mathrm{Enc}_n, \mathrm{Dec}_n)$ and associated spaces: 
the set of clear messages ${\mathcal M}_n$, the set of ciphertexts ${\mathcal C}_n$, and for the space of keys ${\mathcal K} = \{0,1\}^n$,
\[
\begin{array}{l}
\mathrm{Enc}_n : {\mathcal M}_n\times{\mathcal K}_n \to {\mathcal C}_n\\
\mathrm{Dec}_n : {\mathcal C}_n \times {\mathcal K}_n \to {\mathcal M}_n
\end{array}
\]
such that for every $m\in \mathcal M$ and every $k\in {\mathcal K}_n$ we have
$
\mathrm{Dec}_n(\mathrm{Enc}_n(m,k),k) = m
$
(encrypting a message and then decrypting the resulting ciphertext with the same key yields the original message).
\end{definition}

There is a pretty common definition of a perfectly secure encryption scheme:

\begin{definition}\label{def:perfect-secrecy}
A private-key encryption scheme $(\mathrm{Enc}_n, \mathrm{Dec}_n)$  is called  \emph{perfectly secure} if for the uniform distribution of messages on ${\mathcal M}_n$ 
and the uniform distribution of the key on ${\mathcal K} $ (independent of the distribution of the messages $m$), 
we have
\[
\forall m\in {\mathcal M}_n,\ \forall e\in {\mathcal C}_n, \   \mathrm{Pr}[\text{the clear message is } m \mid  {\mathrm Enc}_n(m,k) = e]   =  \mathrm{Pr} [\text{the clear message is } m] 
\]
In words: the knowledge of the ciphertext does not bring any information on the message.
\end{definition}

\begin{remark} 
We can easily reformulate the definition for more general (non-uniform) distributions on ${\mathcal M}_n$.
\end{remark}

\begin{example}
The classical Vernam cipher scheme (one-time pad) where $\mathrm{Enc}(m,k) = m\oplus k$ (bitwise XOR) is \emph{perfectly secure}.
\end{example}

The assumption of independence of two random variables is a clean and nice mathematical concept, it is a perfect starting point for building an elegant mathematical theory.
But, in practice, we hardly can guarantee that two random values are absolutely independent.
Let us define a relaxed version of Definition~\ref{def:perfect-secrecy}, with a minor imprecision allowed  in the condition of independence:

\begin{definition}[see \cite{antunes2007cryptographic}]\label{def:almost-perfect-secrecy}
A private-key encryption scheme with a distribution $\mu$ defined  on ${\mathcal M}_n \times {\mathcal K}_n$  
is called $\sigma$-\emph{almost perfectly secure} if 
\[
H(\text{random value of }m)  - H(\text{random value of }m \mid \text{random value of } e) <\sigma,
\]
where $e = {\mathrm Enc}_n(m,k)$, and 
$H(\cdot)$ stands for Shannon entropy.
In what follows we assume for simplicity that $\mu$ is uniform, although this definition can be used in much more general settings.
\end{definition}
This definition of security has a homologue in the framework of Kolmogorov complexity,
where we define not security of the scheme in general but security of a specific ciphertext for a specific message.

\begin{definition}[again, see \cite{antunes2007cryptographic}]\label{def:k-almost-perfect-secrecy}
Let us fix  a private-key encryption scheme and a distribution $\mu$ on ${\mathcal M}_n \times {\mathcal K}_n$. We say that for an instance $(m,k)\in{\mathcal M}_n\times {\mathcal K}_n$ the ciphertext $e=\mathrm{Enc}(m,k)$ is $\rho$-secure if 
\[
C(m) - C(m \mid e) < \rho.
\]
In words: the encryption is secure  if the ciphertext contains negligibly small information on the message. 
\end{definition}

\begin{example} 
In the  Vernam cipher scheme, XOR-ing the clear message $m$ with a Kolmogorov-random (independent of $m$) key $k$ results in a secure ciphertext; 
XOR-ing the clear message $m$ with an instance of key $k=000\ldots0$ is not secure, see \cite[Section~3.3]{antunes2007cryptographic} for more details.
\end{example}

Definition~\ref{def:almost-perfect-secrecy} and Definition~\ref{def:k-almost-perfect-secrecy} are related with each other  in some  formal sense.
If a scheme is almost perfectly secure (Definition~\ref{def:almost-perfect-secrecy}), 
then for most instances $(m,k)\in{\mathcal M}_n\times {\mathcal K}_n$ the ciphertext  is secure (Definition~\ref{def:k-almost-perfect-secrecy}).
There is also an implication in the opposite direction:
if many  instances of a cipher system are secure (Definition~\ref{def:k-almost-perfect-secrecy}), then the system is almost perfectly secure (Definition~\ref{def:almost-perfect-secrecy}). 
Moreover, the connection between the definitions of security in Shannon's and Kolmogorov's frameworks can be extended to non-uniform distributions.
The precise statement of the translation between Definition~\ref{def:almost-perfect-secrecy} and Definition~\ref{def:k-almost-perfect-secrecy}
(the trade-off between $\sigma$ in Definition~\ref{def:almost-perfect-secrecy} and $\rho$ in Definition~\ref{def:k-almost-perfect-secrecy}) 
can be found in \cite[Theorem~13]{antunes2007cryptographic}.

A number of other cryptographic definitions can also be translated into the language of Kolmogorov complexity, see, e.g., the discussion of secret sharing and authentication codes in 
\cite{antunes2007cryptographic}.

\section{Technical Lemmas and a Proof of Proposition~\ref{p:newman}}

In this section we prove  a version of Newman's theorem (on sampling random bits) 
for communication protocols with private randomness.
We start with two technical lemmas and then proceed with a proof of  Proposition~\ref{p:newman}.

\begin{definition}
Let $R_A$ and $R_B$ be finite sets and let $S\subset A\times B$. We say that a sequence (or a multiset) of elements $a_1,\ldots,a_k$ in $A$ and a sequence (multiset) of elements $b_1,\ldots,b_k$ in $B$ provide a $\delta$-precise sampling of $S$ in $A\times B$ if
\[
 \left|\prob_{i,j}[(a_i,b_j)\in S]-\frac{|S|}{|A\times B|}\right|<\delta,
\]
see Fig.~\ref{pic:duble-sampling}.
\end{definition}

\begin{figure}
\begin{center}
\begin{tikzpicture}[scale=0.70, axis/.style={very thick, ->}, thick/.style={line width=1.5pt}]

\draw [black] plot [smooth cycle] coordinates {(7,-1) (9.2,-1.4) (8,1.5) (4,5.1) (3.5,4.5)} ;

\fill[color=gray!70]{ plot [smooth cycle] coordinates {(7,-1) (9.2,-1.4) (8,1.5) (4,5.1) (3.5,4.5)} };

\draw[axis] (1.0,-3)  -- (11,-3) node(xline)[right] {};
\draw[axis] (1.5,-3.5)  -- (1.5,6.0) node(xline)[right] {};
   
\draw [decorate,decoration={brace,amplitude=10pt},thick]  (9.21,-3.8)--(3.26,-3.8) ;
 \draw  (6.20,-4.70) node { $A$ };
   \draw  (3.26,-3.5) -- (3.26,5.8);    
   \draw  (9.2,-3.5) -- (9.25,5.8);   

   \draw [decorate,decoration={brace,amplitude=10pt},thick]  (0.5,-1.65) -- (0.5,5.25) ;
   \draw (0.6,5.35) -- (9.8,5.35);       
   \draw (0.6,-1.65) -- (9.8,-1.65);   
 \draw  (-0.5,1.75) node { $B$ };

 \draw [dashed] (4.0,-3.0) -- (4.0,5.35);   
 \draw [thick]  node at  (4.0,-3.0) {$\bullet$}; 
 \draw  (4.0,-3.5) node { $a_1$ };

 \draw [dashed] (5.0,-3.0) -- (5.0,5.35);   
 \draw [thick]  node at  (5.0,-3.0) {$\bullet$}; 
  \draw  (5.0,-3.5) node { $a_2$ };

 \draw [dashed] (5.5,-3.0) -- (5.5,5.35);   
 \draw [thick]  node at  (5.5,-3.0) {$\bullet$}; 
 \draw  (5.5,-3.5) node { $a_3$ };

 \draw [dashed] (6.25,-3.0) -- (6.25,5.35);   
 \draw [thick]  node at  (6.25,-3.0) {$\bullet$};  
  \draw  (6.25,-3.5) node { $a_4$ };

 \draw  (7.3,-3.5) node { $\ldots$ };

 \draw [dashed] (8.3,-3.0) -- (8.3,5.35);   
 \draw [thick]  node at  (8.3,-3.0) {$\bullet$};  
 \draw  (8.3,-3.5) node { $a_k$ };

 \draw [dashed] (1.5,-1.1) -- (9.3,-1.1);   
 \draw [thick]  node at  (1.5,-1.1) {$\bullet$}; 
  \draw  (1.1,-1.1) node { $b_1$ };

 \draw [dashed] (1.5,-0.3) -- (9.3,-0.3);   
 \draw [thick]  node at  (1.5,-0.3) {$\bullet$}; 
   \draw  (1.1,-0.3) node { $b_2$ };

 \draw [dashed] (1.5,0.3) -- (9.3,0.3);   
 \draw [thick]  node at  (1.5,0.3) {$\bullet$}; 
  \draw  (1.1,0.3) node { $b_3$ };

 \draw [dashed] (1.5,1.7) -- (9.3,1.7);   
 \draw [thick]  node at  (1.5,1.7) {$\bullet$}; 
  \draw  (1.1,1.7) node { $b_4$ };

 \draw [dashed] (1.5,2.7) -- (9.3,2.7);   
 \draw [thick]  node at  (1.5,2.7) {$\bullet$}; 
  \draw  (1.1,2.6) node { $b_5$ };

 \draw [dashed] (1.5,3.0) -- (9.3,3.0);   
 \draw [thick]  node at  (1.5,3.0) {$\bullet$}; 
  \draw  (1.1,3.1) node { $b_6$ };
  
  \draw  (1.1,4.0) node { $\vdots$ };
  
 \draw [dashed] (1.5,4.6) -- (9.3,4.6);   
 \draw [thick]  node at  (1.5,4.6) {$\bullet$}; 
   \draw  (1.1,4.6) node { $b_k$ };

 \draw [thick]  node at  (4.0,4.6) {$\bullet$}; 
 \draw [thick]  node at  (5.0,3.0) {$\bullet$}; 
 \draw [thick]  node at  (5.0,2.7) {$\bullet$}; 
 \draw [thick]  node at  (5.5,3.0) {$\bullet$}; 
 \draw [thick]  node at  (5.5,2.7) {$\bullet$}; 
 \draw [thick]  node at  (6.25,3.0) {$\bullet$}; 
 \draw [thick]  node at  (6.25,2.7) {$\bullet$}; 
 \draw [thick]  node at  (5.5,1.7) {$\bullet$};  
 \draw [thick]  node at  (6.25,1.7) {$\bullet$};  
 \draw [thick]  node at  (6.25,0.3) {$\bullet$};  
 \draw [thick]  node at  (8.3,0.3) {$\bullet$};  
 \draw [thick]  node at  (8.3,-0.3) {$\bullet$}; 
 \draw [thick]  node at  (8.3,-1.1) {$\bullet$};

  \draw [thick]  node at  (5.0,4.6) {$\circ$}; 
  \draw [thick]  node at  (5.5,4.6) {$\circ$};  
  \draw [thick]  node at  (6.25,4.6) {$\circ$};     
  \draw [thick]  node at  (8.3,4.6) {$\circ$};   

  \draw [thick]  node at  (4.0,3.0) {$\circ$}; 
  \draw [thick]  node at  (8.3,3.0) {$\circ$};   

  \draw [thick]  node at  (4.0,2.7) {$\circ$}; 
  \draw [thick]  node at  (8.3,2.7) {$\circ$};   

  \draw [thick]  node at  (4.0,1.7) {$\circ$}; 
  \draw [thick]  node at  (5.0,1.7) {$\circ$}; 
  \draw [thick]  node at  (8.3,1.7) {$\circ$};

  \draw [thick]  node at  (4.0,0.3) {$\circ$}; 
  \draw [thick]  node at  (5.0,0.3) {$\circ$}; 
  \draw [thick]  node at  (5.5,0.3) {$\circ$};

  \draw [thick]  node at  (4.0,-0.3) {$\circ$}; 
  \draw [thick]  node at  (5.0,-0.3) {$\circ$}; 
  \draw [thick]  node at  (5.5,-0.3) {$\circ$};  
  \draw [thick]  node at  (6.25,-0.3) {$\circ$};

  \draw [thick]  node at  (4.0,-1.1) {$\circ$}; 
  \draw [thick]  node at  (5.0,-1.1) {$\circ$}; 
  \draw [thick]  node at  (5.5,-1.1) {$\circ$};  
  \draw [thick]  node at  (6.25,-1.1) {$\circ$};

 \draw  (7.2,1.0) node { \Huge $\mathbf{S}$ };
\end{tikzpicture}
\caption{Sampling random points in $S\subset A\times B$ by choosing independently random columns $a_i\in A$ and random rows $b_j\in B$.}\label{pic:duble-sampling}
\end{center}
\end{figure}
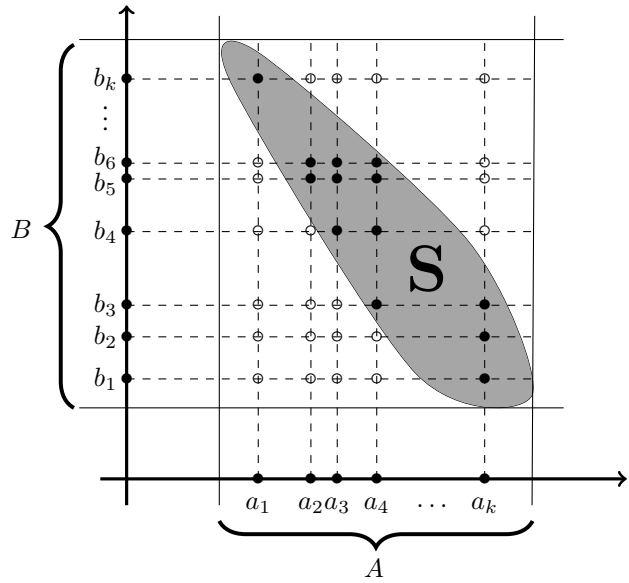

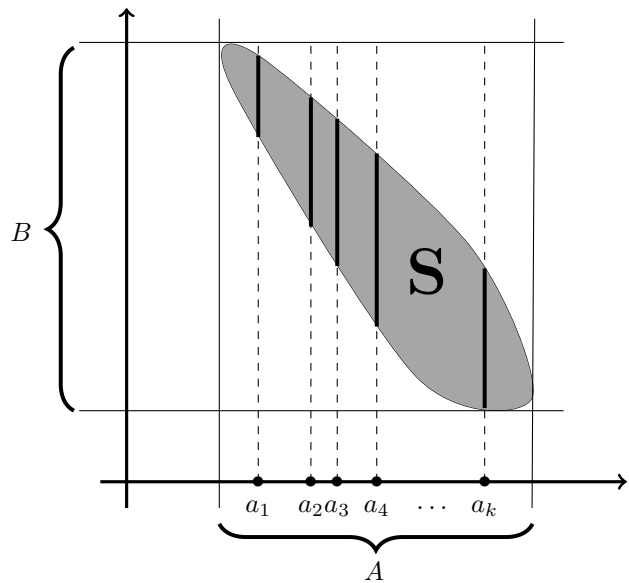
\begin{figure}
\begin{center}
\begin{tikzpicture}[scale=0.70, axis/.style={very thick, ->}, thick/.style={line width=1.5pt}]

\draw [black] plot [smooth cycle] coordinates {(7,-1) (9.2,-1.4) (8,1.5) (4,5.1) (3.5,4.5)} ;

\fill[color=gray!70]{ plot [smooth cycle] coordinates {(7,-1) (9.2,-1.4) (8,1.5) (4,5.1) (3.5,4.5)} };

\draw[axis] (1.0,-3)  -- (11,-3) node(xline)[right] {};
\draw[axis] (1.5,-3.5)  -- (1.5,6.0) node(xline)[right] {};

\draw [decorate,decoration={brace,amplitude=10pt},thick]  (9.21,-3.8)--(3.26,-3.8) ;
 \draw  (6.20,-4.70) node { $A$ };
   \draw  (3.26,-3.5) -- (3.26,5.8);    
   \draw  (9.2,-3.5) -- (9.25,5.8);   

   \draw [decorate,decoration={brace,amplitude=10pt},thick]  (0.5,-1.65) -- (0.5,5.25) ;
   \draw (0.6,5.35) -- (9.8,5.35);       
   \draw (0.6,-1.65) -- (9.8,-1.65);   
 \draw  (-0.5,1.75) node { $B$ };

 \draw [dashed] (4.0,-3.0) -- (4.0,5.35); 
 \draw [thick]  node at  (4.0,-3.0) {$\bullet$}; 
 \draw  (4.0,-3.5) node { $a_1$ };
  \draw [thick] (4.0,3.55) -- (4.0,5.1);

 \draw [dashed] (5.0,-3.0) -- (5.0,5.35);   
 \draw [thick]  node at  (5.0,-3.0) {$\bullet$}; 
  \draw  (5.0,-3.5) node { $a_2$ };
  \draw [thick] (5.0,1.85) -- (5.0,4.30);   

 \draw [dashed] (5.5,-3.0) -- (5.5,5.35);   
 \draw [thick]  node at  (5.5,-3.0) {$\bullet$}; 
 \draw  (5.5,-3.5) node { $a_3$ };
  \draw [thick] (5.5,1.1) -- (5.5,3.90); 

 \draw [dashed] (6.25,-3.0) -- (6.25,5.35);   
 \draw [thick]  node at  (6.25,-3.0) {$\bullet$};  
  \draw  (6.25,-3.5) node { $a_4$ };
    \draw [thick] (6.25,-0.05) -- (6.25,3.24); 

 \draw  (7.3,-3.5) node { $\ldots$ };

 \draw [dashed] (8.3,-3.0) -- (8.3,5.35);   
 \draw [thick]  node at  (8.3,-3.0) {$\bullet$};  
 \draw  (8.3,-3.5) node { $a_k$ };
    \draw [thick] (8.3,-1.6) -- (8.3,1.05);

 \draw  (7.2,1.0) node { \Huge $\mathbf{S}$ };
\end{tikzpicture}
\caption{Restriction of a set $S\subset A\times B$ onto the product $\{a_1,\ldots,a_k\}\times B$
with randomly chosen  columns $a_i\in A$.}
\label{pic:sinlge-sampling}
\end{center}
\end{figure}

\begin{lemma}\label{l:large-numbers}
Let $A$ and $B$ be finite sets and let $S\subset A\times B$. 
We choose at  random sequences of elements $a_1,\ldots,a_k$ in $A$ and  $b_1,\ldots,b_k$ in $B$
(all $a_i$ and $b_j$ are independent and uniformly distributed on $A$ and $B$ respectively).
Then with probability at least 
$
1-4e^{-\delta^2 k/2}
$
the chosen pair of sequences provides  a $\delta$-precise sampling  of $S$.
\end{lemma}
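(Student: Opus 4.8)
The plan is a two-stage (column-then-row) sampling argument, each stage controlled by a standard Chernoff--Hoeffding bound. Write $\rho := |S|/|A\times B|$ for the true density of $S$ and
\[
\hat\rho := \prob_{i,j}[(a_i,b_j)\in S] = \frac{1}{k^2}\sum_{i=1}^{k}\sum_{j=1}^{k}\mathbf{1}[(a_i,b_j)\in S]
\]
for the empirical density produced by the sampled sequences; the goal is $|\hat\rho-\rho|<\delta$ except with probability at most $4e^{-\delta^2 k/2}$. The naive attempt of applying one concentration bound to the $k^2$ indicators $\mathbf{1}[(a_i,b_j)\in S]$ does not work, because these variables are far from mutually independent --- they share the $k$ columns and the $k$ rows --- so I would instead decompose $\hat\rho-\rho$ into two deviations, each of which is an average of genuinely independent bounded variables.

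First I would condition on the columns (see Fig.~\ref{pic:sinlge-smapling}). For $a\in A$ let $\mu(a):=|\{b\in B:(a,b)\in S\}|/|B|\in[0,1]$ be the conditional density of $S$ in column $a$, and set $\bar\mu:=\frac1k\sum_{i=1}^{k}\mu(a_i)$. Since the $a_i$ are i.i.d.\ uniform on $A$ and $\mathbb{E}_a[\mu(a)]=\rho$, the Hoeffding bound for the bounded i.i.d.\ variables $\mu(a_1),\dots,\mu(a_k)$ gives
\[
\prob\big[\,|\bar\mu-\rho|\ge\delta/2\,\big]\le 2e^{-2(\delta/2)^2 k}=2e^{-\delta^2 k/2}.
\]

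Next, for an arbitrary fixed choice of columns $a_1,\dots,a_k$ (so that $\bar\mu$ is a constant), I would sample the rows. For $b\in B$ put $f(b):=\frac1k\sum_{i=1}^{k}\mathbf{1}[(a_i,b)\in S]\in[0,1]$; then $\hat\rho=\frac1k\sum_{j=1}^{k}f(b_j)$, the $f(b_j)$ are i.i.d.\ (because the $b_j$ are), and $\mathbb{E}_b[f(b)]=\frac1k\sum_i\mu(a_i)=\bar\mu$. Hoeffding again yields $\prob[\,|\hat\rho-\bar\mu|\ge\delta/2\,]\le 2e^{-\delta^2 k/2}$, and since this holds for \emph{every} choice of columns it holds unconditionally. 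A union bound over the two failure events then shows that with probability at least $1-4e^{-\delta^2 k/2}$ both $|\bar\mu-\rho|<\delta/2$ and $|\hat\rho-\bar\mu|<\delta/2$, so $|\hat\rho-\rho|<\delta$ by the triangle inequality, which is exactly $\delta$-precise sampling.

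The only genuine subtlety is the dependence issue noted above; committing to the column/row split makes each stage a textbook application of Hoeffding to bounded i.i.d.\ summands, and the constants fall out exactly as stated ($4=2+2$ from the union bound, and $2e^{-2(\delta/2)^2 k}=2e^{-\delta^2 k/2}$ from a deviation of $\delta/2$ over $k$ samples). The one point worth stating carefully in the write-up is that $f$ depends on the already-fixed columns, but its law over the i.i.d.\ rows has mean $\bar\mu$ no matter what those columns are, so the second bound is uniform in the conditioning and the union bound is legitimate.
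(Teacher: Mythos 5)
Your proposal is correct and follows essentially the same route as the paper: a two-stage column-then-row decomposition, with Hoeffding applied at deviation $\delta/2$ to the column densities and then (uniformly over the fixed columns) to the row averages, followed by a union bound and the triangle inequality to get the constant $4e^{-\delta^2 k/2}$. No gaps.
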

\begin{proof}
To prove the lemma, we split the process of sampling in two steps. At first, we substitute $A$ by a sequence $a_1,\ldots,a_k$ and estimate the difference
 \begin{equation}
 \label{eq:derandom1}
 \left|\prob_{i\in\{1,\ldots,k\},b\in B}[(a_i,b)\in S]-\prob_{a\in A,b\in B}[(a,b)\in S]\right|,
\end{equation}
see Fig.~\ref{pic:sinlge-sampling}.
Then, we  fix a sequence $a_1,\ldots,a_k$, choose at random a sequence $b_1,\ldots,b_k\in B$ and estimate the difference 
\begin{equation}
 \label{eq:derandom2}
 \left|\prob_{i\in\{1,\ldots,k\},j\in\{1,\ldots,k\}}[(a_i,b_j)\in S]-\prob_{i\in\{1,\ldots,k\},b\in B}[(a_i,b)\in S]\right|.
\end{equation}
We show that both of these differences are typically small and, therefore, the difference 
\[
 \left|\prob_{i\in\{1,\ldots,k\},j\in\{1,\ldots,k\}}[(a_i,b_j)]-\prob_{a\in A,b\in B}[(a,b)\in S]\right|  
\]
is typically small as well.

\smallskip
\noindent
\emph{Claim:}
For a randomly chosen sequence $a_1,\ldots,a_k\in A$, the probability that the difference in \eqref{eq:derandom1} is larger than $\delta/2$ is less than $2e^{-\delta^2 k/2}$.
\smallskip

\begin{proof}[Proof of the claim:]
For every $a\in A$ we denote by $\varphi(a)$ the fraction of elements from $S$ in the column of the Cartesian product $A\times B$ 
corresponding to the value $a$, i.e., 
 \[
 \varphi(a) := \frac{|S\cap (\{a\}\times B )|}{|B|}.
 \]
 From the definition it follows that $0\le \varphi(a) \le 1$ for every $a\in A$, and the average value of $\varphi(a)$ (for a randomly chosen $a$) is equal to $\sigma:=\frac{|S|}{|A \times B|}$.
Since $a_1,\ldots,a_k$ are chosen in  $A$ independently, the values $\varphi(a_1),\ldots,\varphi(a_k)$ are independent and identically distributed. Hence, we can apply Hoeffding's  inequality (see \cite{hoeffding1963probability}):
\[
  \prob\left[\left|\frac{\varphi(a_1)+\ldots+\varphi(a_k)}{k}-\sigma\right|\ge\delta/2\right]\le2e^{-2(\delta/2)^2k}.
\]
This means that with a probability $\ge 1-2e^{-2(\delta/2)^2 k}$  the gap \eqref{eq:derandom1} is less than $\delta/2$.
 \end{proof}
 
Let us fix now a sequence $a_1,\ldots,a_k\in A$, and estimate the difference in \eqref{eq:derandom2} as a function of random sequence $b_1,\ldots,b_k\in B$.
We can use once again the same argument with Hoeffding's inequality as in Claim above, now for the sampling of values of the second coordinate in $\{a_1,\ldots,a_k\}\times B$.  
It follows that with a probability $\ge1-2e^{-\delta^2k/2}$ the gap \eqref{eq:derandom2} is less than $\delta/2$.
Combining the bounds for \eqref{eq:derandom1} and\eqref{eq:derandom2}, we obtain the lemma.
\end{proof}

\begin{lemma}\label{l:derandom}
Let $\pi$ be a two-party communication protocol where Alice and Bob access inputs $x$ and $y$ respectively, 
and use  private random bits $r_A\in \{0,1\}^s$ and  $r_B\in \{0,1\}^s$ respectively. For $z_1,z_2,t$ we denote
by $ p(x,y,z_1,z_2,t) $ the probability of the following event: given $x$ and $y$ as inputs, Alice and Bob apply the protocol $\pi$
and end up with an answer $z_1\in \{0,1\}^m$ for Alice, an answer $z_2\in \{0,1\}^m$ for Bob, and a communication transcript $t\in \{0,1\}^l$. The probability is taken on independent and uniformly distributed $r_A, r_B$.

 We apply to $\pi$ a random transformation as follows: we choose two random sequences 
   $a_1,\ldots,a_k$ and $b_1,\ldots,b_k$ in $\{0,1\}^s$.
 In the new communication protocol $\pi'$ Alice and Bob choose at random $i,j\in\{1,\ldots,k\}$, and then apply the original communication protocol with private strings of random bits $a_i$ and $b_j$ respectively. By construction, in the protocol $\pi'$ Alice and Bob need  $\log k$ private random bits each (they both need to choose a random index between $1$ and $k$).

Denote by $ p'(x,y,z_1,z_2,t) $ the probability analogous to $ p(x,y,z_1,z_2,t) $ computed for the protocol $\pi'$. The probability is taken on the choice of 
independent and  uniformly distributed $i,j\in \{1,\ldots,k\}$.
We claim that for all $x,y,z_1,z_2,t$ the probability of the event 
\begin{equation}\label{eq:derandom3}
|p(x,y,z_1,z_2,t)  - p'(x,y,z_1,z_2,t)  | < \delta 
\end{equation}
is greater than $1-4e^{-\delta^2 k/2}$ (here the probability is taken over the choice of $a_1,\ldots,a_k$ and $ b_1,\ldots,b_k$ in $\{0,1\}^s$
in the construction of $\pi'$).
\end{lemma}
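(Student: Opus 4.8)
The plan is to reduce the statement directly to Lemma~\ref{l:large-numbers}, by recognizing that for each fixed outcome the protocol $\pi$ partitions its seed space into a ``good'' set and its complement, and that passing from $\pi$ to $\pi'$ amounts precisely to replacing the uniform distribution on $\{0,1\}^s\times\{0,1\}^s$ by the empirical distribution on the sampled grid $\{a_1,\dots,a_k\}\times\{b_1,\dots,b_k\}$.

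First I would fix $x,y,z_1,z_2,t$ and use that, once the inputs and the private random strings $r_A,r_B$ are given, $\pi$ runs deterministically; hence the set
\[
S = S_{x,y,z_1,z_2,t} := \{\, (r_A,r_B)\in\{0,1\}^s\times\{0,1\}^s \ :\ \text{running }\pi\text{ on }(x,y,r_A,r_B)\text{ yields }z_1,z_2\text{ and transcript }t \,\}
\]
is well defined, and by definition $p(x,y,z_1,z_2,t)=|S|/2^{2s}$. In $\pi'$, Alice uses the seed $a_i$ and Bob the seed $b_j$, where $i,j$ are uniform in $\{1,\dots,k\}$ and the sequences $a_1,\dots,a_k$, $b_1,\dots,b_k$ are fixed in the description of $\pi'$; therefore the outcome of $\pi'$ under internal coins $(i,j)$ is exactly the outcome of $\pi$ on $(x,y,a_i,b_j)$, and consequently $p'(x,y,z_1,z_2,t)=\prob_{i,j}[(a_i,b_j)\in S]$.

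Then I would invoke Lemma~\ref{l:large-numbers} with $A=B=\{0,1\}^s$ and this set $S$: when $a_1,\dots,a_k$ and $b_1,\dots,b_k$ are chosen independently and uniformly, with probability at least $1-4e^{-\delta^2k/2}$ the chosen sequences provide a $\delta$-precise sampling of $S$, which is literally the inequality $\bigl|\,p(x,y,z_1,z_2,t)-p'(x,y,z_1,z_2,t)\,\bigr|<\delta$. This is the claim. There is no real obstacle; the only care needed is bookkeeping --- arranging the randomness of $\pi'$ so that over its internal coins $(i,j)$ the outcome distribution is exactly the empirical distribution of $\pi$'s outcome over the sampled grid, while the ``bad'' event is measured purely over the construction randomness $a_1,\dots,a_k,b_1,\dots,b_k$ with the tuple $x,y,z_1,z_2,t$ held fixed. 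Once this is in place, Lemma~\ref{l:large-numbers} applies verbatim.
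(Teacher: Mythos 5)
Your proof is correct and is essentially identical to the paper's: both define, for each fixed tuple $(x,y,z_1,z_2,t)$, the set $S$ of compatible seed pairs $(r_A,r_B)$, observe that $p$ and $p'$ are respectively the density of $S$ in $\{0,1\}^s\times\{0,1\}^s$ and its empirical density over the sampled grid, and then apply Lemma~\ref{l:large-numbers} with $A=B=\{0,1\}^s$. No further comment is needed.
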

\begin{proof}
We apply Lemma~\ref{l:large-numbers} with $A=B=\{0,1\}^s$ (the space of random bits of the original protocol), and $S=S(x,y,z_1,z_2,t)$ that consists of the pairs $(r_A,r_B)\in\{0,1\}^s\times \{0,1\}^s$ compatible with the given $x,y,z_1,z_2,t$. In other words, a pair $(r_A,r_B)$ belongs to $S$, if Alice and Bob given $x$ and $y$ as inputs and $r_A$ and $r_B$ as random bits, obtain with the protocol $\pi$ answers $z_1$ and $z_2$ respectively and produce a communication transcript $t$. From the lemma it follows that with the probability $1-4e^{-\delta^2 k/2}$, the choice of $a_1,\ldots,a_k$ and $b_1,\ldots,b_k$ in $\{0,1\}^s$ in the construction of $\pi'$ provides a $\delta$-precise sampling of $S$.
\end{proof}

Now we are ready to prove a version of Newman's theorem suitable for our setting.

\begin{proof}[Proof of Proposition~\ref{p:newman}] 
Due to  Lemma~\ref{l:derandom}, with  randomly chosen  samples $a_1,\ldots,a_k$ and $b_1,\ldots,b_k$ 
 with a probability $>1-4e^{-\delta^2k/2}$ we obtain a new protocol 
 $\pi'$ such that  for each $(x,y,z_1,z_2,t)$ we have  \eqref{eq:derandom3}. 
We apply this construction with a $\delta$ such that 
 
\[
\delta : = \frac{\varepsilon_2 }{ \# \text{answers }z_1 \cdot  \# \text{answers }z_2  \cdot \#  \text{transcripts }t}
\]
and a $k$ such that
\begin{equation}\label{eq:delta1}
4e^{-\delta^2 k/2} \cdot  \# \text{inputs }(x,y)  \cdot  \# \text{answers }z_1 \cdot  \# \text{answers }z_2  \cdot \#  \text{transcripts }t <1.
\end{equation}
The property \eqref{eq:delta1} implies that there are samples  $a_1,\ldots,a_k$ and $b_1,\ldots,b_k$
such that  \eqref{eq:derandom3} is true simultaneously for all $(x,y,z_1,z_2,t)$. Let us fix one such instance of sampling.

In the obtained protocol $\pi'$, for each pair of input $(x,y)$ the total probability to obtain an  invalid outcome (which is the sum over all $(z_1,z_2,t)$ that are invalid) increases in comparison with the original protocol $\pi$ by at most
\begin{equation}\label{eq:delta2}
\delta \cdot { \# \text{answers }z_1 \cdot  \# \text{answers }z_2  \cdot \#  \text{transcripts }t} < \varepsilon_2.
\end{equation}
Note that we can claim \eqref{eq:delta2} even without having an explicit description of  the sets of valid and invalid outcomes,
it is enough to have  \eqref{eq:derandom3} for each tuple of inputs and outcomes.

Since the length of the outputs and the communication complexity of $\pi$ are linear in $n$, the total number of inputs $(x,y)$, outputs $(z_1,z_2)$, and transcripts $t$ is $2^{O(n)}$. It is not hard to see that  we can chose
\[
\delta = \varepsilon_2/ 2^{O(n)}
\text{ and }
k = 2^{O(n)}/\varepsilon_2^2
\]
so that \eqref{eq:delta1} and \eqref{eq:delta2} are satisfied.
From the choice of $k$ it follows that in  the new protocol $\pi'$ Alice and Bob need only  $\log k = O(n + \log(1/\varepsilon_2))$ private random bits.

It remains to notice that an appropriate instance of sampling in Lemma~\ref{l:derandom} (suitable sequences $a_1,\ldots,a_k$ and $b_1,\ldots,b_k$)  can be found by brute-force search. 
To organize this search we do not need to know the precise definition of the validity of the outcome nor the class of admissible inputs $(x,y)$. Indeed, we can simulate the original protocol $\pi$ on \emph{all} pairs of inputs of length $n$ and find an instance of sampling such that for each $(z_1,z_2,t)$ the probability of this outcome increases by at most $\delta$.
Thus, if the original protocol was uniformly computable, so is the new one (though the computational complexity could increase dramatically).
\end{proof}

\end{document}